\algnewcommand{\algorithmicand}{\textbf{ and }}
\algnewcommand{\algorithmicor}{\textbf{ or }}
\algnewcommand{\OR}{\algorithmicor}
\algnewcommand{\AND}{\algorithmicand}
\newcommand{\defeq}{\vcentcolon=}
\newcommand{\eqdef}{=\vcentcolon}
\newcommand{\pluseq}{\mathrel{{+}{=}}}
\newcommand\undermat[2]{%
  \makebox[0pt][l]{$\smash{\underbrace{\phantom{%
    \begin{matrix}#2\end{matrix}}}_{\text{$#1$}}}$}#2}
\theoremstyle{plain}
\newtheorem{theorem}{Theorem}[section]
\newtheorem{lemma}[theorem]{Lemma}
\theoremstyle{definition}
\newtheorem{definition}[theorem]{Definition}
\newtheorem{example}[theorem]{Example}
\theoremstyle{remark}
\newtheorem{remark}[theorem]{Remark}
\numberwithin{equation}{section}
\DeclareMathOperator{\sign}{sign}
\algnewcommand{\IfThenElse}[3]{
  \State \algorithmicif\ #1\ \algorithmicthen\ #2\ \State \algorithmicelse\ #3}
\algnewcommand{\IfThen}[2]{
  \State \algorithmicif\ #1\ \algorithmicthen\ #2}
\algrenewcommand\algorithmicrequire{\textbf{Input:}}
\algrenewcommand\algorithmicensure{\textbf{Output:}}
\newcommand{\R}{\mathbb{R}}
\newcommand{\ev}{\mathbb{E}}
\DeclareMathOperator{\Var}{Var}
\DeclareMathOperator{\Corr}{Corr}
\DeclareMathOperator{\Cov}{Cov}
\newcommand{\ceil}[1]{\left\lceil #1 \right\rceil}
\newcommand{\floor}[1]{\left \lfloor #1 \right \rfloor}
\pgfplotsset{compat=1.17}
\newlength{\myimagewidth}
\begin{document}
\setlength{\myimagewidth}{\dimexpr\columnwidth/3-1em\relax}
\title{Simulation methods and error analysis for trawl processes and ambit fields}
	\author{Dan Leonte and Almut E. D. Veraart\\
	Department of Mathematics, Imperial College London}
	\maketitle
\noindent \begin{abstract}
\noindent Trawl processes are continuous-time, stationary and infinitely divisible processes which can describe a wide range of possible serial correlation patterns in data. In this paper, we introduce new simulation algorithms for trawl processes with monotonic trawl functions and establish their error bounds and convergence properties. We extensively analyse the computational complexity and practical implementation of these algorithms and discuss which one to use depending on the type of L\'evy basis. We extend the above methodology to the simulation of kernel-weighted, volatility modulated trawl processes and develop a new simulation algorithm for ambit fields. Finally, we discuss how simulation schemes previously described in the literature can be combined with our methods for decreased computational cost.\newline

\noindent \textit{Key words:} ambit fields; infinite divisibility; L\'evy bases; numerical study of stochastic processes; serial correlation; stochastic simulation; trawl processes \newline

\noindent \textit{MSC codes:} 6008; 6204; 60G10; 60G57; 60G60 
\end{abstract}
\section{Introduction}
This paper introduces new simulation algorithms for trawl processes and ambit fields and establishes their error bounds and convergence properties. Deriving efficient simulation schemes, easily adaptable implementations and understanding the corresponding theoretical and numerical errors come naturally as first steps before employing such processes to model real-world data. 

Trawl processes and ambit fields have been introduced in the context of Ambit Stochastics, which was first developed to model physical phenomena such as turbulent flow and tumour growth by \cite{barndorff2007ambit}. Since its introduction, Ambit Stochastics proved to be a powerful modelling tool in other settings, such as spatio-temporal statistics \cite{nguyen_DG_RG_comparison}, brain imaging \cite{brain_imaging} and finance \cite{barndorff2014integer}. \cite{taqqu} studied trawl processes under the name of 'upstairs representations' and used them to model workloads for network communications. Together with the theoretical development of the topic, multiple computer libraries became available, as developed by \cite{git_valentin,R_package_almut_integer_trawl,Mathlab_gituhb_repo,git_nguyen_super_ou,git_nguyen_mixed_super_ou} and \cite{emil_github}. Apart from the extensive analysis of existing and new simulation schemes, we release a unified Python library at \cite{Leonte_Ambit_Stochastics_2022}, which contains efficient implementations of the discussed algorithms.

We start our analysis in the temporal setting with the class of trawl processes, then expand to the spatio-temporal case of ambit fields. Trawl processes are stationary and infinitely divisible stochastic processes which heavily rely on the following two concepts: the trawl set $A_t$, i.e.~the region which influences the value of the trawl process $X$ at time $t,$ and the L\'evy basis $L$, a type of random measure which extends the concept of noise from Gaussian and Poisson random measures to a general infinitely divisible setting. The trawl process $X$ is then defined as the L\'evy basis evaluated over the region of interest $X_t = L(A_t).$
This framework enforces the modelling belief that the value of the trawl process at time  $t$ is only influenced by a subset of the whole system, represented here by the trawl set $A_t,$ and not by the entire system. In many settings, the choice of the trawl set is inspired by the physical knowledge of the phenomenon to be modelled. A great advantage of trawl processes is the flexibility of the autocorrelation structure and of the marginal distribution, which can be chosen independently. Indeed, the areas of the overlaps $A_t \cap A_s$ determine the correlations $\Corr\left(X_t,X_s\right)$ and the L\'evy basis determines the marginal law of $X,$ which can be any infinitely divisible distribution.  This allows for the modelling of data displaying stylized facts, such as non-Gaussianity or heavy tails and offers a concrete and tractable alternative to modelling via stochastic partial differential equations (SPDEs), whose solutions can even be difficult to approximate numerically. A natural extension of trawl processes to spatio-temporal fields is given by random fields $Y$ defined as $Y(t,x) = Y_t(\mathbf{x}) = L\left(A_t(\mathbf{x})\right),$
where the set $A_t(\mathbf{x}) \subset \R^{d}$ now depends on both time and $d-1$ spatial coordinates. We call this a simple field. Simulation methods previously described in the literature and which are applicable to trawl processes include simulation via grid discretization in \cite{jonsdottir2008levy}, via compound Poisson processes in the case of integer-valued trawls in \cite{barndorff2014integer} and by using a slice partition in \cite[Chapter 4.4]{noven_phd_thesis}, \cite[Chapter 8.6]{ambit_book}. We generalize these methods, derive their theoretical errors and computational complexities and discuss practical implementation details. Further, we expand on the slice partition method, develop a novel simulation algorithm for simple ambit fields and explain how the calculations required for higher accuracy can be performed ahead of the simulations, amortising the computational time across simulations. This allows for the implementation of high-accuracy simulation studies and simulation-based inference.

Recent empirical work in areas such as environmental sciences in \cite{huang2011class} and energy pricing in \cite{benth_electricity,9966b470f5bd4eb5923297a0a31afb74} suggests the presence of volatility clusters, and hence of a stochastic volatility, which can be easily incorporated into the Ambit Stochastics framework. Indeed, we consider the volatility modulated, kernel-weighted trawl processes $X_t = \int_{A_t} K_t\left(\bar{t},\bar{\mathbf{x}}\right) \sigma(\bar{t})  \mathrm{d}L\left(\bar{t},\bar{\mathbf{x}}\right),$ and their spatio-temporal analogue, ambit fields $Y_t(\mathbf{x}) = \int_{A_t(\mathbf{x})} K_{t,\mathbf{x}}\left(\bar{t},\bar{\mathbf{x}}\right) \sigma\left(\bar{t},\bar{\mathbf{x}}\right)  \mathrm{d}L(\bar{t},\bar{\mathbf{x}}),$ where the deterministic kernel $K$ multiplied by the stochastic volatility $\sigma$ is integrated against the L\'evy basis $L.$ The integration is understood in the sense of \cite[Theorem 2.7]{rajput1989spectral} for deterministic $\sigma$ and in the sense of \cite{walsh,bichteler1983random} and \cite{kluppelberg} for stochastic $\sigma$. This general formulation introduces a kernel and a stochastic volatility with respect to the basis model of trawl processes and simple ambit fields and offers a complex framework which can be used to explicitly construct random fields with certain statistical properties, such as symmetry in space and time  \cite{barndorff2015intermittent}. We improve on the grid methods previously used in the literature by \cite{nguyen_DG_RG_comparison,emil_github} and show that despite the added terms, the trawl process simulation algorithms can be directly applied for the efficient simulation of kernel-weighted, volatility modulated trawl processes and ambit fields.
\subsection{Contributions of the paper}
We expand on the grid discretization algorithm and derive two new schemes for the simulation of trawl processes: the compound Poisson and the slice partition methods. Out of these schemes, only the slice partition gives exact simulation and accommodates any monotonic trawl shape and any infinitely divisible distribution described via the L\'evy basis, requiring only access to samples from the marginal distribution of the L\'evy basis. We derive the error bounds and convergence properties of the inexact algorithms and discuss the computational complexity of each algorithm.
We extend the slice partition method from trawl processes and develop a novel simulation algorithm for simple ambit fields. In general, this does not lead to exact simulation. However, it has the advantage that the calculations required for higher accuracy only need to be performed once, before the simulation, leading to amortized computational cost across simulations, as opposed to the grid and compound Poisson methods, which in general require an increased  cost per simulation; we also discuss in which situations the compound Poisson method might be preferable to the slice partition method. This allows for the practical implementation of high accuracy simulation studies. One direct application is parameter inference, where we simulate trawl processes or simple ambit fields and attempt to infer the shape of the ambit set and the parameters of the L\'evy seed. Further, in the context of simulation-based inference, bootstrap methods provide confidence intervals for the inferred parameter, in the settings of maximum likelihood or generalized method of moments estimation. Such studies have already been performed for integer-valued trawls and spatio-temporal Ornstein-Uhlenbeck processes in \cite{barndorff2014integer,nguyen_DG_RG_comparison}. 

Finally, motivated by the high computational complexity and relative inefficiency of grid methods, as discussed in Subsection \ref{subsection:summary_and_sampler}, we show how the compound Poisson and slice partition methods can be generalized to the simulation of volatility modulated, kernel-weighted trawl processes and ambit fields. We release a Python library containing the simulation algorithms discussed in the paper, see \cite{Leonte_Ambit_Stochastics_2022}.
\subsection{Structure of the paper}
Section \ref{section:The Trawl process framework}  defines the notion of L\'evy bases, which can be viewed as non-Gaussian extensions of Gaussian white noise, and settles the notation and theoretical framework. In particular, we discuss an extension of the L\'evy-Khintchine theorem from L\'evy processes and give formulae for the cumulant, autocorrelation structure and marginal distribution of the trawl process. Section \ref{section:simulation} presents and compares the three simulation algorithms for trawl processes, employing grid discretizations, compound Poisson processes and slice partitions and analyses the convergence properties and computational complexity, first in the case of bounded trawl sets, and then in the unbounded case. Sections \ref{section:extensions_to_VMKWTP} and \ref{section:extensions_to_ambit_field_simulation} extend the above methodologies to kernel-weighted, volatility modulated trawl processes and ambit fields. In particular, Subsection \ref{subsection:simple_ambit_fields} further derives a new simulation scheme for simple ambit fields via Monte Carlo methods. Proofs that have been omitted from the main body can be found in Section \ref{section:proofs}. Background material and a discussion of efficient implementations of discussed algorithms can be found in the Appendix. 
\section{Trawl processes and their properties}
\label{section:The Trawl process framework}
We first introduce the notation and preliminaries needed in this section. For a set $S \subset  \R^d,$ let $\mathcal{B}_\text{Leb}(S)$ denote the collection of Borel measurable sets of finite Lebesgue measure which are contained in $S$. We view $S$ as a subset of space-time, where the first coordinate gives the time component and the last $d-1$ coordinates give the spatial component. We say that the measure $l$ is finite if $l(\R) < \infty$ and infinite otherwise. By a L\'evy measure $l$ on $\R$ we mean a (possibly infinite) Borel measure with $l({0}) =0$ and $\int_{\R} \min{(1,y^2)} l(\mathrm{d}y) < \infty.$ Finally, for a random variable $X$ we define the cumulant (log-characteristic) transform $C(\theta,X) = \log \left(\ev\left[e^{i \theta X}\right]\right)$ \citep[cf.][p. 33]{ken1999levy} and write $X\stackrel{d}{=}Y$ if $X$ and $Y$ have the same law. 

We formally define L\'evy bases and present some of their theoretical properties. We define the trawl process $X_t = L(A_t)$ as the L\'evy basis evaluated over a collection of sets of interest $A_t$ and discuss its marginal distribution and autocorrelation structure.
\subsection{L\'evy bases}
\begin{definition}[L\'evy basis]
\label{def:levy_basis}
A L\'evy basis $L$ on $S$ is a collection $\left\{L(A): A \in \mathcal{B}_\text{Leb}(S)\right\}$ of infinitely-divisible, real-valued random variables such that for any sequence $A_1,A_2,\ldots$ of disjoint sets in $\mathcal{B}_\text{Leb}(S),$ the random variables $L(A_1),L(A_2),\ldots$ are independent and further, if  $\cup_{j=1}^{\infty} A_{j} \in \mathcal{B}_\text{Leb}(S)$, then $L\left(\cup_{j=1}^{\infty} A_{j}\right)=\sum_{j=1}^{\infty} L\left(A_{j}\right)$ a.s.
\end{definition}
In the following, we assume that the L\'evy bases $L$ is homogeneous; a thorough discussion of this property can be found in Chapter 5.1 of \cite{ambit_book}.
\begin{definition}[Homogeneous L\'evy basis] A L\'evy basis $L$ on $S$ is homogeneous if there exist $\xi \in \R$, $a \in \R_{\ge 0}$ and a L\'evy measure $l$ on $\R$ such that for any $A \in \mathcal{B}_\text{Leb}(S),$ the following holds
\begin{equation*} 
  C(\theta,L(A))
  =  \left( i \theta \zeta-\frac{1}{2} \theta^{2} a+\int_{\mathbb{R}}\left(e^{i \theta y}-1-i \theta y \mathbf {1}_{[-1,1]}(y)\right) l(\mathrm{d} y) \right) \mathrm{Leb}(A).
\end{equation*}   \label{def:simplified_cumulant}
\end{definition}
Another important concept is that of the L\'evy seed, see e.g.~in \cite{barndorff2014integer}.
\begin{definition}[L\'evy seed]\label{def:levy_seed}
A random variable $L^{'}$ is called a L\'evy seed of the L\'evy basis $L$ if 
\begin{equation*}
    C(\theta,L^{'}) = i \theta \zeta-\frac{1}{2} \theta^{2} a+\int_{\mathbb{R}}\left(e^{i \theta y}-1-i \theta y \mathbf {1}_{[-1,1]}(y)\right) l(\mathrm{d} y).
\end{equation*} 
\end{definition}
Then
\begin{equation}
    C(\theta,L(A)) = \mathrm{Leb}(A) C(\theta,L^{'}), \label{eq:relate_cumul_L_with_cumul_levy_seed}
\end{equation}
and the distribution of the L\'evy seed determines the distribution of the L\'evy basis. Note that if $L^{'} \stackrel{d}= L^{''},$ then $L^{''}$ is also a L\'evy seed. Further, $L^{'}$ is infinitely divisible and to each L\'evy basis $L$ we can associate the L\'evy-Khintchine triplet $(\xi,\,a,\,l)$ of $L^{'}$, which fully characterises the distributional properties of $L$. In the above triplet, $\xi$ denotes the drift term, $a$ the variance of the Gaussian component and $l$ the L\'evy measure of the jump part \citep[cf.][p. 37]{ken1999levy}. Differentiating \eqref{eq:relate_cumul_L_with_cumul_levy_seed} once, respectively twice with respect to $\theta$ gives 
\begin{align}
    \ev \left[L(A) \right]    &=  \mathrm{Leb}(A) \,  \ev\left[L^{'}\right],     \label{eq:mean_L(A)}  \\
    \Var\left(L(A)\right)     &=  \mathrm{Leb}(A) \, \Var\left(L^{'}\right), \label{eq:var_L(A)}
\end{align}
and taking higher derivatives gives the relation between the cumulants of $L(A)$ and these of $L^{'}$.

Finally, to construct a trawl process, we need to choose the trawl sets. In the following, we restrict our attention to trawl processes with monotonic trawl functions, i.e.~when the trawl sets are of the form
\begin{equation*}
    A_t = A + (t,0), \qquad A = \{(s,x) \in \R^2 \colon  s < 0, 0 <  x < \phi(s) \},
\end{equation*}
where $\phi \colon (-\infty,0] \to \mathbb{R}_{\ge 0}$ is a 
continuous increasing 
function. 
Define the trawl process $X = \left(X\right)_{t \ge 0}$ by the L\'evy basis evaluated over the trawl set $X_t = L(A_t).$  
We note that, while the trawl process $X$ is defined to take values in $\R,$ the trawl set is chosen as a subset of $\R^2,$ i.e.~it includes an abstract spatial dimension in addition to the temporal dimension. Further, the trawl set is non-anticipative, in the sense that $A_t$ does not contain any points $(s,x)$ with $s>t.$ If there is some $T<0$ such that $\phi(T)=0,$ then $A$ is compactly supported and we say that the trawl is bounded. Otherwise, we say the trawl is unbounded. Generalizations are straightforward for $A \subset \R^{d}$ with $d > 2.$
\subsection{Marginal distribution}
\label{subsection_marginal_distr}
As seen in Definition \ref{def:levy_basis}, the only restriction on the marginal distribution of a trawl process is that it has to be infinitely divisible. This provides a rich class of stochastic processes supported on the integers, on the real line and on the positive or negative real line, with short or long memory and light or heavy tails. Some examples include the following processes.

\textbf{Integer-valued trawl processes}
\begin{example}[Poisson L\'evy basis]
Let $L'\sim \text{Poisson}(\nu)$ for some intensity $\nu >0.$ Then $X_t = L(A_t) \sim \text{Poisson}(\nu\mathrm{Leb}\left(A\right)).$
\end{example}
\begin{example}[Skellam L\'evy basis]
Let $L'\sim \text{Skellam}(\mu_1,\mu_2),$ i.e.~$L^{'}\sim N_1 - N_2$ with $N_1,N_2$ independent and Poisson distributed with intensities $\mu_1,\mu_2 >0.$  Then $X_t = L(A_t) \sim \text{Skellam}(\mu_1 \mathrm{Leb}\left(A\right),\mu_2 \mathrm{Leb}\left(A\right)).$
\end{example}
\textbf{Real valued trawl processes}
\begin{example}[Gaussian L\'evy basis]
Let $L^{'}\sim \mathcal{N}(\mu,\sigma^2)$ be Gaussian distributed with mean $\mu$ and variance $\sigma^2.$ Then $X_t = L(A_t) \sim \mathcal{N}\left(\mu \mathrm{Leb}\left(A\right), \sigma^2 \mathrm{Leb}\left(A\right)\right).$
\end{example}
\begin{example}[Cauchy L\'evy basis]
Let $L^{'} \sim \textrm{Cauchy}(\gamma)$ with scale parameter $\gamma >0.$ Then $X_t = L(A_t) \sim \textrm{Cauchy}(\gamma \mathrm{Leb}\left(A\right)).$
\end{example}
\textbf{Positive real valued trawl processes}
\begin{example}[Gamma L\'evy basis]
Let $L^{'} \sim \text{Gamma}(k,\theta)$ with shape and rate parameters $k,\theta >0$ and pdf
 $   p(x) = \frac{1}{\theta^k\Gamma(k)} x^{k-1} e^{- x/\theta}, \ x>0.$
Then $X_t = L(A_t) \sim \text{Gamma}(k \mathrm{Leb}\left(A\right),\theta).$
\end{example}
\begin{example}[Inverse Gaussian L\'evy basis]
Let $L'\sim \text{IG}(\delta,\gamma),$ with parameters $\delta,\gamma > 0$ and pdf 
 $p(x) = \frac{\delta}{\sqrt{2 \pi x^3}}e^{\delta \gamma - \frac{1}{2}\left(\frac{\delta^2}{x} +\gamma^2 x \right)}, \ x>0.$
Then $X_t = L(A_t) \sim \text{IG}(\delta \mathrm{Leb}\left(A\right), \gamma).$
\end{example}
A more general example is given by the class of trawl processes with stable distributions.
\begin{example}[Stable L\'evy basis]\label{example:stable}
Let $L^{'}\sim \mathrm{Stable}(\alpha,\, \beta,\, c,\, \mu)$ have a stable distribution with stability and skewness parameters $\alpha \in (0,2],\,\beta \in [-1,1]$ and location, scale parameters $\mu \in \R$, $c >0$, defined through the cumulant function $ C(\theta,L^{'}) = i \theta \mu + \abs{c\theta}^\alpha \left(1 - i \beta \sign{(\theta)} \Phi\right)$, where $\Phi = \tan{\frac{\pi \alpha}{2}}$ if $\alpha \neq 1$ and $-\frac{2}{\pi} \log{\abs{\theta}}$ if $\alpha =1$.
Then $X_t = L(A_t) \sim \textrm{Stable}(\alpha,\, \beta,\, c \textrm{Leb(A)}^{1/\alpha},\, \mu\mathrm{Leb(A)})$. The support of $L^{'}$ is $[\mu,\infty)$ if $\alpha <1,\,\beta=1$, $(-\infty,\mu]$ if $\alpha<1,\,\beta=-1$ and $\R$ otherwise.
\end{example}
L\'evy bases can thus be seen as a generalization of the Gaussian white noise process to a class of random measures with flexible marginal distributions.
\subsection{Covariance and Correlation structure}
\label{subsection: autocov}
We saw previously that the distribution of $L^{'},$ together with the Lebesgue measure of the trawl set, determines the marginal distribution of the trawl process $X_t.$ Similarly, the shape of the trawl set $A,$ specified here by the trawl function $\phi,$ determines the autocorrelation structure of $X_t.$ Indeed, note that $A_t \backslash A_{t+h},\, A_{t+h} \backslash A_t$ and $A_t \cap A_{t+h}$ are disjoint, hence the random variables
$L\left(A_t \backslash A_{t+h}\right),\,L\left(A_{t+h} \backslash A_t\right)$ and $L\left(A_t \cap A_{t+h}\right)$ are independent. By using this decomposition and by \eqref{eq:mean_L(A)} and \eqref{eq:var_L(A)}, we obtain that $\Cov(X_t,X_{t+h}) = \mathrm{Leb}\left(A\cap A_h\right) \, \Var(L^{'}) = \int_{-h}^0 \phi(s) \mathrm{d}s \Var\left(L^{'}\right) $ and further that
    \begin{equation}
    \rho(h)  \defeq \Corr(X_t,X_{t+h}) = \frac{\mathrm{Leb}\left(A \cap A_h\right)}{\mathrm{Leb}\left(A\right)} = \frac{\int_{-h}^0 \phi(s) \mathrm{d}s}{\int_{-\infty}^0 \phi(s) \mathrm{d}s}. \label{eq:autocor_intro}
\end{equation}
Thus we have a representation of the autocorrelation function $\rho$ solely in terms of the trawl function $\phi$. In particular, the trawl process can realize any positive, strictly decreasing autocorrelation function. Figure \ref{fig:trawl_process_simulation} displays some realisations of the trawl process, with short and long memory, light and heavy tails, simulated by the algorithm from Section \ref{subsection:slice_partition_algorithm}.
\begin{figure}[ht]
    \centering 
\begin{minipage}[t]{.33\textwidth}
\begin{subfigure}{\textwidth}
  \includegraphics[width=\linewidth]{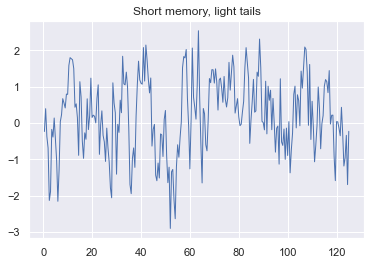}
  \caption{}
  \label{fig:short memory, light tails}
\end{subfigure}\hfil 
\begin{subfigure}{\textwidth}
  \includegraphics[width=\linewidth]{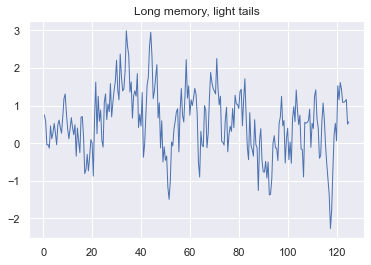}
  \caption{}
  \label{fig:long memory, light tails}
\end{subfigure}
\end{minipage}\hfil
\begin{minipage}[t]{.33\textwidth}
\begin{subfigure}{\textwidth}
  \includegraphics[width=\linewidth]{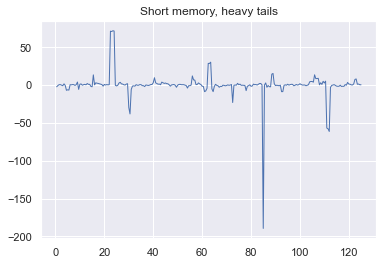}
  \caption{}
  \label{fig:short memory, heavy tails}
\end{subfigure}\hfil 
\begin{subfigure}{\textwidth}
  \includegraphics[width=\linewidth]{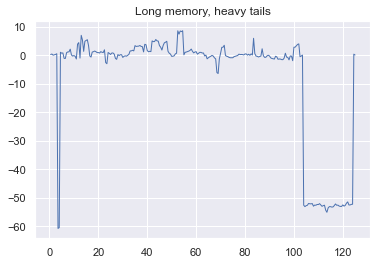}
  \caption{}
  \label{fig:long memory, heavy tails}
\end{subfigure}
\end{minipage}
\begin{minipage}[t]{.33\textwidth}
\begin{subfigure}{\textwidth}
  \includegraphics[width=\linewidth]{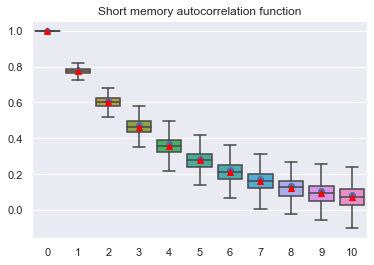}
  \caption{}
  \label{fig:short_memory_acf}
\end{subfigure}\hfil 
\begin{subfigure}{\textwidth}
  \includegraphics[width=\linewidth]{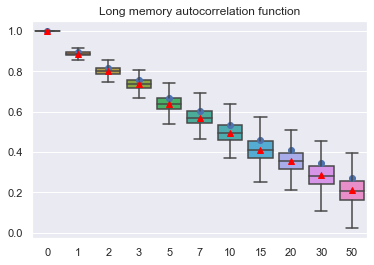}
  \caption{}
  \label{fig:long_memory_acf}
\end{subfigure}
\end{minipage}
        \caption{a-d) Realisations of the trawl process at times $\tau,\ldots,250 \tau$ with $\tau = 0.5,$ obtained by varying the trawl function and L\'evy seed. For short and long memory we set $\phi_{\textrm{sm}}$, $\phi_{\textrm{lm}} \colon (-\infty,0] \to \R_{\ge 0}$ given by $\phi_{\textrm{sm}}(t)= e^{t}$ and $\phi(t)_{\textrm{lm}} = 0.5(1-t)^{-1.5}$, which result in trawl sets of Lebesgue measure $1$. The corresponding autocorrelation functions $\rho_{\textrm{sm}}$, $\rho_{\textrm{lm}} \colon [0,\infty) \to \R_{\ge 0}$ are given by $\rho_{\textrm{sm}}(h) = e^{-h}$ for Figures \ref{fig:short memory, light tails} and \ref{fig:short memory, heavy tails}, and $\rho_{\textrm{lm}}(h) = (1+h)^{-0.5}$ for Figures \ref{fig:long memory, light tails} and \ref{fig:long memory, heavy tails}. For light and heavy tails, we set $L^{'}_{\textrm{lt}}\sim \mathcal{N}(0,1)$, respectively $L^{'}_{\textrm{ht}}\sim \textrm{Cauchy}(1)$. Since the Lebesgue measure of both trawl sets is $1$, the corresponding marginal distribution is $\mathcal{N}(0,1)$ in Figures \ref{fig:short memory, light tails} and \ref{fig:long memory, light tails} and $\textrm{Cauchy}(1)$ in Figures \ref{fig:short memory, heavy tails} and \ref{fig:long memory, heavy tails}. e-f) Theoretical and empirical autocorrelation functions of the trawl process, with trawl function $\phi_{\textrm{sm}}$ in e) and $\phi_{\textrm{lm}}$ in f). The boxplots describe the distribution of empirical autocorrelation functions at different lags, based on $500$ simulations, in which we simulate the trawl process at times $\tau,\ldots,1000\tau$ in e) and  $\tau,\ldots,5000\tau$ in f), with $\tau = 0.5.$ The blue circle and red triangle show the true, respectively the mean of the empirical autocorrelation functions. Note that the convergence of the empirical autocorrelation function, as a function of the number of simulated trawls, is much faster in the short memory case, where the two markers are superimposed.}
    \label{fig:trawl_process_simulation}
\end{figure}

\section{Simulation algorithms for trawl processes and their convergence properties}
\label{section:simulation}
In this section, we present and compare three simulation algorithms for trawl processes: the grid discretization, the simulation via compound Poisson processes and the slice partition. Grid methods were previously considered in \cite{jonsdottir2008levy} and \cite{nguyen_DG_RG_comparison} for ambit field simulation, whereas simulation via compound Poisson processes was employed in \cite{barndorff2014integer} for the simulation of integer-valued trawls. The main disadvantage of these methods is that they are exact only for certain trawl shapes and marginal distributions of $L^{'};$ in general, the computational time increases as a function of the required accuracy. Based on the slice partition ideas from \citep[Chapter 8.6]{ambit_book} and \citep[Chapter 4.4]{noven_phd_thesis}, we describe the slice partition algorithm, which provides an efficient alternative for the exact simulation of monotonic trawls. We establish the convergence properties of these algorithms: in probability, in Skorokhod's topology and uniformly, providing MSE bounds on the theoretical error. In all three algorithms, we require samplers either from the law of $L(A)$ for sets $A$ of various Lebesgue measures or from the L\'evy measure $l$. We postpone discussing this technical but crucial requirement to Subsection \ref{subsection:summary_and_sampler}, when the presentation of the simulations schemes has finished and the need for such samplers is clear.

For ease of presentation, assume there is only one spatial component, i.e.~$d=2$. Consider a homogeneous L\'evy basis $L$ on $S= \R^2,$ with triplet $(\xi,\, a,\, l),$ where $\xi \in \R, a \in \R_{\ge 0}$ are constants and where $l$ is a L\'evy measure on $\R.$ Let the trawl set $A$ of finite Lebesgue measure be given by
\begin{equation*}
    A_t = A + (t,0), \qquad A = \{(s,x) \in \R^2 \colon  s < 0 ,\, 0 < x < \phi(s) \},
\end{equation*}
for some smooth, increasing function $\phi \colon \, (-\infty,0] \to \mathbb{R}_{\ge 0}$. We aim to simulate the trawl process $X_t= L(A_t) = L(A + (t,0))$ at equidistant times $\tau, \ldots,k \tau$. In the following, we simulate both the Gaussian and jump parts, but note that it is enough to simulate the jump part. Indeed, the covariance matrix $\Sigma$ of $L_g(A_\tau),\ldots,L_g(A_{k \tau})$ is given by the areas of the overlaps of the translated trawl sets. Thus the Gaussian part can be simulated by $Hx,$ where $H$ comes from the Cholesky decomposition of $\Sigma,$ i.e.~$\Sigma = H H^t,$ and $x$ is a vector sampled from the standard $k$ dimensional multivariate Gaussian. In general, the Cholesky factorisation has complexity $\mathcal{O}\left(k^3\right)$ and the matrix-vector multiplication $\mathcal{O}\left(k^2\right).$ 
\subsection{Algorithm I: grid discretization}
\label{subsection:grid_algo}
Assume that $A$ is bounded, i.e.~there exists $T<0$ such that $\phi(T)=0$ and $A = \{(t,x) \in  \R^2  : T < t \le 0, 0 < x < \phi(t)\};$ we later relax this assumption. Assume that $T+\tau<0;$ otherwise, the sets $A_\tau,\ldots,A_{k\tau}$ are disjoint and we can simulate the trawl process by drawing $k$ iid samples from the law of $L(A).$ In this algorithm, we discretize the rectangle $[T+\tau,k\tau] \times [0,\phi(0)]$ into a grid of cells, simulate the L\'evy basis over these cells and approximate $L(A)$ by $\sum{L(c)},$ where the sum is over cells $c$ which are fully contained in $A.$ 

Let the discretization step-sizes on the time, respectively space axes be $\Delta_t = \tau/N_t$, $\Delta_x = \phi(0) / N_x$ for some positive integers $N_t$,$N_x$. If $T/\Delta_t$ is not an integer, replace $T$ with $\floor{T/\Delta_t}\Delta_t$, where $\floor{\cdot}$ is the floor function and let $N = -T/\Delta_t$; this choice excludes boundary effects and ensures that all the cells we consider in this algorithm have equal area. In total, there are $\left((k-1)N_t + N\right)N_x$ cells contained in the grid on $[T+\tau,k\tau] \times [0,\phi(0)]$. Let $Y$ be an $N_x \times \left((k-1)N_t+N\right)$ random matrix with iid entries $Y_{ij}\stackrel{d}{=}L(c),$ corresponding to the L\'evy basis simulated over all the grid cells. Practical experiments show that for small values of $\Delta_t$ and $\Delta_x$, it is not feasible to hold a realisation of $Y$ in memory. Nevertheless, note that to simulate $L(A_{t}),$ it is enough to hold the samples $L(c)$ for cells $c$ contained in $[t+T ,t] \times [0,\phi(0)]$ in memory. To this end, for each trawl set $A_{l\tau,}$ define the corresponding set of $N\cdot N_x$ grid cells on $[l\tau+T,l\tau]\times [0,\phi(0)]$ by $G_l =\{g^l_{ij} : 1 \le i \le N_x, 1 \le j \le N \}$(see Figure \ref{fig:included_cells}), where each cell is of the form $g^l_{ij} = [l\tau +T+ (j-1) \Delta_t,\, l\tau + T+ j \Delta_t] \times [(i-1)\Delta_x,\,i \Delta_x]$. Define $I$ to be the $N_x\times N$ indicator matrix with entries $I_{ij}=1$ if $g^l_{ij} \in A_{l\tau}$ and $0$ otherwise and note that $I$ does not depend on the chosen trawl set $A_{l\tau}.$ Finally, let $Y_l = Y[:,\,(l-1)N_t+1:(l-1)N_t+N]$ be the $N_x\times N$ random matrix obtained by subsetting only the columns of $Y$ corresponding to cells in $G_l;$ the grid approximation of $L(A_{l\tau})$ is then given by $Y_l \odot I,$ i.e.~the sum of the entries of the component-wise product of matrices $Y_l$ and $I.$ Iteratively, at step $l+1,$ we can compute $Y_{l+1}$ from $Y_l$ by discarding $\{L(c)\}_{g \in G_l \backslash G_{l+1} }$ and adding new samples $\{L(c)\}_{g \in G_{l+1} \backslash G_l}$(see Figure \ref{fig:grid_update}). This corresponds to removing the first $N_t$ columns from the left of $Y_{l},$ adding $N_t$ new sampled columns to the right of $Y_{l}$ and approximating $L\left(A_{(l+1)\tau}\right)$ by $Y_{l+1} \odot I.$ The full procedure is given in Algorithm \ref{algo:grid_discretisation}.

 \begin{algorithm}
 \caption{Grid discretization}
 \label{algo:grid_discretisation}
  \begin{algorithmic}[1]
      \Require Trawl function $\phi$ and $T<0$ such that $\phi(T)=0;$ number of steps $N_t$,$N_x$ on the time and space axes; sampler $S(\textrm{area},m,n)$ which returns an $m\times n$ array of iid samples with the same law as $L(A),$ where $\mathrm{Leb}(A) = \textrm{area};$ number of trawls to be simulated $k$ and distance $\tau$ between them.
    \Ensure Vector $X$ containing the simulated values of the trawl process at times $\tau,\ldots,k\tau.$
      \Function{ComputeIndicator}{$\phi,\,T,\,N,\,N_x,\,\Delta_t,\,\Delta_x$} \Comment{Helper function}
        \State $I \gets \text{zeros}(N_x,N)$ 
        \For{$i = 1,\ldots,N$}
            \For{$j=1,\ldots,N_x$}
                \IfThen {$i\Delta_x \le \phi(T + (j-1)\Delta_t)$}{$I_{ij} \gets 1$} \Comment{Check if $g^l_{ij}$ is contained in $A_{l\tau}$ (see Remark \ref{remark_cell_and_lebesgue_decomp})}

            \EndFor
        \EndFor
        \State \Return I
      \EndFunction
      \Function{main}{$\phi,\,T,\,N_t,\,N_x,\,S,\,k,\,\tau$}
      \State $T \gets \floor{T/\Delta_t} \Delta_t$ \Comment{Ensures all cells have the same area and excludes boundary effects}
      \State $\Delta_t \gets \tau / N_t,\, \Delta_x \gets \phi(0) / N_x,\, N \gets - T/ \Delta_t$
      \State $X \gets \textrm{zeros}(k)$
      \State $I \gets \Call{ComputeIndicator}{\phi,\,T,\,N,\,N_x,\,\Delta_t,\,\Delta_x }$
      \State $Y_1 \sim S(\Delta_t\Delta_x,\,N,\,N_x) $ \Comment{Sample $L(c)$ for $c \in G_1$}
      \State $X[1] \gets Y_1 \odot I $
      \For{$2 = 1,\ldots,k$}
        \State $Y_l \gets \textrm{zeros}(N,N_x)$ \Comment{Update step (see Figure \ref{fig:grid_update})}
        \State $Y_l[:,1:N-N_t] \gets Y_{l-1}[:,N_t+1:N]$ \Comment{Keep $L(c)$ for $c \in G_{l+1}\cap G_l$}
        \State $Y_l[:,N-N_t+1:N] \sim S(\Delta_t\Delta_x,\,N_x,\,N_t)$ \Comment{Sample $L(c)$ for $c \in G_{l+1} \backslash G_l$}
        \State $X[l] \gets Y_l \odot I$
      \EndFor
        \Return X
      \EndFunction
    \end{algorithmic}
\end{algorithm}
\begin{remark}
\label{remark_cell_and_lebesgue_decomp}
Algorithm \ref{algo:grid_discretisation} requires checking if a cell $c$ is fully contained in a monotonic trawl set $A_t$ (step $5$) and a sampler for $L(c)$ (steps $12,17$). For the first requirement, note that a cell $c$ is fully contained in a monotonic trawl set $A_t$ iff the `top-left' corner $(s,x)$ of the cell is in $A_t$. The cells can then be represented by their 'top-left' corner and the inclusion condition $c \subset A_t$ is equivalent to $ x \le \phi(s-t),$ which can be easily checked on a computer. The second requirement is discussed extensively in Subsection \ref{subsection:summary_and_sampler}. 
\end{remark}
\begin{figure}
    \centering
    \begin{subfigure}[b]{0.47\textwidth}
        \resizebox{\linewidth}{!}{
        \includegraphics{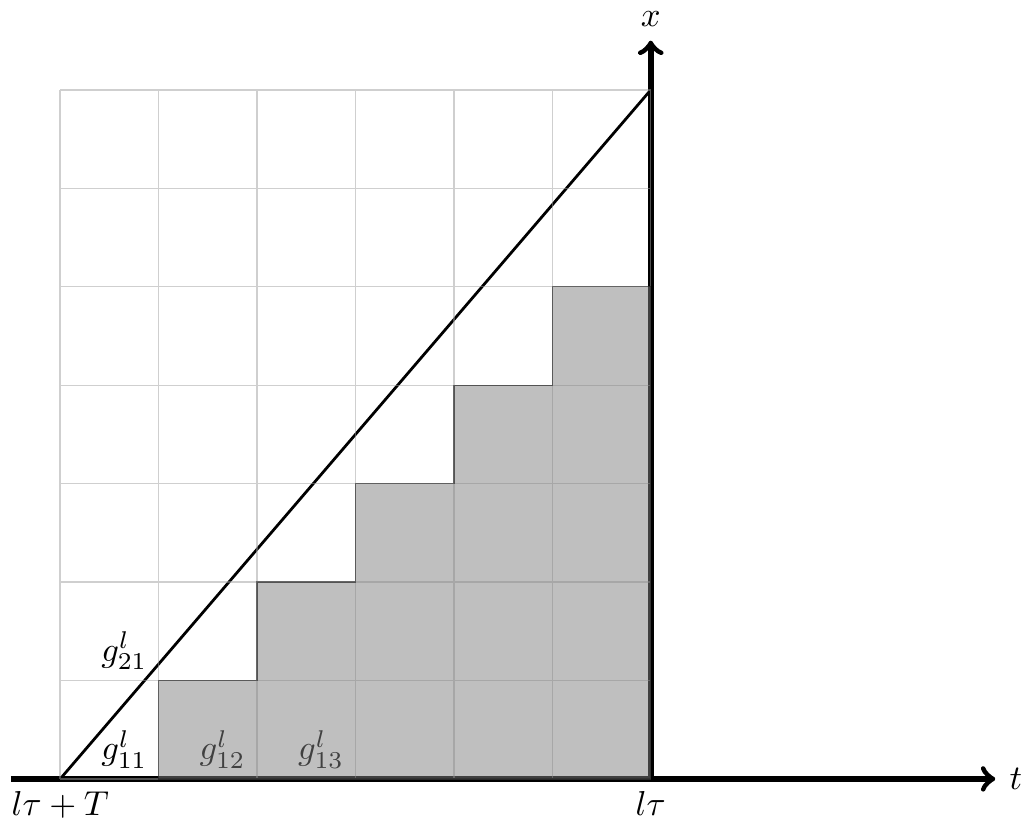}
        }
        \caption{Illustration of the grid discretization algorithm: the L\'evy basis evaluated of the ambit set $A_{l\tau}$ is approximated by $\sum{L(c)},$ where the sum is taken over cells $c$ which are fully contained in $A_{l\tau}.$ These cells are shaded in gray. \\
        }
        \label{fig:included_cells}
    \end{subfigure}
    \hspace{0.5cm}
    \begin{subfigure}[b]{0.47\textwidth}
        \resizebox{\linewidth}{!}{
        \includegraphics{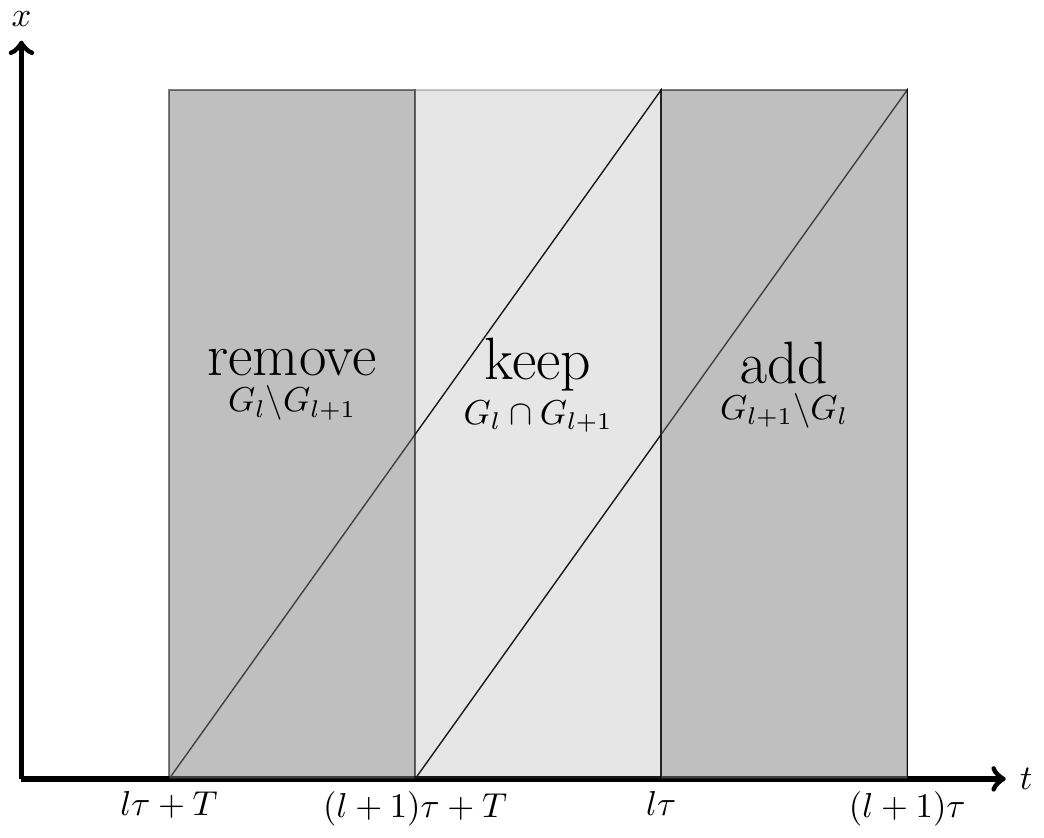}
        }

        \caption{Illustration of the grid update procedure at step $l$: we remove from memory the values $L(c)$ corresponding to cells whose 'upper-left' corners have time coordinates less than $(l+1) \tau+T$ and simulate $L(c)$ over cells whose 'upper-left' corners have time coordinates in the interval $[l\tau,(l+1)\tau].$}        \label{fig:grid_update}
    \end{subfigure}
    \caption{}
\end{figure}
Even if $A$ is unbounded, we can still apply Algorithm \ref{algo:grid_discretisation} by truncating and approximating $L(A)$ by $L(A \cap \{t>T\}).$ In this case, the algorithm has two sources of error: firstly, taking a grid discretization and considering a cell as part of the trawl if its 'top-left' corner is in the trawl set, and secondly, neglecting $L\left(A \cap \{t<T\}\right).$ We now derive the convergence properties of the grid discretization algorithm in the setting of both infill and increasing domain asymptotics; further, if the L\'evy seed $L^{'}$ has finite variance, we provide MSE bounds on the error. 
\begin{theorem}
\label{grid_discretisation_thm}
Let $A = \{(s,x) \in \R^2 \colon  s < 0,\, 0 < x < \phi(s) \}$ be a monotonic trawl set of finite Lebesgue measure, $T<0$ and $\Delta = (\Delta_t,\, \Delta_x)$ with $\Delta_t,\,\Delta_x>0$. Let $G_{T,\Delta}$ be the set of grid cells on $[T,0] \times [0,\phi(0)]$ with step-sizes $\Delta_t,\Delta_x$ on the time, respectively space axes. Let $L_{T,\Delta}(A)$ be the approximation of $L(A)$ with respect to 
$G_{T,\Delta}$ (see Figure \ref{fig:theorem3.3})
\begin{equation*}
    L_{T,\Delta}(A) = \sum_{g \in G_{T,\Delta} \colon \, g \subset A} L(g).
\end{equation*}
Let $T_n \to -\infty, \Delta_n = (\Delta^n_t,\Delta^n_x) \to 0$. Then $L_{T_n,\Delta_n}(A) \to L(A)$ in probability and, if $\Var{L^{'}}$ is finite, in $\mathcal{L}^2$; further, if $\Var{L^{'}}$ is finite, then $ \ev \left[\left(L_{T,\Delta}(A) - L(A)\right)^2 \right] \le  C_{T,\Delta}^2 \ev\left[L^{'}\right] + C_{T,\Delta} \Var{L^{'}},$
    where $C_{T,\Delta} = T \Delta_x + \phi(0) \Delta_x + \int_{-\infty}^{T} \phi(s) ds .$ In particular, if $A$ is bounded and $\Delta_t=\Delta_x$, then $
    \ev \left[\left(L_{T,\Delta}(A)- L(A)\right)^2 \right] =   \mathcal{O}\left(\Delta_t^2\right).$
\end{theorem}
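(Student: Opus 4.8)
The plan is to reduce the whole statement to a single Lévy-basis evaluation over the region the approximation discards. Write $A_{T,\Delta} \defeq \bigcup_{g \in G_{T,\Delta}\colon g \subset A} g$ for the union of grid cells fully contained in $A$, and set $E_{T,\Delta} \defeq A \setminus A_{T,\Delta}$. Since the retained cells are pairwise disjoint and all lie in $A$, the additivity in Definition \ref{def:levy_basis} gives $L_{T,\Delta}(A) = \sum_{g \subset A} L(g) = L(A_{T,\Delta})$, and since $A = A_{T,\Delta} \sqcup E_{T,\Delta}$ it also gives $L(A) = L(A_{T,\Delta}) + L(E_{T,\Delta})$. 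Hence the error is exactly $L(A) - L_{T,\Delta}(A) = L(E_{T,\Delta})$, a Lévy basis over one explicit set, and every distributional claim follows from the homogeneity relations \eqref{eq:mean_L(A)}, \eqref{eq:var_L(A)} and \eqref{eq:relate_cumul_L_with_cumul_levy_seed} applied to $E_{T,\Delta}$. The whole theorem is thus driven by the single scalar $m \defeq \mathrm{Leb}(E_{T,\Delta})$.

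Next I would bound $m$. Split $E_{T,\Delta}$ into the truncated tail $A \cap \{s < T\}$, of measure $\int_{-\infty}^{T} \phi(s)\,ds$, and the ``staircase'' region $A \cap \{T < s < 0\}$ lying above the retained cells. For the latter I argue column by column: by Remark \ref{remark_cell_and_lebesgue_decomp} and monotonicity of $\phi$, the cells kept in the column with left edge $s_j$ reach height $\floor{\phi(s_j)/\Delta_x}\Delta_x$, so that column contributes $\int_{s_j}^{s_j+\Delta_t}\phi(s)\,ds - \floor{\phi(s_j)/\Delta_x}\Delta_x\,\Delta_t$. Using $\phi$ increasing to bound $\int_{s_j}^{s_j+\Delta_t}\phi \le \Delta_t\,\phi(s_j+\Delta_t)$ together with $\floor{\phi(s_j)/\Delta_x}\Delta_x \ge \phi(s_j)-\Delta_x$, the per-column error is at most $\Delta_t\bigl(\phi(s_j+\Delta_t)-\phi(s_j)\bigr) + \Delta_t\Delta_x$. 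Summing over columns telescopes the first part to a boundary term and collapses the second to the horizontal extent times $\Delta_x$, which, added to the tail integral, assembles into the constant $C_{T,\Delta}$ and shows $m \le C_{T,\Delta}$.

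With $m$ in hand the quantitative claim is immediate: \eqref{eq:mean_L(A)} and \eqref{eq:var_L(A)} give $\ev[L(E_{T,\Delta})] = m\,\ev[L']$ and $\Var(L(E_{T,\Delta})) = m\,\Var(L')$, whence $\ev[(L(A)-L_{T,\Delta}(A))^2] = m\,\Var(L') + m^2(\ev[L'])^2$, and bounding $m \le C_{T,\Delta}$ yields the stated MSE estimate; the $\mathcal{O}$-claim then follows by setting $\Delta_t=\Delta_x$ and using $\int_{-\infty}^{T}\phi = 0$ for a bounded trawl. For the limit as $T_n\to-\infty$, $\Delta_n\to 0$, the tail integral vanishes because $\mathrm{Leb}(A)<\infty$, and the discretization part vanishes once the mesh refines quickly enough relative to the growth of the truncation window; granting $m_n\to 0$, when $\Var(L')<\infty$ this forces $\ev[L(E_n)^2]\to 0$, i.e.\ $\mathcal{L}^2$ convergence. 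In general I would argue in probability through the cumulant transform: by \eqref{eq:relate_cumul_L_with_cumul_levy_seed}, $C(\theta,L(E_n)) = m_n\,C(\theta,L') \to 0$ for each $\theta$, so the characteristic functions converge to $1$ and Lévy's continuity theorem gives $L(E_n)\xrightarrow{d}0$, which for a constant limit is convergence in probability.

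The main obstacle is the measure estimate of the second paragraph: the reduction of the first paragraph and the moment/continuity arguments of the third are routine, but the staircase bound must exploit monotonicity of $\phi$ carefully (the top-left-corner inclusion rule and the telescoping across columns) to produce a constant that actually tends to zero. In particular one must track the interplay between the widening window $|T_n|$ and the shrinking vertical mesh $\Delta_x^n$, since the floor-remainder contribution to the discretization error scales like their product and need not vanish under $T_n\to-\infty$ and $\Delta_n\to0$ taken independently.
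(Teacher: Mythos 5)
Your reduction of the error to $L(E_{T,\Delta})$ with $E_{T,\Delta}=A\setminus A_{T,\Delta}$ is exactly the paper's decomposition (there $U_n = A\setminus Q_n$), and your MSE bound is correct and matches the paper's: the paper bounds $\mathrm{Leb}(U_n)$ by counting the cells that meet $\partial A$ (at most $|T|/\Delta_t+\phi(0)/\Delta_x$ of them, since the graph of an increasing $\phi$ crosses each grid row and column at most once) and multiplying by the cell area, whereas your column-by-column telescoping of $\Delta_t\bigl(\phi(s_j+\Delta_t)-\phi(s_j)\bigr)+\Delta_t\Delta_x$ produces the same constant $C_{T,\Delta}$ in a somewhat cleaner way. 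The third paragraph (second-moment identity via \eqref{eq:mean_L(A)}--\eqref{eq:var_L(A)}, and convergence in probability via the cumulant and L\'evy's continuity theorem) is also the paper's argument.

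The genuine gap is in the convergence claims. Both your in-probability and your $\mathcal{L}^2$ arguments are run conditionally on $m_n\defeq\mathrm{Leb}(E_{T_n,\Delta_n})\to 0$, which you explicitly ``grant'' rather than prove, and --- as you yourself observe in your last paragraph --- your only estimate on $m_n$ is $m_n\le C_{T_n,\Delta_n}$, which contains the term $|T_n|\Delta_x^n$ and therefore does \emph{not} tend to zero for arbitrary independent sequences $T_n\to-\infty$, $\Delta_n\to 0$. Since the theorem asserts convergence for any such sequences, the proof as written does not close. The paper handles this with a separate, softer argument that you are missing: $Q_n$ is a compact union of cells contained in the open set $A$, hence at positive distance from $A^c$, so for every $n$ there is $N$ with $Q_n\subset Q_k$ for all $k>N$; extracting a nested subsequence with $\bigcup_k Q_{n_k}=A$ and using continuity of the Lebesgue measure gives $\mathrm{Leb}(Q_n)\to\mathrm{Leb}(A)$, i.e.\ $m_n\to 0$ unconditionally, after which your paragraph three goes through verbatim. (Alternatively, one can sharpen your staircase estimate by splitting columns according to whether $\phi(s_j)\ge\Delta_x$, which bounds the floor-remainder contribution by $|\phi^{-1}(\Delta_x)|\,\Delta_x+\int_{-\infty}^{\phi^{-1}(\Delta_x)+\Delta_t}\phi(s)\,\mathrm{d}s\to 0$ independently of $T$; but some such additional step is required.) The paper itself flags, right after the proof, that the displayed MSE bound is not sharp and only yields $\mathcal{L}^2$ convergence when $T_n\Delta_x^n\to 0$ --- which is precisely why it does not rely on that bound to prove the limit statements.
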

\begin{figure}
    \centering
  \includegraphics{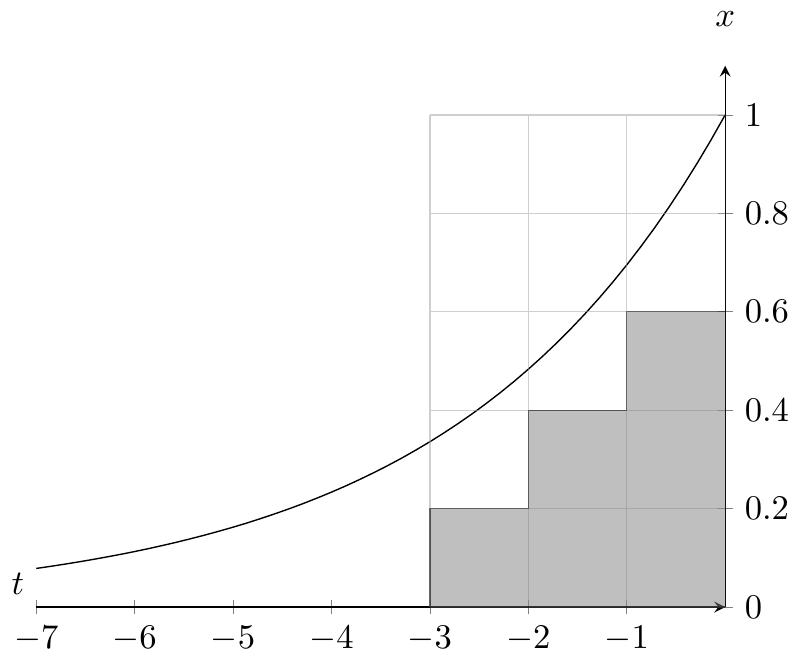}
  \caption{Illustration of the grid discretization for the unbounded trawl set $A = \{(t,x) \in \R^2 \colon  t < 0 , 0 < x < e^{2.75 t} \}$, with truncation parameter $T = -3$ and step-sizes $\Delta_t =1$, $\Delta_x = 0.2.$ The cells included in the simulation are shaded in gray. For unbounded trawls, we incur both truncation and discretization error.}
    \label{fig:theorem3.3}
\end{figure}
The MSE bound in Theorem \ref{grid_discretisation_thm} is not sharp for unbounded trawls, in the sense that it gives $\mathcal{L}^2$ convergence only when $T_n  \Delta_x^n \to 0$ (see proof in Section \ref{section:proofs}). Nevertheless, it provides a way to choose the truncation parameter $T$ and step-sizes $\Delta_t,\,\Delta_x$ for the purpose of computer simulations. Despite the convergence properties, Algorithm \ref{algo:grid_discretisation} is computationally expensive, requiring samples for $N_x(\left(k-1\right)N_t+N) =\left(\left(k-1\right)\tau + T \right) \cdot \phi(0)/ (\Delta_t \Delta_x)$ cells. In general, if $A \subset \R^{d}$ with $d>2$ and with same step-size $\Delta_t$ for all dimensions, the number of cells grows like $\mathcal{O}(1/\Delta_t^d)$ and holding the grids $G_l$ in memory, even one at a time, is not feasible. In the following, we discuss two alternatives: the compound Poisson and the slice partition algorithms, which are less computationally intensive. 
\subsection{Algorithm II: compound Poisson simulation}
\label{subsection: cpp}
Given a L\'evy basis $L$ with characteristic triplet $(\xi,\,a,\,l),$ by the first part of Theorem \ref{thm:levy_ito_decomp_for_levy_bases}, there exist a decomposition of $L$ into a Gaussian L\'evy basis $L_g,$ with triplet $(\xi,\,a,\,0),$ and a jump Levy basis $L_j,$ with triplet $(0,\,0,\,l)$, such that $L_g$ and $L_j$ are independent and $L=L_g+L_j.$ Assume that the Gaussian part has already been simulated, which can be done exactly by Algorithm \ref{algo:slice_partition_trawl_process_convolution} of the next subsection, or by using a Cholesky decomposition and a standard normal sampler. By the second part of Theorem \ref{thm:levy_ito_decomp_for_levy_bases}, there exists a Poisson random measure $N$ on $\mathcal{B}_{\mathrm{Leb}} (\R^2)\times \R$ with intensity measure $\nu \defeq \mathrm{Leb} \otimes l$ such that
\begin{equation*}
    L_j\left(A\right) = \int_A \int_{(-1,1)} y \ \mathrm{d} (N- \nu)(\mathbf{z},y) + \int_A \int_{\R \backslash(-1,1)} y \ \mathrm{d} N(\mathbf{z},y),
\end{equation*}
where $\mathrm{Leb}$ is the Lebesgue measure on $\R^2$, $\mathbf{z} = (t,x) \in \R^2$ and $y \in \R.$ In the above, $(t,x)$ are the time and space coordinates at which a jump appears, whereas $y$ is the value of the jump. For ease of presentation, assume that the trawl set is bounded; we relax this assumption at the end of the subsection. We analyse the cases of finite and infinite L\'evy measures separately.

If $l$ is a finite measure, i.e.~if $c \defeq l(\R) < \infty,$ we can simplify the above expression to
\begin{equation}
       L_j\left(A\right) =  \int_A \int_{\R \backslash(-1,1)} y \ \mathrm{d} N(\mathbf{z},y) - \mathrm{Leb}(A) \int_{-1}^1 y \ l(dy).
\end{equation}
With this representation in mind, to simulate the trawl process, we can simulate a Poisson point process on $\R^2$ with constant intensity $c,$ where to each generated point $(t_i,x_i)$ we associate a corresponding jump sample $y_i$ with law  $\tilde{l} \defeq l/c$. Let $\tilde{\xi} = \int_{-1}^1 y \ l(\mathrm{d}y)$. The value of $L_j(A)$ is then given by the difference between the sum of all jumps $y_i$ corresponding to points $\mathbf{z}_i = (t_i,x_i)$ contained in the trawl set $A$ and the drift term $\tilde{\xi} \mathrm{Leb}(A)$
\begin{equation*}
   L_j(A) = \sum_{i \colon  \ (t_i,x_i) \in A} y_i - \tilde{\xi}  \mathrm{Leb}(A) .
\end{equation*}
It follows that $L_j(A)$ is compound Poisson distributed, with jumps distributed according to $\tilde{l},$ minus a constant. This property is also clear from the simplified expression for the cumulant transform from Definition \eqref{def:simplified_cumulant}  
\begin{equation*}
      C(\theta,L_j(A)) =  c \mathrm{Leb}(A) \int_{\mathbb{R}}\left(e^{i \theta y}-1\right) \tilde{l}(\mathrm{d} y) - i\theta \tilde{\xi}  \mathrm{Leb}(A) .
\end{equation*}
Finally, as the trawl is bounded, we can choose $T<0$ be such that $\phi(T)=0$ and note that we only need to simulate the Poisson point process on $[T+\tau,k\tau] \times [0,\phi(0)].$ The full procedure for the simulation of trawl processes with bounded trawl sets and finite L\'evy measures is given in Algorithm \ref{algo:CPP}.
\begin{algorithm}
  \caption{Compound Poisson simulation}  \label{algo:CPP}
  \begin{algorithmic}[1]
      \Require Trawl function $\phi$ and $T<0$ with $\phi(T)=0;$ sampler $S(n)$ which returns an $n$ dimensional vector of iid samples from $\tilde{l};\, c= l(\R)$ and $\tilde{\xi} = \int_{-1}^1 y \ l(\mathrm{d}y)$; number of trawls to be simulated $k$ and distance $\tau$ between trawl sets; sampler $U(a,\,b,\,n)$ which returns an $n$ dimensional vector of iid samples from the uniform distribution on $[a,b]$
    \Ensure Vector $X$ containing the simulated values of the trawl process at times $\tau,\ldots,k\tau.$
      \Function{CppSimulation}{$S,\,c,\,\tilde{\xi},\,\phi,\,T,\,k,\,\tau$}
      \State $\nu \gets c  \phi(0) \left((k-1) \tau -T\right)$
      \State $N \sim Poisson(\nu)$
      \State $t \sim U\left(\left[T+\tau, k \tau\right]\right)$\Comment{Generate $N$ jump time samples} 
      \State $x \sim U\left(\left[0,\phi(0)\right]\right)$ \Comment{Generate $N$ jump height samples}
      \State $y \sim S(N)$ \Comment{Generate $N$ jump value samples}
      \State $X \gets \tilde{\xi}\mathrm{Leb(A)} \cdot \textrm{ones}(k)$
      \For{$i \in \{1,\ldots,N\}$}
        \For{$l \in \{1,\ldots,k\}$}
        \If{$x[i] < \phi(t[i]-l\tau)$} \Comment{Check if $(t_i,x_i) \in A_{l\tau}$}
        \State{$X[l] += y[i]$}
        \EndIf
      \EndFor
          \EndFor
      \Return $X$
      \EndFunction
  \end{algorithmic}
\end{algorithm}
Note that the nested for loops in steps $9$ and $10$ require an expected number of $\mathcal{O}(k^2)$ comparisons of the form $x[i] < \phi(t[i]-l\tau).$ Since $A$ is bounded, at most $\ceil{-T/\tau}$ consecutive trawl sets have non empty intersections, hence it is enough to do $\ceil{-T/\tau}$ comparisons in step $10$ and the complexity reduces to $\mathcal{O}(k).$

On the other hand, if $l$ is an infinite measure, we can no longer represent $L_j$ via a compound Poisson process; we need to truncate and discard jumps with magnitude below some threshold, as we would when simulating a L\'evy process. Let $l^\epsilon$ be the restriction of $l$ to $ (-\infty,-\epsilon) \cap (\epsilon,\infty)$ and $\tilde{l^\epsilon} \defeq {l^\epsilon} /{l^\epsilon\left(\R\right)}. $ Define the corresponding truncated L\'evy basis $L^\epsilon_j$
\begin{equation}
    L^\epsilon_j\left(A\right) = \int_A \int_{(-1,-\epsilon) \cup (\epsilon,1)} y \ \mathrm{d} (N- \nu)(\mathbf{z},y) + \int_A \int_{\R \backslash(-1,1)} y \ \mathrm{d} N(\mathbf{z},y),
    \label{eq:truncated_levy_basis}
\end{equation}
the resulting approximation $X_t^\epsilon = L_j^\epsilon(A_t)$ and $X_t = L_j(A_t)$. Note that $X^\epsilon$ can be simulated with Algorithm II and that
\begin{equation*}
    \ev\left[\left(L_j(A) - L^\epsilon_j(A)\right)^2\right]= \ev \left[\left(\int_A \int_{(-1,-\epsilon) \cup (\epsilon,1)} y \ \mathrm{d} (N- \nu)(\mathbf{z},y)\right)^2\right] =  \mathrm{Leb(A)}\int_{-\epsilon}^\epsilon y^2 \mathrm{d}l(y) \to 0 \text{ as } \epsilon \to 0,
\end{equation*}
which gives different convergence rates for different L\'evy measures $l$. Even when the L\'evy seed does not have any finite moments, the error $L_j(A) - L_j^\epsilon(A)$ is square integrable and converges to $0$ in $\mathcal{L}^2,$ which already improves on the convergence of Theorem \ref{grid_discretisation_thm}. We further establish convergence of $X^\epsilon$ as a stochastic process, rather than just at the level of the marginals $X_t.$
\begin{theorem}
\label{theorem:uniform_convergence}
The process $X^\epsilon$ converges uniformly on compacts to $X$ on the space of c\`adl\`ag paths as $\epsilon \to 0.$
\end{theorem}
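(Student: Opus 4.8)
The plan is to prove the stronger quantitative statement that $\ev\big[\sup_{t\in[0,T]}(X_t-X_t^\epsilon)^2\big]\to 0$ as $\epsilon\to 0$ for every fixed $T>0$; this immediately gives uniform convergence on compacts in probability, and along any sequence $\epsilon_n\downarrow 0$ with $\sum_n\int_{-\epsilon_n}^{\epsilon_n}y^2\,l(\mathrm{d}y)<\infty$ it gives almost sure convergence by Borel--Cantelli. Set $V_t^\epsilon\defeq X_t-X_t^\epsilon$. By the definition of $L_j^\epsilon$ in \eqref{eq:truncated_levy_basis}, the large-jump terms cancel and the compensated parts combine, so $V_t^\epsilon=\int_{A_t}\int_{(-\epsilon,\epsilon)}y\,\mathrm{d}(N-\nu)(\mathbf z,y)=:M^\epsilon(A_t)$ is the compensated small-jump martingale measure integrated over the sliding set, with $\ev[M^\epsilon(B)^2]=\mathrm{Leb}(B)\,m_\epsilon$ where $m_\epsilon\defeq\int_{-\epsilon}^{\epsilon}y^2\,l(\mathrm{d}y)\to 0$ by dominated convergence. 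Thus the marginals already converge in $\mathcal L^2$, and the whole task is to upgrade this to control of the supremum in $t$.

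The hard part, and the reason one cannot simply invoke Doob, is that $t\mapsto V_t^\epsilon$ is \emph{not} a martingale: as $t$ increases the trawl set $A_t$ simultaneously gains a leading slice and sheds an upper sliver, so it has no monotone set representation. The key idea is to realise $V^\epsilon$ as a difference of two genuine martingales carried by two different filtrations. For each space-time point $\mathbf z=(s,x)$ the set $\{t:\mathbf z\in A_t\}$ is an interval: the conditions $s<t$ and $x<\phi(s-t)$ each cut out a half-line in $t$ (using that $\phi$ is increasing), so writing $\tau^{\mathrm{in}}(\mathbf z)=\inf\{t:\mathbf z\in A_t\}$ and $\tau^{\mathrm{out}}(\mathbf z)=\sup\{t:\mathbf z\in A_t\}$, the càdlàg modification of the integrand satisfies $\mathbf 1_{A_t}(\mathbf z)=\mathbf 1\{\tau^{\mathrm{in}}(\mathbf z)\le t\}-\mathbf 1\{\tau^{\mathrm{out}}(\mathbf z)\le t\}$. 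Restricting to $W\defeq\bigcup_{t\in[0,T]}A_t$, which has finite measure $\mathrm{Leb}(W)=\mathrm{Leb}(A)+T\phi(0)<\infty$ even in the unbounded case (a short direct computation), this yields $V_t^\epsilon=P_t^\epsilon-Q_t^\epsilon$ with $P_t^\epsilon\defeq M^\epsilon(\{\mathbf z\in W:\tau^{\mathrm{in}}(\mathbf z)\le t\})$ and $Q_t^\epsilon\defeq M^\epsilon(\{\mathbf z\in W:\tau^{\mathrm{out}}(\mathbf z)\le t\})$.

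I would then verify that $P^\epsilon$ and $Q^\epsilon$ are càdlàg, square-integrable martingales: each integrates $M^\epsilon$ over a region that grows with $t$, and the increment over $(t_1,t_2]$ is the integral over the points that enter, respectively exit, during that window, which is independent of the past and centred. Hence $P^\epsilon$ is a martingale for the entry-time filtration and $Q^\epsilon$ for the exit-time filtration. Doob's $\mathcal L^2$ maximal inequality applied to each gives $\ev\big[\sup_{t\le T}(P_t^\epsilon)^2\big]\le 4\,\ev[(P_T^\epsilon)^2]\le 4\,\mathrm{Leb}(W)\,m_\epsilon$, and identically for $Q^\epsilon$. Combining through $(a-b)^2\le 2a^2+2b^2$ yields $\ev\big[\sup_{t\le T}(V_t^\epsilon)^2\big]\le 16\,\mathrm{Leb}(W)\,m_\epsilon\to 0$, which is the desired conclusion. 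The only genuinely delicate step is the entry/exit decomposition into two differently adapted martingales; the finiteness of $\mathrm{Leb}(W)$, the martingale property, and the càdlàg endpoint conventions are then routine verifications.
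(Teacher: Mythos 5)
Your proof is correct and takes a genuinely different --- and in one respect sharper --- route than the paper's. The paper reduces the claim to the machinery behind the L\'evy--It\^o decomposition: it splits the small jumps into independent annulus processes $Z^n_t=\int_{A_t}\int_{\abs{y}\in(\epsilon_{n+1},\epsilon_n)}y\,\mathrm{d}(N-\nu)(\mathbf{z},y)$ and invokes Lemmas 20.2, 20.4 and 20.5 of \cite{ken1999levy} inside the proof of Lemma 20.6 there, asserting that these transfer to the trawl setting; the Gaussian part is treated separately via a H\"older-continuity lemma, which is irrelevant to your argument since $X-X^{\epsilon}$ involves only the compensated small jumps. The delicate point in that transfer is exactly the one you isolate: $t\mapsto\int_{A_t}(\cdot)$ is not an additive process in $t$ because $A_t$ simultaneously gains and sheds mass, so the independent-increment maximal inequalities do not apply verbatim; your entry/exit decomposition $V^{\epsilon}=P^{\epsilon}-Q^{\epsilon}$ into two martingales adapted to two different filtrations resolves this explicitly (the interval structure of $\{t:\mathbf{z}\in A_t\}$ does follow from the monotonicity of $\phi$, and $\mathrm{Leb}(W)=\mathrm{Leb}(A)+T\phi(0)$ is indeed finite), and Doob's inequality then yields the quantitative bound $\ev\big[\sup_{t\le T}(X_t-X^{\epsilon}_t)^2\big]\le 16\,\mathrm{Leb}(W)\int_{-\epsilon}^{\epsilon}y^2\,l(\mathrm{d}y)$, an explicit rate the paper does not provide. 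The trade-off is the mode of convergence: the paper's route delivers almost sure uniform convergence on compacts along a fixed sequence $\epsilon_n\downarrow 0$ with no summability condition, whereas your bound gives uniform convergence in probability for the whole family and almost sure convergence only along sequences with $\sum_n\int_{-\epsilon_n}^{\epsilon_n}y^2\,l(\mathrm{d}y)<\infty$ (always extractable as a subsequence); to recover almost sure convergence along arbitrary sequences one would combine your maximal bound with the L\'evy/It\^o--Nisio equivalence for the sum of the independent annulus processes, which is essentially what the cited lemmas accomplish. The steps you flag as routine (c\`adl\`ag modifications of the $\mathcal{L}^2$-martingales, boundary null sets, the pathwise validity of the decomposition simultaneously in $t$) do check out.
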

We now relax the bounded trawl set assumption. Previous simulation methods for integer-valued trawl processes with unbounded trawls from \cite{ambit_book,VERAART2019110} approximated $L(A)$ by $L(A \cap \{t > T\})$ for some truncation threshold $T,$ as $L(A \cap \{t \le T\}) \to 0$ in probability as $T \to -\infty$. Apart from losing the uniform convergence properties, $T$ must be chosen with a very large absolute value, significantly increasing the computational time. For example, if the trawl process has long memory, with  autocorrelation function given by $\rho(t) = (1+t)^{1/4},$ then even choosing $T=-10^4$ would incur an unacceptable trawl set truncation error. The apparent difficulty comes from simulating a Poisson point process on an unbounded domain of finite area, which we will discuss next. 

In the case of unbounded trawl sets, apply Algorithm \ref{algo:CPP} to simulate $L_j^\epsilon$ on $[\tau,k\tau] \times [0,\phi(0)]$ and sample the number of atoms $\mathbf{z}_i = (t_i,x_i)$ of  $L_j^\epsilon$ on $A_\tau$ $N\sim\text{Poisson}(l^\epsilon(\R) \textrm{Leb}(A_\tau))$. Conditionally on $N$, the $t_i$'s are iid with density $p(t) = \phi(t-\tau) / \int_{-\infty}^0 \phi(s)ds$ for $t<\tau$ and $0$ otherwise. Finally, conditionally on $N$ and $t_i$, $x_i \sim U(0,\phi(t_i))$ and we can continue with Steps $6-11$ of Algorithm \ref{algo:CPP}. By decomposing $\phi$ into its convex and concave parts, we can draw samples with density $p$ by rejection sampling, as described in \cite{ccarj}.   

Finally, note that the proof of Theorem \ref{theorem:uniform_convergence} does not require a bounded trawl set, thus we obtain the same convergence results for unbounded trawls. Nevertheless, for infinite L\'evy measures, $l\left(\R \backslash(-\epsilon,\epsilon)\right) \to \infty$  as $\epsilon \to 0,$ hence the intensity $\nu$ of the Poisson process $N$ to be sampled in Step $3$ of Algorithm \ref{algo:CPP} diverges, leading to an increasing cost per simulation. In the next subsection, 
we present an algorithm which simulates trawl processes exactly, regardless of the type of L\'evy measure.
\subsection{Algorithm III: slice partition}
\label{subsection:slice_partition_algorithm}
In this algorithm we decompose the sets $A_\tau,\ldots,A_{k\tau}$ into a collection  $\mathcal{S}$ of disjoint slices $S$, simulate the values of the L\'evy basis $L$ over each slice and then set
\begin{equation*}
    X_{l\tau} = \sum_{S \subset A_{l\tau}} L(S).
\end{equation*}
Indeed, because of the indepedent-scatteredness of the L\'evy basis, i.e.~the first property in Definition \ref{def:levy_basis}, we can sample $\{L(S)\}_{S \in \mathcal{S}}$ independently; then, by the additivity of the L\'evy basis, i.e.~the second property in Definition \ref{def:levy_basis}, we can reconstruct the value of the trawl $X_{l\tau}$ by summing the values corresponding to the L\'evy basis simulated over the slices contained in $A_{l\tau}.$ In general, there could be up to $2^k$ slices of the form $B_1 \cap \ldots \cap B_k,$ where $B_l \in \left\{A_{l\tau},A_{l\tau}^C\right\}.$ For monotonic bounded trawls, the number of slices is $\mathcal{O}\left(k\right),$ whereas for monotonic unbounded trawls, it is $\mathcal{O}\left(k^2\right)$, making the simulation scheme feasible. We analyse the slice partition separately in the bounded and unbounded case. For ease of notation and without risk of confusion, we write $A_l$ for $A_{l\tau}.$

If there is some $T<0$ such that $\phi(T)=0,$ let $I = \ceil{-T/\tau}$, where $\ceil{\cdot}$ is the ceiling function and define the slice partition (see Figure \ref{fig:Ng2})
\begin{align*}
    S_{i1}  &=   \{t \leq \tau\}  \cap \left(A_i \backslash A_{i+1} \right),\\ 
    S_{ij} &=  \left(\{(j-1)\cdot \tau < t \leq j\cdot \tau\} \cap A_{i+j-1} \right) \backslash  A_{i+j},   \text{ for } j \ge 2. 
\end{align*}
Then exactly $I$ consecutive trawl sets have non-empty intersection and each of the trawl sets $A_{l\tau}$ contains exactly $I$ slices, making up for a total of $kI$ slices. 
Let $s_{ij} = \mathrm{Leb}\left(S_{ij}\right);$ by the translation invariance of the Lebesgue measure, $s_{ij} = s_{ij'}$ for $j,\,j' \ge 2.$ Hence to determine the areas of the slices $S_{ij}$ it is enough to compute $s_{ij}$ for $i \in \{1,\ldots,k\}$ and $j \in \{1,2\}$. A short calculation shows that 
\begin{align}
    s_{i1} &= \int_{-i\cdot \tau}^{(-i+1)\cdot \tau} \phi(t) dt, \label{trawl_1_areas}\\
    s_{i2} &= s_{i,1}-s_{i+1,1}, \label{trawl_2_areas}
\end{align}
where we set $s_{I+1,1} =0$. The above two equations fully specify the areas $s_{ij}$. Let $Y$ be the $I \times (I-1+k)$ random matrix of independent random variables $L\left(S_{ij}\right), 1 \le i \le I, \ 1 \le j \le k,$ padded with $I-1$ columns of $0$'s to the left and $F$ be the $I\times I$ lower diagonal matrix filer 
\begin{equation}
Y = 
\begin{pmatrix}
0  & \ldots  & 0 &          L\left(S_{11}\right)  & \ldots &  L\left(S_{1k}\right) \\
 0  & \ldots  & 0 &        L\left(S_{21}\right)  &  \ldots &  L\left(S_{2k}\right)\\
         \vdots        &    & \vdots & \vdots & & \vdots  \\
\undermat{I-1}{0  & \ldots  & 0 } &       L\left(S_{I1}\right) &  \ldots &  L\left(S_{Ik}\right)\\
\end{pmatrix}, \  F = \begin{pmatrix}
    0 & 0 & \dots & 0 & 1 \\
    0 & 0 & \dots & 1 &1 \\
    \vdots & \vdots & \udots & \vdots & \vdots \\
    0 & 1 & \dots & 1 &1 \\
    1 & 1 & \dots & 1 & 1
  \end{pmatrix}.\label{eq:f_matrix}
\end{equation}

The values of the trawl process at times $\tau,\ldots,k\tau$ are given by the convolution $Y*F$ (see Appendix \ref{def:convolution} for the definition of matrix convolution). The full procedure is given in Algorithm \ref{algo:slice_partition_trawl_process_convolution}. Since $s_{ij} = s_{ij'}$ for $j,\,j' \ge 2,$ we can vectorize steps $6-8$ by sampling columns of $Y.$ Further, the convolution step $8$ can be computed efficiently by taking advantage of the form of the filter $F$ (see Algorithm \ref{algo:slice_partition_trawl_process_implementation} in  Appendix \ref{appendix:B}), which is also implemented in \cite{Leonte_Ambit_Stochastics_2022}. In both cases, the number of operations is $\mathcal{O}(k).$ 
 \begin{algorithm}
        \caption{Slice partition for bounded, monotonic trawls}\label{algo:slice_partition_trawl_process_convolution}
  \begin{algorithmic}[1]
      \Require Sampler $S(\textrm{area})$ which returns independent samples with the same law as $L(A),$ where $\mathrm{Leb}(A) = \textrm{area};$ number of trawls to be simulated $k$ and distance $\tau$ between them; $I = \ceil{-T/\tau}$.
    \Ensure Vector containing the simulated values of the trawl process at times $\tau,\ldots,k\tau.$
    \Function{main}{$S,\,k,\,\tau,\,I$}
        \State Compute the areas $s_{ij}$ from  \eqref{trawl_1_areas},\eqref{trawl_2_areas} \Comment{Requires $I$ integrations and $I$ differences}
        \State $F \gets \text{tril}(I)$ \Comment{$I \times I$ lower triangular matrix of ones as in \eqref{eq:f_matrix}}
        \State $Y \gets \text{zeros}(I,I+k-1)$
        \For{$i = 1,\ldots,I$}
         \For{$j=1,\ldots,k$}
            \State $Y[i,I-1+j] \gets S\left(s_{ij}\right)$
                \EndFor    
            \EndFor
      \Return $Y * F$ \Comment{Convolution step}
      \EndFunction
    \end{algorithmic}
\end{algorithm}

If the trawl set $A$ is unbounded, define the slice partition
\begin{equation*}
    S_{ij} = \left(A_j \cap A_{i+j-1}\right) \backslash A_{i+j},\ 1\le i,j \le k, \ i+j \le k+1. 
\end{equation*}
In total, we have $k(k+1)/2$ slices (see Figure \ref{fig:Ng1}). Algorithm \ref{algo:slice_partition_trawl_process_convolution} still applies, with the mention that the areas $s_{ij}$ have different formulae (see Equation \ref{areas_s_ij_non_compact_ambit_set} in the appendix), $Y$ is $k \times (2k-1),$ F is $k\times k$
\begin{equation*}
Y = 
\begin{pmatrix}
0  & \ldots  & 0 &          L\left(S_{11}\right)  & \ldots & L\left(S_{1,k-1}\right) & L\left(S_{1k}\right) \\
 0  & \ldots  & 0 & L\left(S_{21}\right)  &  \ldots &  L\left(S_{2,k-1}\right)  & 0\\
         \vdots        &    & \vdots & \vdots & \udots & \vdots   & \vdots \\
\undermat{k-1}{0  & \ldots  & 0 } &      L\left(S_{k1}\right) &  \ldots &  0 & 0\\
\end{pmatrix}, \   F = \begin{pmatrix}
    0 & 0 & \dots & 0 & 1 \\
    0 & 0 & \dots & 1 &1 \\
    \vdots & \vdots & \udots & \vdots & \vdots \\
    0 & 1 & \dots & 1 &1 \\
    1 & 1 & \dots & 1 & 1
  \end{pmatrix}, 
\end{equation*}

and we now perform $\mathcal{O}\left(k^2\right)$ operations. In practice, some of these $k(k+1)/2$ slices might have areas below machine precision, and for the purpose of computer simulations, we can approximate $Y * F$ by $Y\mathrm{[1:n,:]} * F\mathrm{[1:n,:]}$, where $Y\mathrm{[1:n,:]}$ and $F\mathrm{[1:n,:]}$ are obtained from $Y$ and $F$ by discarding the last $k-n$ rows, where $n \in \{1,\ldots,k-1\}$. In this case, the first $k-n$ trawls are not simulated exactly and the last $n$ trawls are simulated exactly; further, since the errors $\{\epsilon^n_l\}_{l=1}^{k-n}$ are given by
\begin{equation*}
    \epsilon^n_{l} =  L\left(\bigcup_{\substack{1 \le j \le l \\ i> n ,i +j \ge l+1}} S_{ij}\right),
\end{equation*}
 we have access to the joint distribution of the errors, and in particular, to the mean and variance of the errors. This can be used to calibrate the truncation parameter $n.$ 
 \begin{remark}
 Note that Theorem \ref{theorem:uniform_convergence} of the previous subsection establishes not just convergence of $X^{\epsilon}$ to $X$ at discretely observed times $\tau,\ldots,k\tau,$ but convergence of stochastic processes in the supremum norm. A similar result holds for the slice partition algorithm. To this end, simulate $X$ via the slice partition method at discrete times $\mathcal{D}_n = \{0,\frac{1}{2^n},\ldots,1\}$ and define $X^n(t) = X\left(\frac{\floor{2^n t}}{2^n}\right),$ where $\floor{\cdot}$ is the floor function.
 \end{remark}
\begin{theorem} The sequence of stochastic processes $X^n$ converges a.s. to $X$ in Skorokhod's J1 topology.  
 \label{theorem:skorohod}
\end{theorem}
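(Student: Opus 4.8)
The plan is to reduce the statement to a deterministic, pathwise fact about c\`adl\`ag functions. Throughout I work with the c\`adl\`ag modification of $X$, which is the natural version for a statement in $D[0,1]$ equipped with the $J_1$ topology; since the times at which atoms of the driving Poisson measure cross the boundary of the moving trawl set $A_t$ are a.s.\ non-dyadic, this modification coincides with the exactly simulated values at every node of every grid $\mathcal{D}_n$ almost surely, so it does not change $X^n$. Because the slice partition of Algorithm~\ref{algo:slice_partition_trawl_process_convolution} is exact and the meshes $\mathcal{D}_n$ are nested, $X^n(t)=X(\floor{2^n t}/2^n)$ is, for each fixed $\omega$, exactly the right-continuous step skeleton of the single path $t\mapsto X_t(\omega)$ on the dyadic mesh of width $2^{-n}$. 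It therefore suffices to prove the deterministic claim: for every $f\in D[0,1]$ the skeletons $f^n(t)=f(\floor{2^n t}/2^n)$ converge to $f$ in the $J_1$ metric. As this holds for every c\`adl\`ag path, it holds almost surely for $X$.

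For the deterministic claim I would use the time-change characterisation of $J_1$ convergence: writing $\Lambda$ for the increasing homeomorphisms of $[0,1]$ fixing the endpoints, one has $f^n\to f$ iff there exist $\lambda_n\in\Lambda$ with $\|\lambda_n-\mathrm{id}\|_\infty\to 0$ and $\sup_t|f^n(\lambda_n(t))-f(t)|\to 0$. Fix $\epsilon>0$. A c\`adl\`ag $f$ has finitely many jumps of size exceeding $\epsilon$, say at $0<t_1<\dots<t_p<1$, and its c\`adl\`ag modulus $w'(f,\delta)\to 0$ as $\delta\to 0$; choose $\delta$ with $w'(f,\delta)<\epsilon$ and a corresponding partition whose blocks have oscillation below $\epsilon$ and whose endpoints include the $t_j$. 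The geometric point is that $f^n$ reproduces the jump of $f$ at $t_j$ as a jump located at the right endpoint $d_j^n:=2^{-n}(\floor{2^n t_j}+1)$ of the dyadic cell containing $t_j$ (or at $t_j$ itself when $t_j$ is dyadic), so $|d_j^n-t_j|\le 2^{-n}$, and $f^n$ equals the pre-jump value on the cell to the left of $d_j^n$ and the post-jump value to its right. I then take $\lambda_n\in\Lambda$ to be piecewise linear, pinned by $\lambda_n(t_j)=d_j^n$; since each pin moves by at most $2^{-n}$ and the displacement is affine between pins, $\|\lambda_n-\mathrm{id}\|_\infty\le 2^{-n}\to 0$.

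Pinning forces the jumps of $f^n\circ\lambda_n$ to align \emph{exactly} with those of $f$: for $t<t_j$ we have $\lambda_n(t)<d_j^n$, so $f^n(\lambda_n(t))$ is a pre-jump grid value, and for $t\ge t_j$ we have $\lambda_n(t)\ge d_j^n$, so it is a post-jump value. Away from the pins, for $n$ large the grid point $2^{-n}\floor{2^n\lambda_n(t)}$ lies within $2^{1-n}$ of $t$ and in a partition block adjacent to or containing that of $t$, whence $|f^n(\lambda_n(t))-f(t)|$ is at most a fixed multiple of $\epsilon$. Combining the two regimes gives $\limsup_n\sup_t|f^n(\lambda_n(t))-f(t)|\le C\epsilon$, and letting $\epsilon\to 0$ proves the claim, hence the theorem. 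The main obstacle is precisely this coordination at jumps: a single $\lambda_n$ must keep $\|\lambda_n-\mathrm{id}\|_\infty$ of order $2^{-n}$ while aligning each large jump of $f^n$ \emph{exactly} with that of $f$, because an $O(2^{-n})$ misalignment would leave a cell of length up to $2^{-n}$ on which $f^n\circ\lambda_n$ and $f$ differ by a full jump; pinning at the finitely many large jumps removes this obstruction, while the c\`adl\`ag modulus $w'(f,\delta)$ absorbs the infinitely many small jumps and the variation between pins. The remaining verification---that for large $n$ distinct large jumps occupy distinct dyadic cells and that the nearest grid point stays in the correct block---is routine.
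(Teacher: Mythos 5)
Your proposal follows essentially the same route as the paper: reduce the theorem to the deterministic statement that the dyadic skeletons $f^n(t)=f(\floor{2^n t}/2^n)$ of a c\`adl\`ag function converge to $f$ in the $J_1$ metric, and prove that statement by constructing piecewise-linear time changes with $\|\lambda_n-\mathrm{id}\|_\infty=O(2^{-n})$, pinned so that the (finitely many large) jumps of the skeleton align exactly with those of $f$, while the c\`adl\`ag modulus controls the oscillation between pins. This is precisely the content and proof of the paper's Lemma \ref{lemma_j1_convergence} (the paper pins $\lambda^n(d_n^{+}(t_i))=t_i$ rather than $\lambda_n(t_j)=d_j^n$, but these are the same construction up to inverting the time change), and your identification of the exact-alignment-at-jumps issue as the crux is exactly right.

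The one point you pass over is the justification that $X$ has c\`adl\`ag paths in the first place: you invoke ``the c\`adl\`ag modification of $X$'' without establishing that one exists. This is not automatic for the Gaussian component $t\mapsto L_g(A_t)$, and the paper devotes Lemma \ref{remark:cts_paths} to it, showing via $\ev[|X_t-X_{t+h}|^p]\le \tilde C h^{p/2}$ and Kolmogorov's continuity theorem that the Gaussian part has H\"older continuous paths; the jump part is c\`adl\`ag by the Poisson random measure representation of Theorem \ref{thm:levy_ito_decomp_for_levy_bases}. Your aside about Poisson atoms crossing the boundary of $A_t$ at a.s.\ non-dyadic times is not needed once this is in place, since $X^n$ is defined directly as the skeleton of whichever version of $X$ one fixes. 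With the c\`adl\`ag regularity of $X$ supplied, your argument is complete and matches the paper's.
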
 
\begin{figure}[!tbp]
  \centering
  \subfloat[Illustrates the slice partition $S_{ij}$ of a bounded monotonic trawl, with $I=3$ and $k=4.$ Each of the trawl sets are decomposed into $I$ slices.]{\includegraphics[height = 5.75cm,keepaspectratio ]{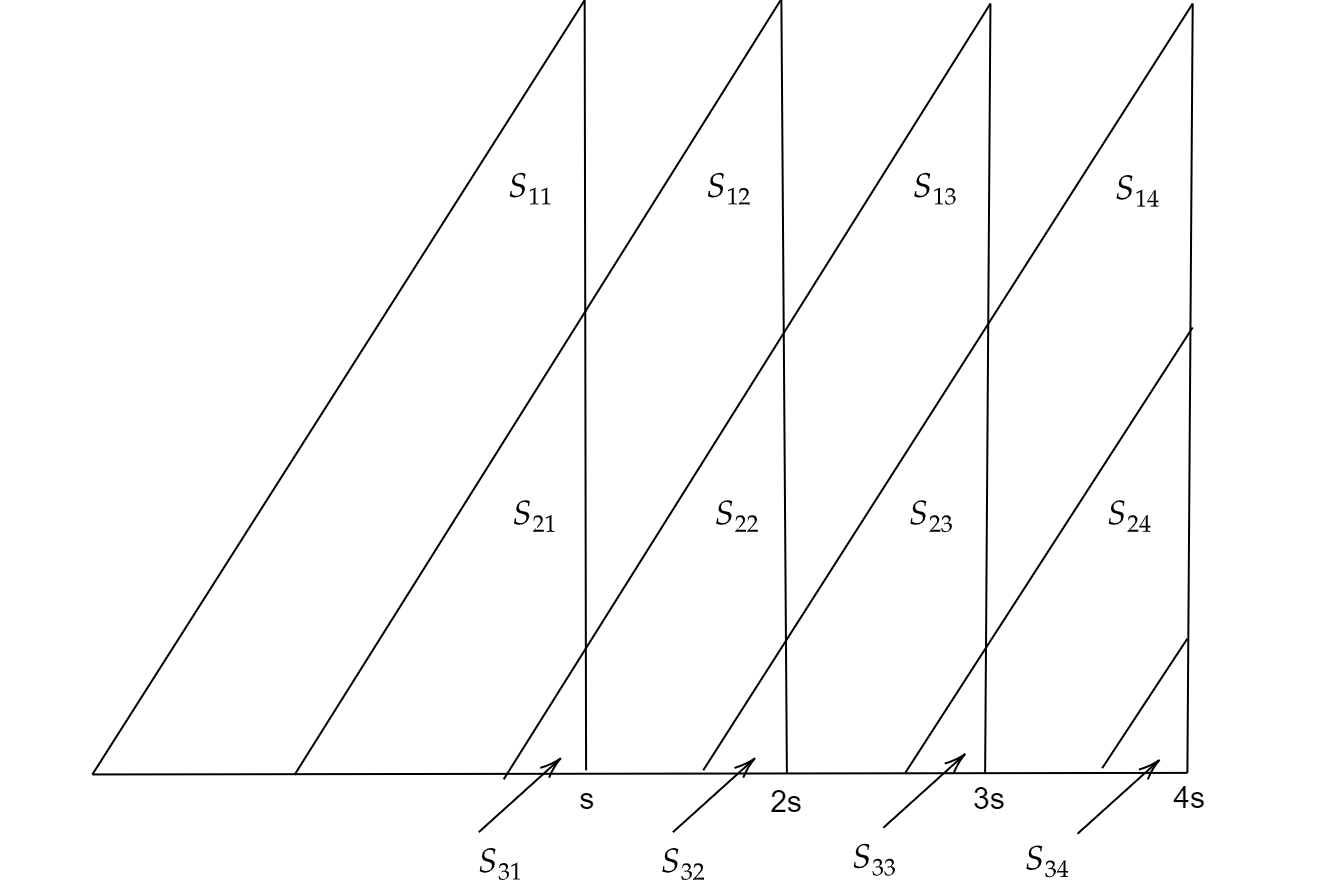}\label{fig:Ng2}}
  \hfill
\subfloat[Illustrates the slice partition $S_{ij}$ of monotonic trawl process with infinite decorrelation time and with $k=4.$ For each $i,$ there are $k-i+1$ slices $S_{ij}.$ ]{ \includegraphics[height = 5.75cm,keepaspectratio ]{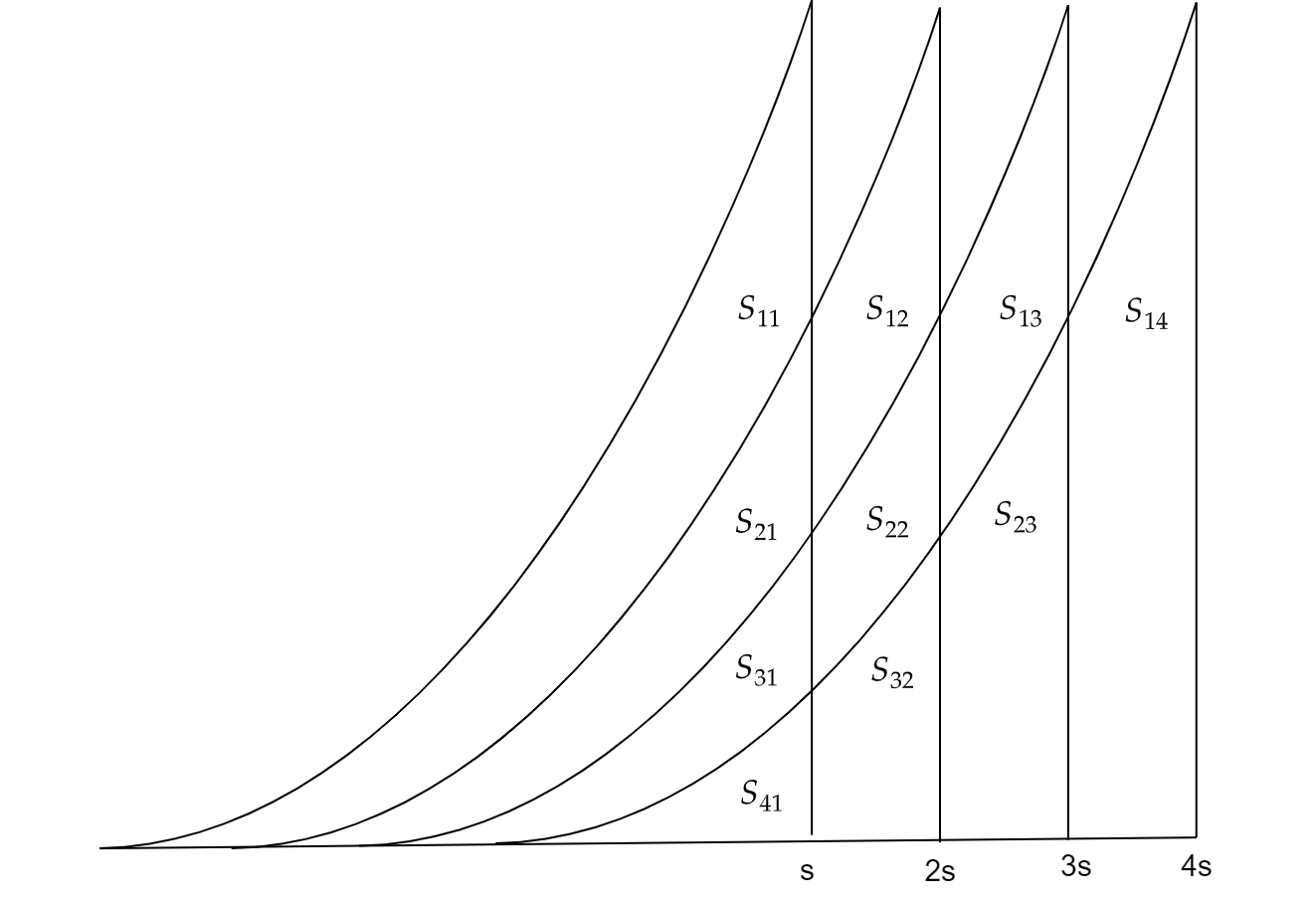}\label{fig:Ng1}} 
 \caption{The slice partition for monotonic trawls with (a) finite decorrelation time and (b) infinite decorrelation time }\label{fig:slice_partition}
\end{figure}

\subsection{Summary of convergence properties and discussion of algorithm requirements}\label{subsection:summary_and_sampler}
In Subsections \ref{subsection:grid_algo}-\ref{subsection:slice_partition_algorithm} we presented the grid discretization, compound Poisson and slice partition algorithms for the simulation of trawl processes; we derived their computational complexity and convergence properties and further discussed efficient implementation methods. Based on this analysis, we can compare the schemes and discuss in which cases one should be used over the other.

The grid discretization method is the most general one and can be employed to approximately simulate not only trawl processes, but also volatility-modulated, kernel weighted trawl processes and ambit fields, as defined in Sections \ref{section:extensions_to_VMKWTP} and \ref{section:extensions_to_ambit_field_simulation}. Nevertheless, it does not share the same convergence properties as the other two algorithms and further, it is the only algorithm which affects the autocorrelation function of the simulated trawl process. To quantify the discretization error by statistical measures, let $\rho_{T,\Delta}(h) \defeq \Corr{\left(L_{T,\Delta}\left(A_t\right),L_{T,\Delta}\left(A_{t+h}\right)\right)}$ be the theoretical autocorrelation function of the trawl process simulated by the grid method with truncation parameter $T$ and step-size $\Delta=\Delta_t=\Delta_x$. Let $\Var{\left(L_{T,\Delta}\left(A\right)\right)}$ be the corresponding variance. Figure \ref{fig:acf_and_var} shows slow convergence of $\rho_{T,\Delta}(h)$ to $\rho(h)$ and of $\Var{\left(L_{T,\Delta}\left(A\right)\right)}$ to $\Var\left(L(A)\right)$ for $A = \{(t,x) \in \R^2 \colon  t < 0 , 0 < x < \phi(t) \}$ with $\phi \colon (-\infty,0] \to \R_{\ge 0}$ given by $\phi(t) = 0.5(1-t)^{-1.5}.$ The simulation error in the finite scale (non-asymptotic) regime depends heavily on the rate of decay of the trawl function $\phi.$ Consequently, accurately simulating long memory trawl processes, such as the ones with $\phi_H(t) \propto (1-t)^{-H}$, $\rho_H(t) = (1-t)^{-H+1}$ for $H \in (1,2)$ becomes increasingly difficult with the grid method as $H \to 1$. Moreover, the discretization error is reflected not only at the simulation level, through the autcocorrelation function and moments of the marginal dsitribution, but also when inferring the parameters of the simulated trawl process, as shown in Figure \ref{fig:grid_gmm}. This is particularly important when carrying out simulation-based inference or when comparing two methods to infer the parameters of the trawl process, such as in \cite{RePEc:aah:create:2021-12}. Performing high accuracy or exact simulations ensures that the inference error is entirely due to the inference method rather than the simulation error.
\begin{figure}[ht]
  \centering
  \subfloat[]{\includegraphics[height = 5.75cm,keepaspectratio ]{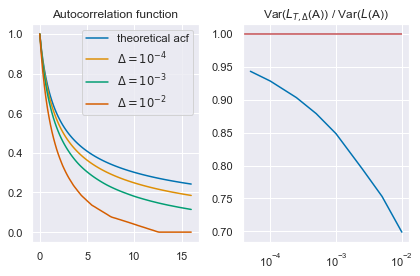}\label{fig:acf_and_var}}
  \hfill
\subfloat[]{\includegraphics[height = 5.75cm,keepaspectratio]{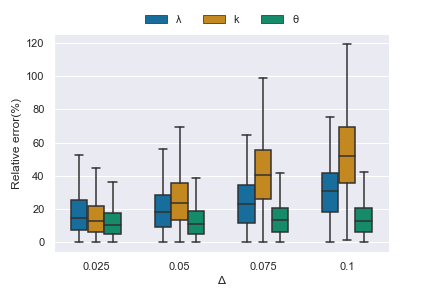}\label{fig:grid_gmm}} 
 \caption{(a) Autocorrelation and variance of the trawl process with trawl function $\phi \colon (-\infty,0] \to \R_{\ge 0}$ given by $\phi(t) = 0.5(1-t)^{-1.5}$, simulated with the grid method for different values of $\Delta.$ For each $\Delta$, we set $T = \Delta^{-0.5}$, which satisfies Theorem \ref{grid_discretisation_thm} and ensures that the discretization error dominates the truncation error. Note that the autocorrelation function and variance are slow to convergence, and that both effects compound on the autocovariance function. By \eqref{eq:mean_L(A)} and \eqref{eq:var_L(A)}, $\Var{\left(L_{T,\Delta}\left(A\right)\right)}/\Var{\left(L\left(A\right)\right)}$ is equal to $\ev\left[L_{T,\Delta}\left(A\right)\right] / \ev\left[L\left(A\right)\right]$ and to the ratio of the area of the cells included in the simulation to the area of the trawl set $A$. (b) Boxplots of relative error for GMM parameter estimates from $1000$ simulations for each of $\Delta = 0.025,\,0.05,\,0.075$ and $0.1$. Here $\tau=0.15,\,k=1000,\,L^{'}\sim \text{Gamma}(k,\theta)$ with $k=2,\,\theta=3$ and $\phi:(-\infty,0] \to \R_{\ge 0}$ given by $\phi(t) = \lambda e^{\lambda t}$ with $\lambda =1.$ We match the first two empirical  moments to infer $k,\,\theta$ and the empirical autocorrelation function at lags $1,\,3,\,5$ to infer $\lambda$. The results appear to be robust to the choice of lags. Note the improvements across all parameters in the relative error as $\Delta \downarrow 0.$}\label{fig:acf_var_gmm}
\end{figure}
The above observations agree with Sections $4$ and $5$  of \cite{nguyen_DG_RG_comparison}, in which the grid method was applied to simulate a Spatio-temporal Ornstein-Uhlenbeck (STOU) process of the form
\begin{equation*}
    Y_t(x) = \int_{A_t(x)} e^{-\lambda(t-s)} L(\mathrm{d}\mathbf{z},\mathrm{d}s).
\end{equation*} 
Further, it was shown in \cite{nguyen_DG_RG_comparison} that tailoring the type of the grid to match the shape of the set $A$ can reduce the effects of the discretization procedure on spatio-temporal correlations and improve convergence. In particular, the authors approximated $Y_t(\mathbf{x})$ by a finite sum on a rectangular grid and on a diamond grid and noticed faster convergence of the diamond grid scheme in some parameter regimes. The error analysis for STOU processes is complicated by the presence of the kernel $e^{-\lambda(t-s)},$ which makes it difficult to distinguish between truncation error, kernel discretization error and L\'evy basis discretization error. All in all, tailoring the grid to each particular trawl shape is time-consuming and the algorithm itself is both memory and computationally expensive. The redundancy can be seen immediately: many of the cells are contained in just one of the sets $A_{\tau},\ldots,A_{k\tau}.$ Thus, instead of simulating all of these cells, we can either discard grid methods and attempt the compound Poisson method, or we could simulate from the slice partition, which can be seen as the optimal grid.  
 
Accompanied by a Gaussian sampler, the compound Poisson scheme can be used to simulate the jump part of the L\'evy basis and is exact when the L\'evy measure $l$ can be normalised to a probability measure. Even if this is not the case, the algorithm has good convergence properties: the approximation $X^\epsilon$ obtained by truncating the jumps at some threshold $\epsilon$ converges in the supremum norm and in $\mathcal{L}^2$ to the trawl process $X$ as $\epsilon \to 0.$ Simulating from the potentially truncated L\'evy measure can be implemented by methods such concave-convex adaptive rejection sampling \cite{ccarj} or MCMC. In general, this may require custom-made samplers or be computationally expensive.

Alternatively, the slice partition method, which can be seen as a natural generalization of the grid discretization, is the only one to result in exact simulation of the trawl process, regardless of the L\'evy measure and trawl set. Further, we can directly trade off the accuracy and speed of the simulation scheme by neglecting slices with small areas, as discussed in Subsection \ref{subsection:slice_partition_algorithm}. For faster, approximate simulation, the choice between the compound Poisson and slice partition schemes depends on the number $k$ of trawls to be simulated, the spacing $\tau$ between the trawls and the difference in the cost of obtaining samples with law $L(A)$ for various values of $\mathrm{Leb}(A)$ versus samples from the potentially truncated L\'evy measure. Generally speaking, if both samples from $L(A)$ and $l$ are available, the slice partition method is suitable for increasing domain simulations where the trawls are sampled at equidistant times, whereas the compound Poisson method is suitable for infill simulations and non-equidistant times.

Finally, note that Algorithms \ref{algo:grid_discretisation} and \ref{algo:slice_partition_trawl_process_convolution} require samples from $L(A)$ for some sets $A$. 
Let $\mu$ be the probability distribution of $L(A)$. In many cases of interest, such as the ones in Subsection \ref{subsection_marginal_distr}, $\mu$ is part of a family of named probability distributions for which efficient samplers are already available. Nevertheless, $\mu$ can also be specified through its cumulant transform $C(\theta,\mu)$.
We show that samples from $\mu$ can be obtained efficiently even in this case, under the mild assumption that the discrete and continuous components of $\mu$ can be separated. Indeed, by the Lebesgue decomposition theorem, $\mu$ can be decomposed into $\mu_{\textrm{ac}} + \mu_{\textrm{sc}} + \mu_{\textrm{d}}$ for some absolutely continuous measure $\mu_{\textrm{ac}}$, some singular continuous measure $\mu_{\textrm{sc}}$ and some discrete measure $\mu_{\textrm{d}} = \sum_{i=1}^\infty p_i \delta_{a_i},$ where $\delta_x$ is the Dirac measure at $x, p_i$ are strictly positive and $a_i$ are non-zero real numbers. To exclude pathological cases, assume that $\mu_{\text{sc}}=0$ and further that given $C(\theta,\mu),$ we can separate $C(\theta,\mu_{\textrm{ac}})$ from $C(\theta,\mu_d) =\sum_{i=1}^\infty e^{i\theta a_i}p_i$. We can sample from $\mu_d$, as it has discrete support. We can also sample from $\mu_{\text{ac}},$ by means of efficiently inverting the Fourier transform of $\mu_{\text{ac}}$, which we discuss next. 

Let $f$ and $F$ be the probability densitity, respectively cumulative distribution functions of $\mu_{\textrm{ac}}.$ The inversion method samples from $\mu_{\textrm{ac}}$ by solving $x = F^{-1}(u)$ where $u$ is generated from the uniform distribution on $(0,1)$. When analytic expressions are not available for $F^{-1},$ a numerical procedure such as Newton-Raphson can be used. This amounts to iterating 
\begin{equation}
    x_{k+1} = x_{k} - \frac{F(x_k)-u}{f(x_k)},
    \label{eq:newton_iterative_procedure}
\end{equation}
from a starting point $x_0$ until a predefined tolerance level is achieved. To account for regions where $F$ is flat, a standard modification of Newton-Raphson can be used, which switches to the bisection method whenever necessary \cite{ridout2009generating}. It remains to approximate the values of $F(x_k)$ and $f(x_k)$ by numerical inversion of the Fourier transform $w \to \int e^{iwx} \mu_{\text{ac}}\left(\mathrm{d}x\right) = e^{C\left(\theta,\mu_{\text{ac}}\right)}$ of $\mu_{\textrm{ac}}.$ There is a rich literature on the topic of Fast Fourier Transform algorithms and quadrature methods for the approximation of probability functions. We mention \cite{hurlimann2013improved}, which provides a comprehensive exposition and error analysis, and \cite{witkovsky2016numerical}, which employs the Gil-Pelaez inversion formula in conjunction with the fast Fourier transform algorithm to draw samples from distributions with known cumulant transform. In the case of distributions supported on the positive or negative real line, such as Gamma and Inverse Gaussian, we can work with the Laplace transform instead of the Fourier transform. \cite{veillette2011technique} demonstrates that the Post-Widder inversion formula can be used to approximate $f$ and $F$ specifically for infinitely divisible distributions. Alternatively, \cite{ridout2009generating} takes a general approach and approximates the Bromwich inversion integral of the Laplace transform of $\mu_{\text{ac}}$ by the trapezium rule and by employing the Euler summation to accelerate convergence. The paper provides an extensive error analysis and an R script, which we adapt for Python and make available at \cite{Leonte_Ambit_Stochastics_2022}. All in all, there are multiple off the shelf algorithms which provide arbitrary accuracy and can aid in the simulation of trawl processes and ambit fields.
\section{Extensions to volatility modulated, kernel-weighted trawl processes}\label{section:extensions_to_VMKWTP}
Trawl processes are stationary, infinitely divisible and ergodic processes which can describe a wide range
of possible serial correlation patterns in data. Nevertheless, many systems of interest are inherently
non-stationary; examples include precipitation data from \cite{slater2020nonstationary} and financial time series. To model such behaviour,
steps are usually taken to transform the initial process into a stationary one, to which standard methods
can be applied. Furthermore, recent empirical work in areas such as environmental sciences in \cite{huang2011class} and energy pricing in \cite{benth_electricity,9966b470f5bd4eb5923297a0a31afb74} shows the presence of volatility clusters,
and thus of stochastic volatility, which cannot be replicated by traditional models. In the following, we
show that both non-stationarity and stochastic volatility can easily be incorporated into the trawl process
framework and the same algorithms can be used for efficient simulation, despite the more complicated
structure. We present the non-stationary case first and then the volatility modulated one. Finally, we
extend the aforementioned methodology to the spatio-temporal case in Section \ref{section:extensions_to_ambit_field_simulation}.

Since trawl processes are given by the L\'evy basis evaluated over the trawl sets, and depending on the
kind of desired non-stationary behaviour, we can change either the trawl sets or the L\'evy basis. At the level of the trawl sets, we can use different trawl functions $\phi_t \colon (-\infty,0] \to \R_{\ge 0}$ to define $A_t = \{(s, x) \colon \ 0 <
x < \phi_t(s-t)\}$, which changes both the correlation structure and marginal distributions. At the level of the L\'evy basis, we can drop the homogeneity assumption of Definition \ref{def:levy_basis}, which allows the distribution of the jumps of $L_j$ to depend on the point $\mathbf{z}=(t,x)$ where they appear, the jumps to have a nonuniform intensity measure, and even for the interaction between the jump values and the jump intensity. Similarly, the distribution of $L_g(A)$ is allowed to depend on the points in $A,$ not just on $\mathrm{Leb}(A).$ The following result from \citep[Proposition 2.1]{rajput1989spectral} reflects the extra flexibility of inhomogeneous L\'evy basis over homogeneous ones.
\begin{lemma}[Cumulant of L\'evy bases]
Let $L$ be a L\'evy basis on $S \subset \R^d, \theta$ a real number and $A$ in $\mathcal{B}_b(S).$ Then
\begin{equation}
   C\left(\theta,L(A)\right) = \int_A \left(i \theta \xi(\mathbf{z}) - \frac{1}{2} \theta^2 a(\mathbf{z}) + \int_{\R}\left(e^{i \theta y}-1-i \theta y\mathbf {1}_{[-1,1]}(y)\right) l(\mathrm{d}y,\mathbf{z}) \right) c(\mathrm{d}\mathbf{z}),\label{eq:cumulant transform}
\end{equation}
\end{lemma}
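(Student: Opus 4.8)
The plan is to reduce the claim to the classical L\'evy-Khintchine theorem applied to the single infinitely divisible variable $L(A)$, and then to promote the resulting characteristic triplet, viewed as a set function of $A$, to genuine measures which can be disintegrated against a common control measure. First, fix $\theta \in \R$ and $A \in \mathcal{B}_b(S)$. Since $L(A)$ is infinitely divisible by Definition~\ref{def:levy_basis}, the classical L\'evy-Khintchine theorem yields a unique triplet $(\xi_A, a_A, l_A)$ with
\begin{equation*}
    C(\theta, L(A)) = i\theta \xi_A - \frac{1}{2}\theta^2 a_A + \int_{\R}\left(e^{i\theta y} - 1 - i\theta y\, \mathbf{1}_{[-1,1]}(y)\right) l_A(\mathrm{d}y).
\end{equation*}
The remaining task is to show that $A \mapsto \xi_A$, $A \mapsto a_A$ and $A \mapsto l_A$ are respectively a signed measure, a positive measure, and a measure-valued measure on $S$, and to represent them through densities and a kernel against one control measure $c$.

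Second, I would establish additivity in $A$. For disjoint $A_1, A_2 \in \mathcal{B}_b(S)$, independent scatteredness (the first property in Definition~\ref{def:levy_basis}) makes $L(A_1)$ and $L(A_2)$ independent, so $C(\theta, L(A_1 \cup A_2)) = C(\theta, L(A_1)) + C(\theta, L(A_2))$ for all $\theta$; by the uniqueness of the L\'evy-Khintchine triplet this transfers componentwise to $\xi_{A_1 \cup A_2} = \xi_{A_1} + \xi_{A_2}$, $a_{A_1 \cup A_2} = a_{A_1} + a_{A_2}$, and $l_{A_1 \cup A_2} = l_{A_1} + l_{A_2}$. The a.s.\ identity $L(\cup_j A_j) = \sum_j L(A_j)$ forces $C(\theta, L(\cup_j A_j)) = \sum_j C(\theta, L(A_j))$ by continuity of characteristic functions, upgrading finite additivity to $\sigma$-additivity of each component. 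Hence $a_{(\cdot)}$ is a positive measure, $\xi_{(\cdot)}$ a signed measure, and, for each Borel $B$ bounded away from the origin, $A \mapsto l_A(B)$ is a measure, so that $l_{(\cdot)}(\mathrm{d}y)$ is a measure-valued measure on $S$.

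Third, I would fix a single $\sigma$-finite control measure on $S$ dominating all three components, for example $c(A) = |\xi|_A + a_A + \int_{\R}(1 \wedge y^2)\, l_A(\mathrm{d}y)$, and then apply the Radon-Nikodym theorem to obtain measurable densities $\xi(\mathbf{z})$ and $a(\mathbf{z})$, together with a disintegration theorem (using that $\R$ is a standard Borel space) to produce a jointly measurable kernel $\mathbf{z} \mapsto l(\mathrm{d}y, \mathbf{z})$ with $l_A(\mathrm{d}y) = \int_A l(\mathrm{d}y, \mathbf{z})\, c(\mathrm{d}\mathbf{z})$. Substituting these representations into the fixed-$A$ formula and interchanging the $y$- and $\mathbf{z}$-integrals by Fubini gives the claimed expression \eqref{eq:cumulant transform}. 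The main obstacle is exactly this last step: proving the existence and joint measurability of the L\'evy kernel $l(\mathrm{d}y, \mathbf{z})$ and checking that it is a genuine L\'evy measure for $c$-almost every $\mathbf{z}$. This disintegration of the measure-valued set function $l_{(\cdot)}$ is the analytic heart of the statement, and the most economical route is to invoke \citep[Proposition 2.1]{rajput1989spectral} directly rather than reconstruct the argument.
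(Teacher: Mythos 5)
Your proposal is essentially correct, but note that the paper does not prove this lemma at all: it is imported verbatim as Proposition~2.1 of \cite{rajput1989spectral}, exactly the citation you identify at the end as the economical route. What you have written is a faithful reconstruction of that reference's argument: apply the classical L\'evy--Khintchine theorem to each $L(A)$, use independent scatteredness plus uniqueness of the triplet to make $A \mapsto (\xi_A, a_A, l_A)$ additive, upgrade to $\sigma$-additivity from the a.s.\ convergence of $L(\cup_j A_j) = \sum_j L(A_j)$, dominate everything by the control measure $c(A) = \abs{\xi}_A + a_A + \int_{\R} (1 \wedge y^2)\, l_A(\mathrm{d}y)$ (this is precisely the control measure of \cite{rajput1989spectral}), and finish with Radon--Nikodym and a disintegration of the measure-valued component. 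Two points deserve more care than your sketch gives them. First, the passage from finite to countable additivity of the individual triplet components is not just ``continuity of characteristic functions'': one needs the standard but nontrivial fact that for an a.s.\ convergent series of independent infinitely divisible random variables the triplet of the sum is the sum of the triplets, and that the convergence of the sets $A_j$ in any order forces unconditional (hence absolute, for the signed part) convergence of $\sum_j \xi_{A_j}$. Second, the disintegration must deliver a kernel $l(\cdot,\mathbf{z})$ that is a genuine L\'evy measure for $c$-almost every $\mathbf{z}$, which requires checking the integrability condition $\int_{\R} (1 \wedge y^2)\, l(\mathrm{d}y,\mathbf{z}) < \infty$ pointwise; you flag this correctly as the analytic heart. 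With those caveats your outline is sound, and deferring the technical core to \citep[Proposition 2.1]{rajput1989spectral} is exactly what the paper does.
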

where $\xi \colon S \to \R$, $a \colon S \to \R_{\ge0}$, $l(\cdot,\mathbf{z})$ is a L\'evy measure on $\R$ for each $\mathbf{z}$ in $S$ and $c$ is a measure on $S,$ called the control measure, such that the above integral is well defined. Similarly to Definition \ref{def:levy_seed}, functions $\xi$ and $a$ give the local drift and variance of the Gaussian component at $\mathbf{z}.$

To each $\mathbf{z} = (t,\mathbf{x})$ in $S,$ we can associate an infinitely divisible random variable $L^{'}(z)$ with
\begin{equation*}
    C\left(\theta,L^{'}\left(\mathbf{z}\right)\right) =  i \theta \xi(\mathbf{z}) - \frac{1}{2} \theta^2 a(\mathbf{z}) + \int_{\R}\left(e^{i \theta y}-1-i \theta y\mathbf {1}_{[-1,1]}(y)\right) l(\mathrm{d}y,\mathbf{z}), 
\end{equation*}
which we call the L\'evy seed at $\mathbf{z}.$ Then \begin{equation*}
C\left(\theta,L(A)\right) = \int_A C\left(\theta,L^{'}\left(z\right)\right) c\left(\mathrm{d}\mathbf{z}\right),
\end{equation*}
and the distribution of $L^{'}\left(\mathbf{z}\right)$ can be seen as the distribution of the infinitesimal $L\left(\mathrm{d}\mathbf{z}\right).$ As opposed to the homogeneous case in Definition \ref{def:levy_seed}, this is now a function of $\mathbf{z}.$ Further, to exclude pathological cases, assume that $c$ has no discrete or singular continuous part. In this case, without loss of generality, we can incorporate the Radon–Nikodym derivative $\frac{\mathrm{d}c}{\mathrm{d}Leb}$ into $\xi(\mathbf{z})$, $a(\mathbf{z})$, $l(\mathrm{d}y,\mathbf{z})$ and set $c = \mathrm{Leb}.$ Then the intensity function of the jumps at $\mathbf{z}$ is given by the total mass $l(\R,\mathbf{z})$.

Despite the great flexibility, we are not aware of settings, apart from theoretical study, where inhomogeneous L\'evy bases have been used. We propose the kernel-weighted trawl processes, which encompasses both the changes in the trawl sets and L\'evy basis by introducing a kernel with respect to a homogeneous L\'evy basis. This results in a more general class of trawl processes with compact notation for which the slice partition and compound Poisson simulation algorithms can be applied with few or no modifications.
\begin{definition}\label{definition:kernel_weighted_trawl_processes}
Let $S \subset \R^d,\ K_t \colon (S,\mathcal{B}_{\mathrm{Leb}}(S)) \to (\R,\mathcal{B}\left(\R\right))$ be a family of measurable mappings and $A \in  \mathcal{B}_{\mathrm{Leb}}(S)$. The kernel-weighted trawl process over the collection of trawl sets $A_t = A + (t,0)$ is given by
\begin{equation*} \label{eq:kernel_weighted_def}
    X_t = \int_{A_t} K_t(\bar{t},\bar{\mathbf{x}}) L(\mathrm{d}\bar{t} ,\mathrm{d} \bar{\mathbf{x}}),
\end{equation*}
\end{definition}
under mild regularity conditions of the kernel $K$ and where the integration is understood in the $\mathcal{L}^0$ framework of \citep[Theorem 2.7]{rajput1989spectral}. We mention that it is enough for the $\mathcal{L}^2\text{-norm}$ of the kernel over the trawl sets $t \to \int_{A_t} K_t^2\left(\bar{t},\bar{\mathbf{x}}\right) \mathrm{d}\bar{t}\mathrm{d}\bar{\mathbf{x}}$ to be bounded on compacts $[0,T]$ for the existence conditions to be satisfied. Similarly to the case of trawl processes, the cumulant transform of $X_t$ is given by \citep[Proposition 2.6]{rajput1989spectral}
\begin{equation}
    C\left(\theta,X_t\right) =C\left(\theta,\int_{A_t} K_t(\bar{t},\bar{\mathbf{x}})  L\left(\mathrm{d}\bar{t},\mathrm{d}\bar{\mathbf{x}}\right)\right)  = \int_{A_t} C\left(\theta K_t(\bar{t},\bar{\mathbf{x}}),L^{'}\right) \mathrm{d}\bar{t}\mathrm{d}\bar{\mathbf{x}},\label{eq:general_cumulant}
\end{equation}
and if $\Var\left(L^{'}\right)$ is finite, the second order structure is given by
\begin{align*}
\ev\left[X_t\right] &= \ev\left[L^{'}\right]  \int_{A_t} K_t(\bar{t},\bar{\mathbf{x}})\mathrm{d}\bar{t}\mathrm{d}\bar{\mathbf{x}}, \\
\Cov\left(X_t,X_s\right) &= \Var\left(L^{'}\right) \int_{A_t \cap A_s} K_t(\bar{t},\bar{\mathbf{x}}) K_s(\bar{t},\bar{\mathbf{x}})\mathrm{d}\bar{t}\mathrm{d}\bar{\mathbf{x}}.
\end{align*}
We discuss the simulation of the kernel-weighted trawl process in increasing order of complexity. In Subsection \ref{subsection:det_volatility}, we consider processes of the form $    X_t = \int_{A_t} K(\bar{t},\bar{\mathbf{x}}) L(\mathrm{d}\bar{t} ,\mathrm{d} \bar{\mathbf{x}})$. The kernel $K(\bar{t},\bar{\mathbf{x}})$ can be thought of as the nonstationary component of the trawl process, or equally as a deterministic volatility. In Subsection \ref{subsection:t-dependent-kernels}, we extend to time dependent kernels  $K_t\left(\bar{t},\bar{\mathbf{x}}\right)$, which allow for a more complicated joint distribution and, in particular, for the autocorrelation function to take both positive and negative values. Further, we discuss how the joint structure can be controlled solely through the kernel, by using a simple shape for the ambit set, such as a rectangle. In this setting, we recover the Brownian and Levy semistationary processes. Finally, in Subsection \ref{subsection:volatility_modulated_trawls}, we discuss modulation of the trawl process by a stochastic volatility. As in Section \ref{section:simulation}, we only discuss the simulation in the case $S= \R^2;$ generalizations to $S \subset \R^d$ for $d>2$ are straightforward.
\subsection{The non-stationary and deterministic volatility cases}\label{subsection:det_volatility}
We can directly generalize the slice partition method to simulate the kernel-weighted trawl process at times $\tau,\ldots,k\tau.$ We decompose the sets $A_\tau,\ldots,A_{k\tau}$ into a collection  $\mathcal{S}$ of disjoint slices $S$, sample $\int_S K(\bar{t},\bar{x})L(\mathrm{d}\bar{t} ,\mathrm{d} \bar{x})$ for all $S \in \mathcal{S}$ and set
\begin{equation*}
    X_{l\tau} = \sum_{S \subset A_{l\tau}}  \int_{S }K(\bar{t},\bar{x}) L(\mathrm{d}\bar{t} ,\mathrm{d} \bar{x}).
\end{equation*}
As discussed in Subsection \ref{subsection:summary_and_sampler}, the above sampling may be done analytically or may involve numerically inverting the cumulant transform by a Fast Fourier Algorithm or the Laplace transform by approximating the inversion integral over a contour in the complex plane, such as the Bromwich contour. In turn, this requires multiple evaluations of the integrand, which is itself given as an integral in \eqref{eq:general_cumulant}. If the integral in \eqref{eq:general_cumulant} is difficult to compute, we can separate the kernel-weighted trawl process into $X_t = \int_{A_t} K(\bar{t},\bar{x}) L_g(\mathrm{d}\bar{t} ,\mathrm{d} \bar{x}) + \int_{A_t }K(\bar{t},\bar{x}) L_j(\mathrm{d}\bar{t} ,\mathrm{d} \bar{x})$ and simulate the Gaussian part via the slice partition and the jump part via the compound Poisson method. For the Gaussian part, $\int_{S }K(\bar{t},\bar{x}) L_g(\mathrm{d}\bar{t} ,\mathrm{d} \bar{x}) \sim \mathcal{N}\left(\mu \int_{S} K(\bar{t},\bar{x})\mathrm{d}\bar{t} \mathrm{d} \bar{x}, \sigma^2 \int_{S} K^2(\bar{t},\bar{x})\mathrm{d}\bar{t} \mathrm{d} \bar{x} \right)$, where $L_g^{'} \sim \mathcal{N}\left(\mu,\sigma^2\right)$ and which requires only one integral evaluation per slice. For the jump part, note that $L_j$ is a discrete measure supported on at most countably many points $(t_i,x_i)$, with jumps $y_i$ distributed according to the L\'evy measure. In particular, we have $L_j = \sum_i y_i \delta_{(t_i,x_i)}.$ Then $ \int_{A_t}K(\bar{t},\bar{x}) L_j(\mathrm{d}\bar{t} ,\mathrm{d} \bar{x}) = \sum_i y_i K(\bar{t_i},\bar{x}_i)$. If the L\'evy measure is infinite, then we truncate at some small jump threshold $\epsilon$, as in Subsection \ref{subsection: cpp}, and $X_t$ is approximated by $\sum_{i : y_i > \epsilon} y_i K(\bar{t_i},\bar{x}_i).$ The uniform convergence of Theorem \ref{theorem:uniform_convergence} is still valid for kernels $K$ for which $t \to \int_{A_t} K^2\left(\bar{t},\bar{x}\right) \mathrm{d}\bar{t}\mathrm{d}\bar{x}$ is bounded on compacts $[0,T]$, as sketched in Remark \ref{remark:uniform_convergence_for_KW}.

Figure \ref{fig:kwt_no_t} displays all the possibilities. Figure \ref{fig:kwt_no_t_poisson} has $L^{'}\sim \text{Poisson}(5)$, i.e.~a finite L\'evy measure for which the compound Poisson approach can be used for exact simulation and an integer-valued kernel $K\left(\bar{t},\bar{x}\right) = \floor{2 \left(1+\bar{x}\right)(\bar{t}- \floor{\bar{t}})}$; note the sharp peaks induced by $\left(1+\bar{x}\right),$ which assigns larger values to points with larger $\bar{x}$ coordinate, i.e. points which 'leave' the trawl sets faster and further note the periodic trend induced by $(t- \floor{t})$. Figure \ref{fig:kwt_no_t_cauchy} has $L^{'} \sim \text{Cauchy}(1)$ and $K\left(\bar{t},\bar{x}\right) = \sin{\bar{t}}$; then $\int_S K\left(\bar{t}\right) \sim \text{Cauchy}\left(\int_S K\left(\bar{t},\bar{x}\right) \mathrm{d}\bar{t}\mathrm{d}\bar{x}\right) $ and the slice partition can be applied without modification. Figure \ref{fig:kwt_no_t_gaussian_gamma} has $L^{'} \sim \mathcal{N}(2,4) + \text{Gamma}(2,2)$ and $K\left(\bar{t},\bar{x}\right) = 1 + 0.1 \bar{t}.$ The numerical evaluation of the cumulant transform from \eqref{eq:general_cumulant} for multiple values of $\theta$ is expensive, hence we use the slice partition method for the Gaussian part and the compound Poisson method for the jump part; note the increasing trend.
 \begin{figure}[h]
        \centering
        \begin{subfigure}[t]{0.32\textwidth}
         \includegraphics[width=\textwidth]{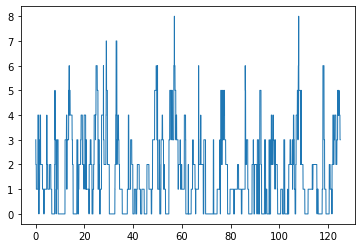}
        \caption{$L^{'}\sim \text{Poisson}(5)$, $K\left(\bar{t},\bar{x}\right) = \floor{2 \left(1+\bar{x}\right)(\bar{t}- \floor{\bar{t}})}$.}
        \label{fig:kwt_no_t_poisson}
        \end{subfigure}
        \hfill
        \begin{subfigure}[t]{0.32\textwidth}
           \includegraphics[width=\textwidth]{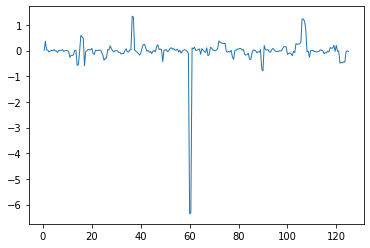}
            \caption{$L^{'} \sim \text{Cauchy}(1)$, $K\left(\bar{t},\bar{x}\right) = 0.1\sin{\bar{t}}$.}
            \label{fig:kwt_no_t_cauchy}
        \end{subfigure}
        \hfill
        \begin{subfigure}[t]{0.32\textwidth}
           \includegraphics[width=\textwidth]{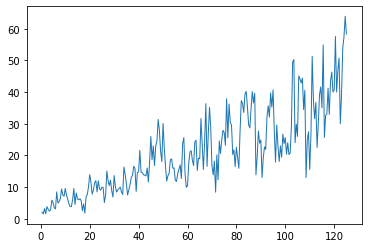}
            \caption{$L^{'} \sim \mathcal{N}(2,4) + \text{Gamma}(2,2)$, $K\left(\bar{t}\right) = 1 + 0.1 \bar{t}.$}
            \label{fig:kwt_no_t_gaussian_gamma}
        \end{subfigure}
        \caption{Simulation of three kernel-weighted trawl processes between $t=0$ and $t=120$ for a triangular trawl set $A = \{(s,x) : -2 < s < 0, 0 < x < 1 + s/2\} $ in a) and exponential trawl set $A = \{(s,x) : s <0, 0 < x < e^{s}\}$ in b) and c); a) is simulated exactly, b) and c) are simulated at equidistant times $\tau,\ldots,250\tau,$ where $\tau=0.5$. Whereas b) just requires one integration per slice to determine the distribution of $\int_S K\left(\bar{t},\bar{x}\right) \mathrm{d}\bar{t}\mathrm{d}\bar{x}$, c) also requires drawing samples from the truncated L\'evy measure $l^\epsilon$ with density $\frac{\mathrm{d}l^{\epsilon}}{\mathrm{d}Leb}\left(y\right) = 2y^{-1} e^{-3y} 1_{y > \epsilon}$, where we choose $\epsilon = 5 \cdot 10^{-3}.$ We draw samples from $l^\epsilon$ by rejection sampling with a convex envelope of the log density, as inspired by \cite{ccarj}.}
     \label{fig:kwt_no_t}
\end{figure}
\subsection{Time dependent kernels}\label{subsection:t-dependent-kernels}
The extra dependence of $K_t(\bar{t},\bar{x})$ on $t$ allows for a different kernel for each trawl set $A_t$ and results in a more general joint distribution.
The increased flexibility is matched by an increase in computational cost: the slice partition method can only be used for the Gaussian part, as dependent random variable sampling is difficult outside the Gaussian framework. Further, the simulated slices are not reusable, in the sense that we need to sample up to $k$ values $\int_{S} K_\tau(\bar{t},\bar{x}) L_g(\mathrm{d}\bar{t} ,\mathrm{d} \bar{x}),\ldots,\int_{S} K_{k\tau}(\bar{t},\bar{x}) L_g(\mathrm{d}\bar{t} ,\mathrm{d} \bar{x})$ for each slice in the partition induced by $X_\tau,\ldots,X_{k\tau}$. 
Similarly, the approximation of $X_t \approx \sum_{i : y_i > \epsilon} y_i K_t(\bar{t_i},\bar{x}_i)$ now requires up to $k$ evaluations of the kernels at each jump location $(t_i,x_i)$, the number of evaluations corresponding to the number of trawl sets $A_\tau,\ldots,A_{k\tau}$ which contain $(t_i,x_i)$. Unlike the general algorithms from Section \ref{section:simulation}, the efficient simulation of kernel-weighted trawl processes depends heavily on the special structure to be exploited in each setting. In particular cases, we are still able to pull back to these off the shelf methods. We study one such example, inspired by \cite{MR3163237}. Consider the generalized Ornstein–Uhlenbeck(OU) process  $X_t = \int_{A_t} e^{\lambda(\bar{t}-t)} L(\mathrm{d}\bar{t},\mathrm{d}\bar{x})$. By the multiplicative property of the exponential function, we can break $e^{\lambda(\bar{t}-t)}$ into $e^{\lambda\bar{t}} e^{-\lambda t},$
simulate $\int_{A} e^{\lambda\bar{t}} L(\mathrm{d}\bar{t},\mathrm{d}\bar{x}), \ldots, \int_{A_{k\tau}} e^{\lambda\bar{t}} L(\mathrm{d}\bar{t},\mathrm{d}\bar{x})$ as discussed in Subsection \ref{subsection:det_volatility} and then multiply the above values by $e^{- \lambda \tau}, \ldots,e^{- \lambda  k \tau} $. The same procedure is applicable if $K$ is given by $\sin{(\bar{t}-t)}$, $\cos{(\bar{t}-t)}$, a linear combination of sines and cosines or if $K$ is well approximated by such a linear combination.

More generally, consider $K_t(\bar{t},\bar{x}) = g(\bar{t}-t)$ for some square integrable $g$ and $X_t = \int_{A_t} g(\bar{t}-t) L(\mathrm{d}\bar{t},\mathrm{d}\bar{x})$ for some L\'evy basis $L$ with finite $\Var{(L^{'})}$. Although this type of kernel-weighted trawl process is stationary, it is strictly more general than a trawl process, and it can exhibit negative correlations. The Fourier expansion methodology from Section $3$ of \cite{MR3163237} can be adapted to show that for a slice $S \subset \{(t,x) : (l-1)\tau \le  t \le l \tau\}$, we have
\begin{align*}
    \int_{S}  g(\bar{t}-t) L(\mathrm{d}\bar{t}\mathrm{d}\bar{x}) &\approx e^{-\lambda l\tau}   \int_S e^{\lambda \bar{t}}  \left(a_0 + \sum_{i=1}^N a_i \cos{\left(n \pi (\bar{t}-l\tau)/\tau\right)}\right) L\left(\mathrm{d}\bar{t},\mathrm{d}\bar{x}\right) \\
    &= e^{-\lambda l \tau} \left( a_0 \int_S e^{\lambda \bar{t}} L\left(\mathrm{d}\bar{t},\mathrm{d}\bar{x}\right)  +  \Re \left(\sum_{i=1}^N a_i e^{- i n l\pi} \int_S e^{ i n \pi \bar{t} / \tau} L\left(\mathrm{d}\bar{t},\mathrm{d}\bar{x}\right)  \right) \right),
\end{align*}
where the approximation is understood in $\mathcal{L}^2,\,N$ is the number of terms in the approximation, $a_0,\ldots,a_N$ are constants and $\lambda$ is a parameter to be calibrated. In the above formula, we require $N$ evaluations per slice, as compared to up to $k$ evaluations. This approximation removes the $t$ dependency of the kernel and works well as long as $N$ is smaller than $k$. The trade off is that the convergence is just in $\mathcal{L}^2$, and not uniformly on compacts. An observation on the type of kernels to be used is in order. Since the autocorrelation function can be modelled through the shape of the trawl set $A,$ the kernel can in principle be chosen from a family of straightforward functions, with which we can work easily, with the aim of inducing drift, seasonal behaviour or deterministic volatility. Consequently, for most practical purposes, it is enough to consider kernels to which the above simplifications apply. 

Finally, note that the formulation in Definition \ref{definition:kernel_weighted_trawl_processes} and the $\mathcal{L}^0$ integration framework of \cite{rajput1989spectral} do not require $A$ to be bounded. Indeed, provided the kernel integrability conditions are satisfied, we can choose a trawl set with a simple geometry, such as an unbounded rectangle $A =\{(s,x) : s <0, 0 < x < 1\}$ and control the joint distribution solely through the kernel. If $K$ is chosen to depend only on $t$ and $\bar{t}$, we recover the L\'evy semistationary processes $X_t = \int_{-\infty}^t K_t(\bar{t})H(\mathrm{d}\bar{t})$, where $H$ is a two-sided L\'evy process with $H_1 \stackrel{d}{=} L^{'}$ and which are extensively studied in \citep[Chapters 1,2 and 10]{ambit_book}.

\subsection{Volatility modulated trawls}
\label{subsection:volatility_modulated_trawls}
A stochastic volatility $\sigma$ can easily be added to the trawl process framework.
\begin{definition}[Kernel-weighted, volatility modulated trawl processes]
\label{definition:kernel_weighted,volatility_modulated_trawl_processes}
Let $S \subset \R^d,\ K_t \colon (S,\mathcal{B}_{\mathrm{Leb}}(S)) \to (\R,\mathcal{B}\left(\R\right))$ be a family of measurable mappings and $A \in \mathcal{B}_{\mathrm{Leb}}(S)$. Let $\sigma$ be a stochastic process on the same probability space as $L$. The kernel-weighted, volatility modulated trawl process over the collection of trawl sets $A_t = A + (t,0)$ is given by
\begin{equation*} \label{eq:kernel_weighted__vol_modulated_def}
    X_t = \int_{A_t} K_t(\bar{t},\bar{\mathbf{x}}) \sigma(\bar{t}) L(\mathrm{d}\bar{t} ,\mathrm{d} \bar{\mathbf{x}}).
\end{equation*}
\end{definition}
In general, if $\sigma$ and $L$ are dependent, the integration is understood in the sense of \cite{walsh} and \cite{bichteler1983random}. We restrict our attention to the case in which $\sigma$ and $L$ are independent. Then the integration can be defined conditionally on $\sigma,$ using the same $\mathcal{L}^0\text{-framework}$ of \citep[Theorem 2.7]{rajput1989spectral}, as for Definition \ref{definition:kernel_weighted_trawl_processes}. 
The second order structure is given by
\begin{align}
\ev\left[X_t | \mathcal{F}_\sigma^t \right] &= \ev\left[L^{'}\right]  \int_{A_t} K_t(\bar{t},\bar{\mathbf{x}}) \sigma(\bar{t}) \mathrm{d}\bar{t}\mathrm{d}\bar{\mathbf{x}},\label{eq:conditional_mean} \\
\Cov\left(X_t,X_s\right | \mathcal{F}_\sigma^t \vee \mathcal{F}_\sigma^s) &= \Var\left(L^{'}\right) \int_{A_t \cap A_s} K_t(\bar{t},\bar{\mathbf{x}}) K_s(\bar{t},\bar{\mathbf{x}}) \sigma^2(\bar{t}) \mathrm{d}\bar{t}\mathrm{d}\bar{\mathbf{x}},\label{eq:conditional_var}\\
   C\left(\theta,X_t\right | \mathcal{F}_\sigma^t) &=C\left(\theta,\int_{A_t} K_t(\bar{t},\bar{\mathbf{x}})  \sigma{(\bar{t}}) L\left(\mathrm{d}\bar{t},\mathrm{d}\bar{\mathbf{x}}\right)\right)  = \int_{A_t} C\left(\theta K_t(\bar{t},\bar{\mathbf{x}}) \sigma{(\bar{t})},L^{'}\right) \mathrm{d}\bar{t}\mathrm{d}\bar{\mathbf{x}}, \label{eq:conditional_cumulant}
\end{align}
where $\mathcal{F}_\sigma^t$ is the $\sigma$ algebra generated by $\{\sigma(\bar{t}) : (\bar{t},\bar{\mathbf{x}}) \in A_t \text{ for some }\bar{\mathbf{x}}\}$. The unconditional structure follows by integrating taking the expectation over $\sigma$. An extensive presentation can be found in \citep[Chapter 5.3.2.1]{ambit_book}.  We restrict our attention to the case $S = \R^2$.

The observations from Subsection \ref{subsection:t-dependent-kernels} still apply: the slice partition can be used for the Gaussian part and the compound Poisson method for the jump part; in some cases, the $t$ dependence can be removed by means of a Fourier approximation. As explained before, the difficulty in sampling by numerically inverting the cumulant is that every step in the iterative procedure of \eqref{eq:newton_iterative_procedure} requires evaluations of the cumulant from \eqref{eq:conditional_cumulant} for multiple values of $\theta$. In turn, each of these evaluations requires the values of $\sigma$ and $K$ for multiple arguments. In this situation, it is usually more expensive to simulate the volatility $\sigma$ than to evaluate the kernel $K$, hence inverting the cumulant may not be practical. The difference between the general case and that of a Gaussian L\'evy basis $L_g$ is that conditionally on $\sigma$, the distribution of $\int_S K(\bar{t},\bar{x}) \sigma(\bar{t}) L(\mathrm{d}\bar{t},\mathrm{d}\bar{x})$ is fully specified by the two integrals $\int_S K(\bar{t},\bar{x}) \sigma(\bar{t}) \mathrm{d}\bar{t}\mathrm{d}\bar{x}$ and $\int_S K^2(\bar{t},\bar{x}) \sigma^2(\bar{t}) \mathrm{d}\bar{t}\mathrm{d}\bar{x}$, and sampling does not require other integral evaluations. The underlying property is that of closure under linear combinations and is satisfied by the family of L\'evy stable distributions from Example \ref{example:stable}. In particular, if $L^{'}\sim \textrm{Stable}(\alpha,\,\beta,\,c,\,\mu)$ with $\alpha \neq 1$, then $\int_S K(\bar{t},\bar{x}) \sigma(\bar{t}) L(\mathrm{d}\bar{t},\mathrm{d}\bar{x}) | \mathcal{F}_\sigma^t \sim \mathrm{Stable}(\alpha,\tilde{\beta},\tilde{c},\tilde{\mu})$, where $\tilde{c} = c \left(\int_S \abs{K(\bar{t},\bar{x}) \sigma(\bar{t})}^\alpha \mathrm{d}\bar{t}\mathrm{d}\bar{x}\right)^{1/\alpha}$, $\tilde{\beta} = \frac{\beta}{\tilde{c}} \left(\int_S \abs{K(\bar{t},\bar{x}) \sigma(\bar{t})}^\alpha \sign{\left(K(\bar{t},\bar{x}) \sigma(\bar{t})\right)} \mathrm{d}\bar{t}\mathrm{d}\bar{x}\right)^{1/\alpha}$ and $\tilde{\mu} = \int_S K(\bar{t},\bar{x}) \sigma(\bar{t}) \mathrm{d}\bar{t}\mathrm{d}\bar{x}$, provided the integrals are finite. In the definition of $\tilde{\beta}$, we write $x^\alpha $ for $\sign{(x)} \abs{x}^{1/\alpha}$. A similar, simpler formula holds for $\alpha=1$, which corresponds to the the Cauchy distribution translated by a location parameter. Consequently, as longs as $L^{'}$ has a stable distribution and the terms $\tilde{c}$, $\tilde{\beta}$, $\tilde{\mu}$ can be approximated well, inverting the cumulant is feasible, by first simulating $\sigma$ and then sampling $X$ conditionally on $\sigma$. Note that the restriction on the distribution of $L^{'}$ is not significant. Although the distribution of $X_t$ conditional on $\sigma$ is Stable, the unconditional distribution of $X_t$ does not have to be Stable.

The flexible marginal distribution and autocorrelation structure, as well as the computational efficiency and convergence properties of the simulation schemes in Section \ref{section:simulation} recommend the trawl process as a candidate for the stochastic volatility component. Thus we model $\sigma^2$ with a trawl process. Figure \ref{fig:volatility_modulated_trawl} shows such an example, where $X_t$ is conditionally Gaussian (which corresponds to $\alpha=2$ in the family of Stable distributions) on the volatility. More precisely, we use a Gaussian L\'evy basis $L$ and model $\sigma^2$ as a stationary trawl process with long memory and Inverse Gaussian marginal distribution. 
\begin{figure}[h]
  \centering
  \subfloat[Realisation of $\sigma^2$, modelled as a trawl process with long memory and Inverse Gaussian marginal]{\includegraphics[height = 5.75cm,keepaspectratio ]{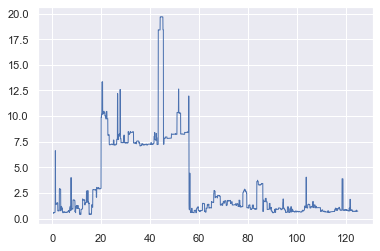}\label{fig:volatility_trawl}}
  \hfill
\subfloat[Realisation of the volatility modulated trawl process $X_t = \int_{A_t} \sigma(\bar{t}) L(\mathrm{d}\bar{t},\mathrm{d}\bar{x})$]{ \includegraphics[height = 5.75cm,keepaspectratio ]{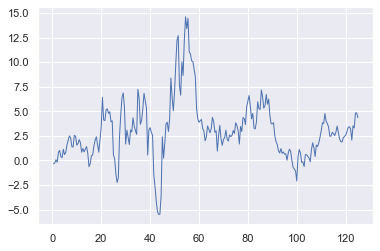}\label{fig:actual_volatility modulated trawl}}
 \caption{b) shows the simulation of a volatility modulated trawl process $X_t = \int_{A_t} \sigma(\bar{t}) L(\mathrm{d}\bar{t},\mathrm{d}\bar{x})$ with $L^{'}\sim \mathcal{N}(1,1)$ and $A =\{(s,x) : 0 < x < 0.25  e^{0.25 s}\}$ at times $\tau,\ldots,250\tau$, where $\tau = 0.5$. We use the slice partition method for L\'evy stable distributions. Conditionally on the values of $\sigma,$ $\int_S \sigma_s \mathrm{d}L_s \sim \mathcal{N}(\int_S \sigma(\bar{t}) \mathrm{d}\bar{t}\mathrm{d}\bar{x},\, \int_S \sigma^2(\bar{t}) \mathrm{d}\bar{t}\mathrm{d}\bar{x})$. To approximate these integrals we simulate the volatility on a fine equidistant grid. We model $\sigma^2$ as a trawl process with trawl set $B =\{(s,x): 0 < x < 0.5  (1-s)^{0.5} \}$ and L\'evy seed $L_\sigma^{'}\sim \text{Inverse Gaussian}(2,1)$, which we simulate from $-10$ to $250 \tau$ in steps of $\tilde{\tau} = 0.05$. A realisation of $\sigma^2$ is shown in a). The trawl functions are normalized such that the areas of $A$ and $B$ are $1$.}\label{fig:volatility_modulated_trawl}
\end{figure}
\section{Extensions to ambit field simulation}
\label{section:extensions_to_ambit_field_simulation}
So far we concentrated on the simulation of trawl processes, which amounts to evaluating the L\'evy basis, potentially modulated  by a kernel and stochastic volatility, over a collection of time-indexed trawl sets $A_t$. Note again that the trawl sets include an abstract spatial dimension in addition to the temporal dimension, which allows for a flexible joint distribution and autocorrelation function of the trawl process. A natural extension of the trawl process to spatio-temporal fields is the simple ambit field $Y$ given by
\begin{equation*}
    Y_t(x) = L\left(A_t(\mathbf{x})\right),
\end{equation*}
and more generally, the ambit field $Y$ given by
 \begin{equation*}
     Y_t(\mathbf{x}) = \int_{A_t(\mathbf{x})} K_{t,\mathbf{x}}\left(\bar{t},\bar{\mathbf{x}}\right) \sigma(\bar{t},\bar{\mathbf{x}}) L(\mathrm{d}\bar{t},\mathrm{d}\bar{\mathbf{x}}).
 \end{equation*}
 Ambit fields have already been used in turbulence and tumor growth modelling in \cite{barndorff2015intermittent}, and also outside spatio-temporal statistics, for example in electricity futures pricing \cite{barndorff-nielsen_benth_veraart_2014}.
 
We start Subsection \ref{subsection:simple_ambit_fields} by showing that the slice partition algorithm can be generalized to the simulation of simple ambit fields. As opposed to simulation via compound Poisson processes, which leads to an increased cost per simulation, the slice partition method can be implemented in a way such that the calculations required for higher accuracy only need to be performed once, before the simulation, leading to lower computational cost across simulations. Finally, we expand on the simulation of ambit fields in Subsection \ref{subsection:ambit_field_simulation}. 
\subsection{The slice partition method for simple ambit fields}
\label{subsection:simple_ambit_fields}
As in the trawl case, the autocovariance and autocorrelation structures for the simple ambit field $Y_t(\mathbf{x})= L(A_t(\mathbf{x}))$
 \begin{align*}
\Cov(Y_{t_1}(\mathbf{x_1}),Y_{t_2}(\mathbf{x_2})) &=  \mathrm{Leb}\left(A_{t_1}(\mathbf{x})\cap A_{t_2}(\mathbf{x_2})\right)  \Var(L^{'})\\ \Corr(Y_{t_1}(\mathbf{x_1}),Y_{t_2}(\mathbf{x_2})) &= \mathrm{Leb}\left((A_{t_1}(\mathbf{x_1})\cap A_{t_2}(\mathbf{x_2})\right) / {\mathrm{Leb}\left(A\right)} 
 \end{align*}
 and the cumulant transform $C\left(\theta,Y_t(\mathbf{x})\right) = \mathrm{Leb}(A) C(\theta,L^{'})$ present simple ambit fields as a tractable approach to modelling spatio-temporal data. We present the simulation algorithm for $d=2$ dimensions, as generalizing to more spatial dimensions is straightforward. The goal is then to simulate the simple ambit field $Y_t(x)$ at coordinates $\{(j\tau,ix): 1 \le j \le k_t, 1 \le i \le k_s\}$.

For ease of notation, let $A_{ij} = A_{j \tau}( i x)$ and define the lexicographic ordering $(m,n) \prec (i,j)$ if $m <  i$ or $m=i$ and $n< j,$ with equality when $m=i$ and $n=j.$ Intuitively, we order trawls from left to right and from bottom to top. 
For ease of presentation, assume that the trawl set is bounded; we relax this assumption in Appendix \ref{appendix:B2}. We say a set $S\subset \mathbb{R}^2$ is a minimal slice if it can be represented as
\begin{equation*}
    S = \bigcap_{(i,j) \in K} A_{ij}
\end{equation*}
for some indicator set $K \subset \mathbb{Z}^2$ and further $S \cap A_{ij} = \emptyset$ for any $(i,j) \not\in K.$ The ambit sets can be partitioned into disjoint minimal slices;  thus, to simulate the simple ambit field, it is enough to simulate all the minimal slices and keep track of which ambit sets $A_{ij}$ each slice belongs to. In this algorithm, we simulate L\'evy basis over trawls sets from left to right along each row, moving over rows from bottom to top. At step $(k,l),$ we simulate slices that belong to $A_{kl}$ and have empty intersection with trawls that are left and on the same row as $A_{kl},$ or bottom of $A_{kl}.$

Formally, define $I_t = \ceil{\frac{-T}{\tau}}$ and $I_s =  \ceil{\frac{\phi(0)}{x}}.$ Note that the sets $A_{ij}$ and $A_{i^{'}j^{'}}$ are disjoint whenever $|i - i^{'}| \ge I_s$ or $|j-j^{'}|\ge I_t.$ Thus each minimal slice $S$ can be represented by a minimal pair $(k,l) = \min{\{(i,j)\colon S \subset A_{ij}\}},$ where the minimum is understood in the sense of $\prec,$ and by an $I_s \times I_t$ indicator matrix $K^S$, where
\begin{equation*}
\left(K^S\right)_{ij} = 
\begin{cases}
    1 \text{ if } S \subset A_{i+k-1,j+l-1}, \\
    0 \text{ otherwise}.
    \end{cases}
\end{equation*}
Let $\mathcal{S}_{kl}$ be the set of minimal slices $S$ whose minimal pair is $(k,l).$ Then (see Figure \ref{fig:slices_coloured})
\begin{equation*}
    A_{kl} \ \backslash \bigcup_{(i,j) \prec (k,l)} A_{ij} = \bigcup_{S \in \mathcal{S}_{kl}} S.
\end{equation*}

\begin{figure}
    \centering
    \vspace{-2cm}
    \includegraphics[scale=0.7]{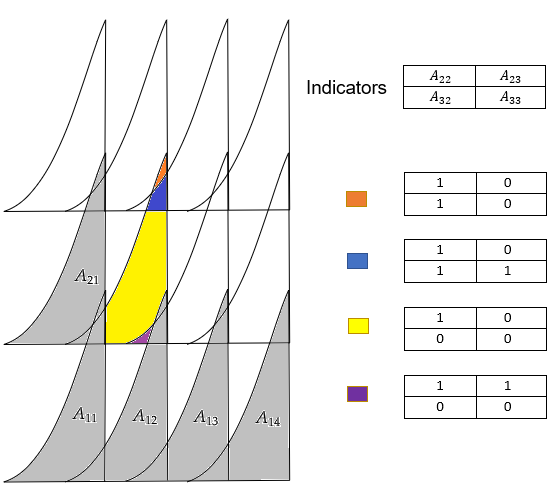}
    \caption{Assume that at step $(k,l)$ with $k=l=2,$ we have already simulated the minimal slices which are situated left and on the same row as $A_{22},$ or bottom of $A_{22}.$ These slices are shaded in gray and have been simulated at previous steps. 
    The parts coloured in orange, blue, yellow and purple illustrate the minimal slices of $\mathcal{S}_{22},$ together with the corresponding indicator sets $K^S$. These four minimal slices are the slices to be simulated at step $(2,2).$ Simulating the simple ambit field is then equivalent to simulating all the minimal slices $S$ and keeping track of the indicators $K^S.$}
    \label{fig:slices_coloured}
\end{figure}
By the translation invariance property of the grid of ambit sets $\{A_{ij}\}_{-\infty \le i,j \le \infty}$, \  $\mathcal{S}_{kl}$ and $\mathcal{S}_{k^{'}l^{'}}$ contain the same number of minimal slices $S,$ with the same Lebesgue measures and with indicators $K^S$ which are translated by $k-k^{'}$ and $l-l^{'}.$ Hence, to simulate the simple ambit field, it is enough to determine the Lebesgue measures and indicators of minimal slices in $S_{11}.$ We can identify the minimal slices $S$, approximate their Lebesgue measures and corresponding indicators $K^S$ via  Monte Carlo methods: sample points uniformly at random, keep track of the indicators $K^S,$ count how many points are in each minimal slice and divide the count by the total number of points to estimate the areas, as described in Algorithm \ref{algo:slice_estimation_compact_ambit_set}.

To account for boundary effects and simulate $\{A_{ij}\}_{\substack{1 \le i \le I_s \\ 1 \le j \le I_t }}$ exactly, we simulate the minimal slices in
 \begin{equation*}
     \bigcup_{\substack{-I_s+2 \le k \le k_s \\  -I_t+1 \le l \le k_t }} \mathcal{S}_{kl},
 \end{equation*}
 which means that we also simulate subsets of $A_{ij}$ with $-I_s+2 \le i \le 0, k_s < i \le k_s + I_s-1, \ -I_t+2 \le j \le 0,$ and $ k_t < j \le k_t+I_t-1.$ We discard the extra values. The pseudocode for the slice partition method for simple ambit fields is given in Algorithm \ref{algo:monotonic_ambit_field_simulation}.
Note that the calculations required for a higher accuracy can be performed ahead of the simulation. Regardless of the number of simulations, we only have to perform this procedure once, leading to an amortised computational cost across simulations. Figure \ref{fig:simple_ambit_field_simulations} displays two simulations of simple ambit fields. The above procedure can be generalized to unbounded trawls sets, as detailed in Appendix \ref{appendix:B2}. Just as in the slice partition algorithm for trawl processes, we have more slices to take into account and the computational complexity increases.
 \begin{figure}[h]
        \centering
        \begin{subfigure}[t]{0.32\textwidth}
         \includegraphics[width=\textwidth]{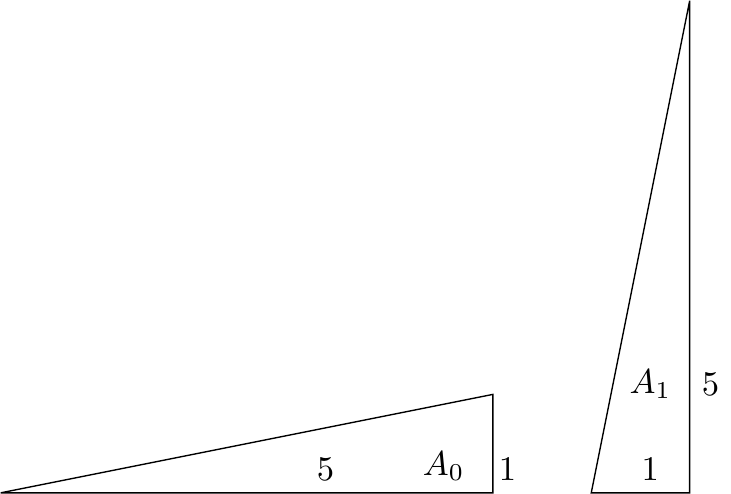}
        \caption{Ambit sets $A_0$ and $A_1$ with side lengths of $1$ and $5$ units.}
        \end{subfigure}
        \hfill
        \begin{subfigure}[t]{0.32\textwidth}
           \includegraphics[width=\textwidth]{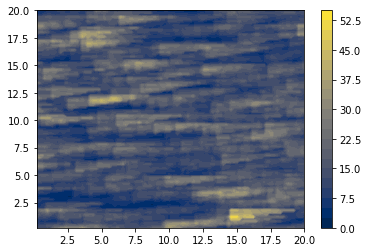}
            \caption{Simulation of a simple ambit field with ambit set $A_0$.}
        \end{subfigure}
        \hfill
        \begin{subfigure}[t]{0.32\textwidth}
           \includegraphics[width=\textwidth]{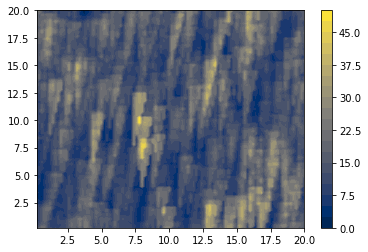}
            \caption{Simulation of a simple ambit field with ambit set $A_1$.}
        \end{subfigure}
        \caption{Simulations of two simple ambit fields with triangular ambit sets $A_0$ and $A_1$ and with $\tau= x = 0.2,$ $k_t=k_s=100$, $L^{'}\sim \text{Gamma}(2,3)$ and $N=10^8$ samples for the slice area estimation procedure. The time, space axes are the horizontal, respectively vertical ones. Note that for ambit sets such as $A_0$ and $A_1$, the minimal slices can be worked out by hand and the simulation is then exact. The values of the simulated ambit fields are given by the colorbars of b) and c). The shapes of the ambit sets induce qualitatively different  spatio-temporal autocorrelation structures.}
     \label{fig:simple_ambit_field_simulations}
\end{figure}
\begin{algorithm}[h]
   \caption{Slice estimation for bounded, monotonic ambit sets}\label{algo:slice_estimation_compact_ambit_set}
  \begin{algorithmic}[1]
        \Require Trawl function $\phi$ and $T<0$ with $\phi(T)=0;$ number of samples $N$ to be used in the estimation; sampler $U(a,\,b,\,n)$ which returns an $n$ dimensional vector of iid samples from the uniform distribution on $[a,b]$
    \Ensure Hash table $H[\text{key:value}]$ mapping the keys, $I_s \times I_t$ indicator matrices of minimal slices, to their correspoding values, given by the estimated Lebesgue measures of the minimal slices.
\Function{SliceEstimation}{$U,\,N,\,\phi,\,T,\,I_t,\,I_s,\,\tau,\,x$}
  \State $t \sim U\left(0,\tau,N\right)$
      \State $x \sim U\left(x,x+\phi(0),N\right)$ 
      \For{$l \in \{1,\ldots,N\}$}
        \If{$x[l] > \phi(t[l]-\tau)$\AND
        $0 < x[l] - x < \phi(t[l]-\tau)$ }
        \Comment{\parbox[t]{.38\linewidth}{Exclude points which belong to $A_{01}$ and points which do not belong to $A_{11}$}}
        \State $I \gets \text{zeros}(I_s,I_t)$ \Comment{Indicator matrix for a minimal slice}
        \For{$i \in \{1,\ldots,I_s\}$}
         \For{$j \in \{1,\ldots,I_t\}$}
         \State \algorithmicif \ $x[l]- ix < \phi(t[l]-j\tau)$ \algorithmicthen \ $I[i,j] \gets 1$ \Comment{Check if $(t[l],x[l]) \in A_{ij}$}
        \EndFor
        \EndFor
        \State \algorithmicif\ $I \in \text{keys}(H)$ \algorithmicthen\ $H[I] \gets H[I] + 1$ \algorithmicelse\ $H[I] \gets  1$
        \EndIf
        \EndFor
  \For{$I \in \text{keys}(H)$}
  \State $H[I] \gets  H[I] \tau \phi(0) / N$ \Comment{Approximate the area of a minimal slice $S$ with indicator $I$}
  \EndFor
 \Return $H$
 \EndFunction
  \end{algorithmic}
\end{algorithm}
\begin{algorithm}[h]
  \caption{Slice partition for bounded, monotonic ambit sets}\label{algo:monotonic_ambit_field_simulation}
  \begin{algorithmic}[1]
      \Require Trawl function $\phi$ and $T<0$ with $\phi(T)=0;$ number of samples $N$ to be used in the SliceEstimation procedure; sampler $U(a,b,n)$ which returns an $n$ dimensional vector of iid samples from the uniform distribution on $[a,b];$ sampler $T(\textrm{area})$ which returns one sample from a collection of independent random variables with the same law as $L(A),$ where $\mathrm{Leb}(A) = \textrm{area};$ number of ambit sets to be simulated $k_t$, $k_s$, distance $\tau$,$x$ between ambit sets.
      \Ensure $k_s \times k_t$ matrix $Y$ containing the values of the simulated simple ambit field at coordinates $\{(j\tau,ix): 1 \le j \le k_t, 1 \le i \le k_s\}$
      \Function{SlicePartition}{$U,\,T,\,N,\,\phi,\,k_t,\,k_s,\,\tau,\,x$}
      \State $I_t, I_s \gets \ceil{\frac{-T}{\tau}},   \ceil{\frac{\phi(0)}{x}} $
  \State $ H \gets \Call{SliceEstimation}{U,\,N,\,\phi,\,I_t,\,I_s,\,\tau,\,x}$
  \State $Y \gets \text{zeros}(k_s+2I_s-2,\,k_t+ 2I_t-2).$
  \Comment{Corresponding to the matrix  $L\left(A_{ij}\right)_{\substack{-I_s+2 \le i \le k_s+I_s-1 \\ -I_t+2 \le j \le k_t + I_t-1 }}$}
  \For{$k \in \{1,\ldots,k_s+I_s-1\}$}
\For{$l \in \{1,\ldots,k_t+I_t-1\}$}
\For {$I \in \text{keys}(H)$}
\State $c \gets T(\text{area})$\Comment{Simulate $L(S)$ for each $S \in \mathcal{S}_{kl}$} 
 \State $Y[k:k+I_s-1,l:l+I_t-1] \pluseq c I.$ 
  \EndFor
 \EndFor
  \EndFor
  \Return $Y[I_s:I_s + k_s-1,I_t:I_t+k_t-1].$  \Comment{Corresponding to the matrix $L\left(A_{ij}\right)_{\substack{1 \le i \le k_s \\  1 \le j \le k_t}}$}
\EndFunction
  \end{algorithmic}
\end{algorithm}
\subsection{Ambit field simulation}
\label{subsection:ambit_field_simulation}
We turn our attention to the general case of ambit fields $Y$ defined by 
\begin{equation*}
     Y_t(x) = \int_{A_t(x)} K_{t,x}\left(\bar{t},\bar{x}\right) \sigma(\bar{t},\bar{x}) L(\mathrm{d}\bar{t},\mathrm{d}\bar{x}),
 \end{equation*}
where $K$ is a deterministic kernel and $\sigma$ is a stochastic volatility field. We aim to simulate $Y_t(x)$ at coordinates $\{(j\tau,ix): 1 \le j \le k_t, 1 \le i \le k_s\}$. Assume $\sigma$ and $L$ are independent. Then the second order structure and cumulant transform follow from Equations \eqref{eq:conditional_mean}-\eqref{eq:conditional_cumulant}. Similarly to Subsection \ref{subsection:volatility_modulated_trawls}, we advocate for the use of a simple ambit fields for the stochastic volatility term $\sigma$, which can be simulated efficiently. Conditional on the values of $\sigma$, we can simulate $Y$.

Analogous to L\'evy semistationary processes, we can choose ambit sets with simple geometries, such as unbounded rectangles $A_t(x) = \{(\bar{t},\bar{x}): \bar{t} < t, x<\bar{x} < x+1\}$, and control the joint structure solely through the kernel. In this case, we can identify the minimal slices by hand. In general, if we use a more complicated ambit set and a simple kernel, this is not possible and we identify minimal slices by their indicator matrices, as in Subsection \ref{subsection:simple_ambit_fields}. In both cases, we separate $L$ into $L_g$ and $L_j$ and simulate their contributions independently, conditionally on $\sigma$. The jump part $Y_t(x) = \int_{A_t(x)} K_{t,x}\left(\bar{t},\bar{x}\right) \sigma(\bar{t},\bar{x}) L_j(\mathrm{d}\bar{t},\mathrm{d}\bar{x})$ can be approximated up to small jump truncation by a discrete sum, as in Section \ref{section:extensions_to_VMKWTP} and only requires access to the values of the kernel $K$ and volatility $\sigma$. The Gaussian part $Y_t(x) = \int_{A_t(x)} K_{t,x}\left(\bar{t},\bar{x}\right) \sigma(\bar{t},\bar{x}) L_g(\mathrm{d}\bar{t},\mathrm{d}\bar{x})$ requires computing the means and covariances
\begin{align*}
   & \ev\left[L^{'}\right]\int_{A_t(x)} K_{t,x}(\bar{t},\bar{x})\sigma(\bar{t},\bar{x}) \mathrm{d}\bar{t}\mathrm{d}\bar{x},\\
    &\Var{\left(L^{'}\right)}\int_{A_{t_1}(x_1)\cap A_{t_2}(x_2)} K_{t_1,x_1}(\bar{t},\bar{x})K_{t_2,x_2}(\bar{t},\bar{x})\sigma^2(\bar{t},\bar{x})\mathrm{d}\bar{t}\mathrm{d}\bar{x},
\end{align*} which can be approximated similarly to Algorithm \ref{algo:slice_estimation_compact_ambit_set}, by simulating points equidistantly or uniformly at random, evaluating the kernel and volatility at these points and taking the mean. 
 
 Further simplifications are possible in particular cases. If the kernel satisfies $K_{t,x}(\bar{t},\bar{x}) = g(\bar{t}-t,\bar{x}-x)$ for some square integrable $g$ and $\Var{\left(L^{'}\right)}$ is finite, the Fourier approximation methodology from Subsection \ref{subsection:t-dependent-kernels} can be used to remove the dependence of the kernel on $t$ and $x$. Similarly, if $L^{'}$ has a stable distribution, it is enough to first simulate the volatility field $\sigma$ and then approximate the parameters of the conditional distribution of $\int_S K_{t,x}(\bar{t},\bar{x}) \sigma(\bar{t},\bar{x}) L\left(\mathrm{d}\bar{t},\mathrm{d}\bar{x}\right)$ for all minimal slices $S$.
 
\section{Conclusion}
After presenting the elementary properties of L\'evy bases in Section \ref{section:The Trawl process framework}, we introduced three algorithms for the simulation of trawl processes in Section \ref{section:simulation}. We developed their theoretical error analysis, discussed their computational complexity and provided easily adaptable computer implementations. Further, we investigated the effects of approximating a L\'evy basis through grid discretization and small jump truncation. While the three simulation schemes were initially presented in the trawl processes framework, we showed in Sections \ref{section:extensions_to_VMKWTP} and \ref{section:extensions_to_ambit_field_simulation} that they are directly applicable to the more general settings of kernel-weighted, volatility modulated trawl processes, simple ambit fields and ambit fields. Moreover, we showed in Subsection \ref{subsection:t-dependent-kernels} that simulation schemes previously studied in the literature could be combined with our methods for decreased computational cost. All of the above enable the implementation of high-accuracy simulation studies and simulation-based inference and bring Ambit Stochastics closer to widespread use when modelling real-world data.

\textbf{Acknowledgements}

We would like to thank Dan Crisan for constructive discussions and comments on earlier versions of the manuscript. Dan Leonte acknowledges support from the EPSRC Centre for Doctoral Training in Mathematics of Random Systems: Analysis, Modelling and Simulation (EP/S023925/1).
\section{Proofs}
\label{section:proofs}
\begin{proof}[Proof of Theorem \ref{grid_discretisation_thm}]
\label{proof_of_grid_discr_thm}
Let $T_n \to -\infty, \Delta_n = (\Delta^n_t,\Delta^n_x) \to 0.$ Let $Q_n$ be the union of cells in the grid $G_{T_n,\Delta_n}$ which are also contained in $A$
\begin{equation*}
    Q_n = \{g \in G_{T_n,\Delta_n}  :  g \subset A  \},
\end{equation*}and let $U_n = A \backslash Q_n.$ Then $L_{T_n,\Delta_n}(A) = L(Q_n).$ 
Note that for each $n,$ there is some $N$ such that for any $k>N, Q_n \subset Q_k;$  thus we can extract a nested subsequence $\{Q_{n_k}\}_{_k}.$ Then $\cup_k Q_{n_k} = A$ and  \begin{equation}\mathrm{Leb}\left(Q_{n_k}\right)\to \mathrm{Leb}\left(A\right) \text{ as } k \to \infty.
\label{eq:nested_convergence}
\end{equation}
By \eqref{eq:nested_convergence} and again by the fact that for each $n$ there is some $N$ such that for any $k>N, Q_n \subset Q_k,$ we obtain that
that $\mathrm{Leb}\left(Q_n\right) \to \mathrm{Leb}(A)$, and consequently $\mathrm{Leb}\left(U_n\right) \to 0$ as $n \to \infty.$ By \eqref{eq:relate_cumul_L_with_cumul_levy_seed}, we obtain $  C(\theta,L(U_n)) = C(\theta,L^{'}) \mathrm{Leb}(U_n) \to 0 \text{ as } n \to \infty.$ Thus $L(U_n) \to 0$ in distribution, hence in probability, and $L(Q_n) \to L(A)$ in probability as $n \to \infty.$

For the second part of this proof, assume that $\Var{L^{'}}$ is finite. By \eqref{eq:mean_L(A)},\eqref{eq:var_L(A)} and since $\mathrm{Leb}(U_n) \to 0$, we have that $\Var\left(L(U_n)\right) = \ev[L^{'}]^2 \mathrm{Leb}(U_n)^2 + \Var(L^{'}) \mathrm{Leb(U_n)} \to 0$ and $L(Q_n) \to L(A)$ in $\mathcal{L}^2$ as $n \to \infty.$ We now provide a bound on the MSE of the approximation. Let $a_n$ be the  number of grid cells that are misplaced for the trawl set $A$, i.e. the number of cells which are not counted as part of $A$ despite having non-empty intersection with $A$. Since $\phi$ is increasing and continuous, by a counting argument we have that $a_n \le T_n/\Delta_t + \phi(0)/\Delta_s,$ hence $\mathrm{Leb}(U_n) = \mathrm{Leb}(U_n \cap \{t < T_n\}) + \mathrm{Leb}(U_n \cap \{T_n < t <0\}) \le \int_{-\infty}^{T_n} \phi(u) \mathrm{d}u + a_n \Delta_t\Delta_x = \int_{-\infty}^{T_n} \phi(u) \mathrm{d}u + T_n \Delta_s + \phi(0) \Delta_t  \eqdef C_n$ and $\ev \left[\left(X^n- L(A)\right)^2 \right]  = \ev \left[L\left(U_n\right)^2\right] \le C_n^2 \ \ev\left[L^{'}\right]^2 + C_n \Var{L^{'}}.$
\end{proof}
The above MSE bound is not sharp, in the sense that $T_n$ and $\Delta^n_s$ must be chosen such that $T_n \Delta^n_s \to 0$ in order for the bound to be meaningful. This is because we upper-bound $\mathrm{Leb}(U_n)$ by the the number of cells in $G_{T_n,\Delta_n}$ which have non-empty intersection with $A,$ i.e. $a_n$, timesed by the area of the cells, which is $\Delta_t\Delta_x$. Whereas taking $T_n$ to be negative and of large absolute value gives a large $a_n,$ these cells have less and less overlap with $A,$ a fact which we do not account for in the MSE bound.
\begin{proof}[Proof of Theorem \ref{theorem:uniform_convergence}] Note that the jumps with absolute value larger than $1$ can be simulated exactly, hence it is enough to deal with the jumps which have values less than $1$. Fix $\epsilon_n \downarrow 0$ with $ \epsilon_1 =  1$ and let $\left(Z^n\right)_{t \ge 0}$ given by
\begin{equation*}
Z^n_t = L_j^{\epsilon_n}(A_t)  = \int_{A_t} \int_{\abs{y} \in (\epsilon_{n+1},\epsilon_n)} y \ \mathrm{d} (N- \nu)(\mathbf{z},y),
\end{equation*}
where $\mathbf{z} = (t,\mathbf{x}).$ Note that $\ev\left[Z^n_t\right] =0$ and $\ev\left[\left(Z^n_t\right)^2\right] = \mathrm{Leb}(A) \int_{\abs{y} \in (\epsilon_{n+1},\epsilon_n)} y^2 l(\mathrm{d}y) \le \mathrm{Leb}(A) \int_{-1}^1 y^2 l(\mathrm{d}y)$, which is finite. Thus $Z^n$ are independent, zero-mean, square integrable random c\`adl\`ag stochastic processes whose marginals $Z^n_t$ satisfy the assumptions of Lemmas $20.2,20.4$ and $20.5$ from \cite{ken1999levy}. The conclusion follows by noticing that the above three results can still be used with our definition of $Z^n$ in the proof of Lemma $20.6$ from \cite{ken1999levy}.

We proved that $X^\epsilon \to X$ pathwise uniformy on compacts for a jump L\'evy basis $L_j.$ We show next that $t \to  L_g(A_t)$ is continuous a.s. Since continuous functions are uniformly continuous on compacts, the convergence holds for any L\'evy basis. 
\begin{lemma}\label{remark:cts_paths}
The trawl process given by $X_t = L_g(A_t),$ where $L_g$ is a Gaussian L\'evy basis, has H\"older continuous paths with exponent $\alpha \in (0,1/2).$
\end{lemma}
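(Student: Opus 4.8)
The plan is to reduce the claim to the Kolmogorov--Chentsov continuity criterion by controlling the moments of the increments of $X_t = L_g(A_t)$, which are Gaussian. First I would fix $s<t$, set $h = t-s$, and write the increment using the additivity in Definition \ref{def:levy_basis} as $X_t - X_s = L_g(A_t \setminus A_s) - L_g(A_s \setminus A_t)$. Because $A_t \setminus A_s$ and $A_s \setminus A_t$ are disjoint, the two summands are independent, so by \eqref{eq:mean_L(A)} and \eqref{eq:var_L(A)} (with $\ev[L'] = \xi$ and $\Var(L') = a$ for the Gaussian seed) the increment is centred Gaussian: its mean is $\xi(\mathrm{Leb}(A_t) - \mathrm{Leb}(A_s)) = 0$ by stationarity, and its variance is $a\,\mathrm{Leb}(A_t \triangle A_s)$.

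The second step is to bound the symmetric-difference measure linearly in $h$. A direct computation, using that $\phi$ is increasing, shows that for $h>0$ the overlap satisfies $\mathrm{Leb}(A_t \cap A_s) = \mathrm{Leb}(A) - \int_{-h}^{0}\phi(u)\,\mathrm{d}u$, whence $\mathrm{Leb}(A_t \triangle A_s) = 2\int_{-h}^{0}\phi(u)\,\mathrm{d}u$. Since $\phi$ is continuous on $(-\infty,0]$, the value $\phi(0)$ is finite, and monotonicity gives $\phi(u) \le \phi(0)$ for $u \in [-h,0]$; hence $\mathrm{Leb}(A_t \triangle A_s) \le 2\phi(0)\,h$ and $\Var(X_t - X_s) \le 2a\phi(0)\,\abs{t-s}$. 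This is exactly the Brownian-type scaling.

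Combining these with the Gaussian moment identity $\ev[\abs{Z}^p] = C_p (\Var Z)^{p/2}$ for a centred Gaussian $Z$, I obtain for every $p>2$ the bound
\begin{equation*}
    \ev\left[\abs{X_t - X_s}^p\right] = C_p\left(\Var(X_t - X_s)\right)^{p/2} \le C_p (2a\phi(0))^{p/2}\,\abs{t-s}^{p/2}.
\end{equation*}
Writing the exponent as $p/2 = 1 + (p/2 - 1)$ with $p/2 - 1 > 0$, the Kolmogorov--Chentsov theorem yields a modification of $X$ whose paths are locally Hölder continuous of any order $\alpha < (p/2-1)/p = 1/2 - 1/p$. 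Letting $p \to \infty$ exhausts every $\alpha \in (0,1/2)$, which is the assertion of the lemma.

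The genuinely routine ingredients are the Gaussian moment formula and the invocation of Kolmogorov--Chentsov; the only step requiring care is the \emph{linear} control of the increment variance. I expect the main (though mild) obstacle to be justifying the symmetric-difference estimate together with the finiteness of $\phi(0)$: it is precisely the collapse of $\int_{-h}^{0}\phi(u)\,\mathrm{d}u$ to an $O(h)$ term — rather than a slower rate — that forces the exponent in $\abs{t-s}^{p/2}$ and thereby pins down the sharp threshold $1/2$ on the attainable Hölder order, mirroring the regularity of Brownian motion.
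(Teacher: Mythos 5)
Your proposal is correct and follows essentially the same route as the paper: decompose the increment into the two independent Gaussian pieces $L_g(A_t\setminus A_s)$ and $L_g(A_s\setminus A_t)$, show the increment variance is $O(\abs{t-s})$ using the boundedness of $\phi$ near $0$, invoke the Gaussian moment identity, and apply Kolmogorov's continuity criterion with $p\to\infty$ to reach every $\alpha<1/2$. The only cosmetic difference is that the paper expresses the linear variance bound via the mean value theorem applied to the autocorrelation function $\rho$, whereas you bound $\int_{-h}^{0}\phi(u)\,\mathrm{d}u\le\phi(0)h$ directly — the same estimate in different clothing.
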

\begin{proof}
Let $\rho$ be the autocorrelation function of $X$. Note that $X_t - X_{t+h} = L(A_t \backslash A_{t+h}) - L(A_{t+h} \backslash A_t)$ and that $L(A_t \backslash A_{t+h}), L(A_{t+h} \backslash A_t)$ are iid with law $\mathcal{N}\left(\mathrm{Leb}(A) \rho(h) \mu,\, \mathrm{Leb}(A) (1-\rho(h)) \sigma^2\right).$ Thus $X_t - X_{t+h} \sim \mathcal{N}\left(0,2 \ \mathrm{Leb}(A) (1-\rho(h)) \sigma^2\right)$ and for any $p \in \mathbb{Z}_{\ge 1},$ we have $\ev\left[\left|X_t-X_{t+h}\right|^p\right] = C  |1-\rho(h)|^{p/2},$ where $C = \left(2 \ \text{Leb}(A)) \sigma^2\right)^{p/2} \frac{\Gamma(\frac{p+1}{2})}{\sqrt{\pi}}.$ Note that the trawl function $\phi\colon (-\infty,0] \to \R_{\ge 0}$ is assumed to be continuous and increasing, hence $\rho$ is $C^1$ and $\abs{\rho^{'}(h)} \le \phi(0)$. By the Mean Value Theorem and by \eqref{eq:autocor_intro}, we have that 
\begin{equation*}
    \ev\left[\left|X_t-X_{t+h}\right|^p\right] = C | \rho'(\xi(h))|^{p/2} h ^{p/2} \le C \phi(0)^{p/2} \mathrm{Leb}(A)^{p/2} h^{p/2} = \tilde{C} h^{p/2},
\end{equation*}
for some $\xi(h) \in (0,h)$ and $\tilde{C} = C \phi(0)^{p/2} \mathrm{Leb}(A)^{p/2}.$ We conclude by Kolmogorov's continuity theorem that $X_t$ has H\"older continuous paths with exponent $\alpha \in (0,1/2).$
\end{proof}
We have thus proved that $X^\epsilon \to X$ pathwise uniformly on compacts regardless of the L\'evy basis. 
\end{proof}
\begin{remark}[Uniform convergence for kernel-weighted trawl processes]
\label{remark:uniform_convergence_for_KW}The result of Theorem \ref{theorem:uniform_convergence} can easily be extended to kernel-weighted trawl processes over finite Lebesgue measure trawl sets $A_t = A + (t,0)$. Assume that $t \to \int_{A_t} K^2(\bar{t},\bar{\mathbf{x}}) L_j(\mathrm{d}\bar{t},\mathrm{d}\bar{\mathbf{x}})$ is bounded on compacts $[0,T]$ and define 
\begin{equation*}
Z^n_t =  \int_{A_t} \int_{\abs{y} \in (\epsilon_{n+1},\epsilon_n)} K(\bar{t},\bar{\mathbf{x}}) y \ \mathrm{d} (N- \nu)(\mathbf{z},y),
\end{equation*}
where $\mathbf{z} = (\bar{t},\bar{\mathbf{x}}) \in S$. Then $Z^n$ are c\`adl\`ag, $\ev\left[Z^n_t\right] =0 $ and  \begin{equation*}
\ev\left[\left(Z^n_t\right)^2\right] = \int_{A_t} \int_{\abs{y} \in (\epsilon_{n+1},\epsilon_n)} K^2(\bar{t},\bar{\mathbf{x}}) y^2 \  l\mathrm{d}(y) \mathrm{Leb}(\mathrm{d}\mathbf{z}) =  \int_{A_t}  K^2(\bar{t},\bar{\mathbf{x}})  \mathrm{Leb}(\mathrm{d}\mathbf{z}) \int_{\abs{y} \in (\epsilon_{n+1},\epsilon_n)} y^2 \  l\mathrm{d}(y),
\end{equation*}
which is finite and converges to $0$ as $n \to \infty$. The same proof as above can be reused to prove the uniform convergence on compacts of $\int_{A_t} K^2(\bar{t},\bar{\mathbf{x}}) L_j^\epsilon(\mathrm{d}\bar{t},\mathrm{d}\bar{\mathbf{x}})$ to $\int_{A_t} K^2(\bar{t},\bar{\mathbf{x}}) L_j(\mathrm{d}\bar{t},\mathrm{d}\bar{\mathbf{x}})$. If $K$ is time dependent, we also require continuity of $t \to K_t(\bar{t},\bar{x})$ for each fixed pair $(\bar{t},\bar{x}),$ so that $t \to \left(Z^n\right)_t$ still has  c\`adl\`ag paths.
\end{remark}
\begin{proof}[Proof of Theorem \ref{theorem:skorohod}] By the Poisson random measure representation of $L_j$ in Theorem \ref{thm:levy_ito_decomp_for_levy_bases}, we have that $t \to L_j(A_t)$ is c\`adl\`ag. Further, by Lemma \ref{remark:cts_paths}, we obtain that $t \to L_g(A_t)$ is continuous, hence  $t \to L(A_t)$ is c\`adl\`ag. The theorem follows then from Lemma \ref{lemma_j1_convergence}.
\end{proof}
\begin{appendices}
\section{Background Material}
\label{appendix:A}
\begin{theorem}[L\'evy-It\^o decomposition of L\'evy bases: Theorem 4.5 in \cite{pedersen2003levy}]
\label{thm:levy_ito_decomp_for_levy_bases}
Let $L$ be a L\'evy basis on $S \subset \R^d$ with characteristic triplet $(\xi,\,a,\,l)$, where $\xi \in \R,\, a \in \R_{\ge 0}$ and $l$ is a L\'evy measure on $\R$. Let $\nu = \mathrm{Leb} \otimes l$ be the product measure on $S \times \R,$ where $\mathrm{Leb}$ is the Lebesgue measure on $S.$ 
Then there exists a decomposition of $L$ into independent L\'evy bases: a Gaussian part $L_g$ with characteristic triplet $(\xi,\,a,\,0)$ and a jump part $L_j$ with characteristic triplet $(0,\,0,\,l)$ such that for any $A \in B_{\mathrm{Leb}}(S),$ the following holds a.s.
\begin{equation*}
    L(A) = L_g(A) + L_j(A).  
\end{equation*}
Moreover, there exists a Poisson random measure $N$ on $\mathcal{B}_{\mathrm{Leb}} (S)\times \R$ with intensity measure $\nu$ such that
\begin{equation}
    L_j\left(A\right) = \int_A \int_{(-1,1)} y \ \mathrm{d} (N- \nu)(\mathbf{z},y) + \int_A \int_{\R \backslash(-1,1)} y \ \mathrm{d} N(\mathbf{z},y),
    \label{eq:poisson representation}
\end{equation}
where $\mathbf{z} = (t,\mathbf{x}) \in S$ and $y \in \R.$ 
\end{theorem}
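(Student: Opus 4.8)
The plan is to reconstruct the classical L\'evy--It\^o argument for L\'evy processes (cf. \cite{ken1999levy}) in the set-indexed setting, where the essential new difficulty is that the ``jumps'' of $L$ cannot be read off from path discontinuities and must instead be produced measure-theoretically. First I would build the Poisson random measure $N$ intrinsically from $L$. Fixing a bounded $A \in \mathcal{B}_{\mathrm{Leb}}(S)$ and a Borel set $B \subset \R$ with $0 \notin \overline{B}$, and taking a refining sequence of finite partitions of $A$ into disjoint cells $c$, I would set $N(A\times B)$ to be the limit of the count $\#\{c : L(c)\in B\}$. By independent-scatteredness (Definition \ref{def:levy_basis}) the cell contributions are independent, and by \eqref{eq:relate_cumul_L_with_cumul_levy_seed} together with Definition \ref{def:simplified_cumulant} one has $\prob(L(c)\in B) = l(B)\,\mathrm{Leb}(c) + o(\mathrm{Leb}(c))$ as $\mathrm{Leb}(c)\to 0$ for continuity sets $B$; the law of rare events then identifies the limiting count as $\mathrm{Poisson}(\mathrm{Leb}(A)\,l(B))$, and running the argument jointly over finitely many disjoint product sets shows that $N$ extends to a Poisson random measure on $\mathcal{B}_{\mathrm{Leb}}(S)\times\R$ with intensity $\nu = \mathrm{Leb}\otimes l$.

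Using $N$ I would define the jump basis $L_j$ through the representation \eqref{eq:poisson representation}: the large-jump integral $\int_A\int_{\R\setminus(-1,1)} y\,\mathrm{d}N$ is an a.s. finite sum since $l(\R\setminus(-1,1))<\infty$, while the compensated small-jump integral is the $\mathcal{L}^2$-limit of $\int_A\int_{\epsilon<|y|<1} y\,\mathrm{d}(N-\nu)$ as $\epsilon\downarrow 0$, which exists because $\int_{(-1,1)} y^2\,l(\mathrm{d}y)<\infty$ by the L\'evy-measure condition. The exponential formula for Poisson integrals then gives $C(\theta,L_j(A)) = \mathrm{Leb}(A)\int_\R\big(e^{i\theta y}-1-i\theta y\mathbf{1}_{[-1,1]}(y)\big)\,l(\mathrm{d}y)$, so $L_j$ is a L\'evy basis of triplet $(0,0,l)$ whose independent-scatteredness and a.s. $\sigma$-additivity are inherited from $N$. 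I would then set $L_g \defeq L - L_j$ as the candidate Gaussian basis.

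The hard part will be to show that $L_g$ and $L_j$ are genuinely independent; granting this, additivity of the cumulant under independence turns $C(\theta,L(A)) = C(\theta,L_g(A)) + C(\theta,L_j(A))$ into $C(\theta,L_g(A)) = \mathrm{Leb}(A)\big(i\theta\xi-\tfrac12\theta^2 a\big)$, so $L_g$ is Gaussian of triplet $(\xi,a,0)$ and the decomposition is complete. To prove the independence I would return to the finite-partition approximation $L(A)=\sum_c L(c)$: the cell increments are independent, the sparse ``large'' ones ($|L(c)|$ above the threshold) are exactly those feeding $N$, and conditionally on their positions the remaining ``small'' increments, whose compensated sum assembles $L_g$, are independent of them, so letting the mesh go to zero yields $L_g \perp N$ and hence $L_g \perp L_j$. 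Equivalently, one verifies that the joint L\'evy--Khintchine exponent of a pair $(L_g(A),L_j(A'))$ has no term coupling the Gaussian coordinate to the jump coordinate, because $L_j$ has no Gaussian part and the two never jump simultaneously; as uncorrelatedness alone does not imply independence for infinitely divisible laws, it is precisely this structural absence of a mixed term that must be established. The genuinely delicate technical points, which I would isolate as preliminary lemmas, are the uniform control needed to pass the Poisson approximation to the limit simultaneously over all relevant product sets and the a.s. construction of $N$ as a bona fide random measure rather than merely a consistent family of finite-dimensional laws.
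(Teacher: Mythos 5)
The paper does not actually prove this statement: it is imported verbatim as Theorem 4.5 of \cite{pedersen2003levy} and used purely as background material, so there is no internal proof to compare yours against. Your sketch follows, in outline, the route of that reference (the classical L\'evy--It\^o argument of \cite{ken1999levy} transported to the set-indexed setting): construct $N$ by counting cells of a refining partition whose increments land in a Borel set bounded away from the origin, identify its law via the small-mass asymptotic $\prob(L(c)\in B)=l(B)\,\mathrm{Leb}(c)+o(\mathrm{Leb}(c))$ together with a Poisson limit theorem, build $L_j$ from $N$ by compensated integration, and establish independence of $L_g=L-L_j$ and $N$ by a characteristic-function factorisation. So the approach is the standard one and, at the level of strategy, correct.

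As written, however, the proposal is a program rather than a proof, and the steps you defer are precisely where the work lies. First, the law of rare events gives convergence of the cell counts only in distribution; to obtain a random measure $N$ defined on the same probability space as $L$ you need convergence in probability of the counts (e.g.\ via a variance computation) plus a consistency and a.s.\ countable-additivity argument --- in the set-indexed setting there is no c\`adl\`ag path from which jumps can be read off, so this cannot be borrowed from the process case, and it is the main point of \cite{pedersen2003levy}. Second, identifying ``cells with $|L(c)|$ above the threshold'' with ``cells feeding $N$'' requires ruling out large increments produced by the Gaussian part or by an accumulation of small jumps; this is exactly what the $o(\mathrm{Leb}(c))$ term encodes and should be isolated as a lemma (it is the small-time jump asymptotic for infinitely divisible laws applied to the law of $L'$, using \eqref{eq:relate_cumul_L_with_cumul_levy_seed}). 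Third, the independence of $L_g$ and $L_j$ is the delicate step, and your ``conditionally on their positions'' phrasing is not yet an argument; the standard resolution is to show directly that the joint characteristic function of the residual $L(A)-L_j^{\epsilon}(A)$ and of finitely many counts $N(A_i\times B_i)$ factorises for each truncation level $\epsilon>0$, and then pass to the limit. None of these is a wrong turn, but each must be carried out before the sketch becomes a proof; since the paper itself treats the statement as quoted background, citing \cite{pedersen2003levy} is the intended resolution.
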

Let $\mathrm{D}_{\mathbb{R}}[0, 1]$ be the space of real-valued c\`adl\`ag functions on $[0, 1].$ 
\begin{definition}[Skorokhod's J1 topology]
Define the family of time changes $\Lambda$ to be the set of all increasing homeomorphisms $\lambda \colon [0,1] \rightarrow [0,1].$ The J1 topology on $\mathrm{D}_{\mathbb{R}}[0,1]$ is induced by the following distance 
\begin{equation*}
    d_{J_1}(x, y)=\inf _{\lambda \in \Lambda}\left(\sup _{t \in[0,1]}|\lambda(t)-t|+\sup _{t \in[0,1]}|x(\lambda(t))-y(t)|\right).
\end{equation*}
\label{def:skorohod_top}
\end{definition}
\begin{lemma}[Uniform c\`adl\`ag regularity]
Given $f \in \mathrm{D}_{\mathbb{R}}[0, 1]$ and $\varepsilon>0,$ there exists a partition $\pi_{\varepsilon}=\left\{t_{0}, \ldots, t_{k(\varepsilon)}\right\}$ of $[0, 1]$ such that
\begin{equation*}
\sup _{r, s \in\left[t_{1}, t_{\mathfrak{t}+1}\right)}\left|f(s)-f(r)\right| \leq \varepsilon \quad \text { for } \quad i=0, \ldots, k(\varepsilon)-1.
\end{equation*}
\label{thm:uniform_cadlag_reg}
\end{lemma}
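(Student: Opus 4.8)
The plan is to build the partition greedily from left to right: right-continuity will guarantee that each step advances by a strictly positive amount, while the existence of left limits will guarantee that the construction stops after finitely many steps. Throughout, for a subinterval $I \subset [0,1]$ write $\omega(I) = \sup_{r,s \in I} \abs{f(r) - f(s)}$ for the oscillation of $f$ on $I$, and note that $\omega$ is monotone with respect to inclusion.

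First I would set $t_0 = 0$ and, having chosen $t_i < 1$, define
\begin{equation*}
    t_{i+1} = \sup\left\{ s \in (t_i, 1] \colon \omega([t_i, s)) \le \varepsilon \right\}.
\end{equation*}
Two elementary facts then need checking. (i) \emph{Strict progress:} since $f$ is right-continuous at $t_i$, there is a $\delta > 0$ with $\abs{f(s) - f(t_i)} \le \varepsilon/2$ for all $s \in [t_i, t_i + \delta)$, so the triangle inequality gives $\omega([t_i, t_i + \delta)) \le \varepsilon$ and hence $t_{i+1} \ge t_i + \delta > t_i$. (ii) \emph{The bound survives passing to the supremum:} for any $r, s \in [t_i, t_{i+1})$ one can choose an admissible $u$ with $\max(r,s) < u < t_{i+1}$ and $\omega([t_i, u)) \le \varepsilon$, whence $\abs{f(r) - f(s)} \le \varepsilon$; taking the supremum over $r,s$ yields $\omega([t_i, t_{i+1})) \le \varepsilon$, which is exactly the per-interval estimate the statement asks for.

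The crux of the argument is to show the procedure terminates, i.e. that $t_i = 1$ for some finite index; this is where the left-limit part of the c\`adl\`ag property is indispensable. Suppose, for contradiction, that the construction never stops, producing a strictly increasing sequence $t_0 < t_1 < \cdots < 1$ with limit $t^{*} \in (0,1]$. Because $f$ admits a left limit $f(t^{*}-)$, there is $\delta > 0$ such that $\abs{f(s) - f(t^{*}-)} \le \varepsilon/2$ for every $s \in [t^{*} - \delta, t^{*})$, and therefore $\omega([t^{*} - \delta, t^{*})) \le \varepsilon$. Choosing $i$ large enough that $t_i \in [t^{*} - \delta, t^{*})$, monotonicity gives $\omega([t_i, t^{*})) \le \varepsilon$, which forces $t_{i+1} \ge t^{*}$. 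This contradicts $t_{i+1} < t^{*}$, so the construction must reach $1$ after finitely many steps, say at index $k(\varepsilon)$.

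Finally I would take $\pi_\varepsilon = \{t_0, \ldots, t_{k(\varepsilon)}\}$ with $t_{k(\varepsilon)} = 1$; the half-open intervals $[t_i, t_{i+1})$, $i = 0, \ldots, k(\varepsilon) - 1$, partition $[0,1)$ and each satisfies $\omega([t_i, t_{i+1})) \le \varepsilon$ by (ii), giving the claim. I expect the termination step to be the only genuine obstacle, and it is worth stressing that right-continuity alone does not suffice — a function oscillating ever faster as it approaches a point from the left (so that no left limit exists) would stall the greedy construction — so the existence of left limits is used essentially.
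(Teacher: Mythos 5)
Your proof is correct; note, however, that the paper does not prove this lemma at all --- it is stated in Appendix A as background material (it is the classical oscillation lemma for c\`adl\`ag functions, e.g.\ Billingsley's \emph{Convergence of Probability Measures}, Section 12), so there is no paper proof to compare against. Your greedy construction is exactly the standard textbook argument: right-continuity gives strict progress, monotonicity of the oscillation lets the bound pass to the supremum defining $t_{i+1}$, and the existence of left limits rules out an accumulation point of the partition; all three steps are carried out correctly, and your closing remark that left limits are essential (not just right-continuity) is an accurate diagnosis of where the c\`adl\`ag hypothesis is genuinely used.
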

\begin{lemma}\label{lemma_j1_convergence}
Let $X,X^n$ be stochastic processes with sample paths in $\mathrm{D}_{\mathbb{R}}[0,1].$  Assume that $X^n$ is constant on the intervals $[0,\frac{1}{2^n}),\ldots,[\frac{2^n-1}{2^n},1)$ and that
\begin{equation}
\max_{i \in D_n} \left|X^n_i - X_i\right| \to 0 \text{ a.s. as } n \to \infty \label{eq:1st condition}
\end{equation}
where $\mathcal{D}_n = \{0,\frac{1}{2^n},\ldots,\frac{2^n-1}{2^n}, 1\}.$ Then $X_n \to X$ a.s. in Skorokhod's J1 topology. 
\end{lemma}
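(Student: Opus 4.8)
The plan is to verify the Skorokhod $J_1$ convergence pathwise on the almost sure event $\Omega_0$ on which $\max_{i \in \mathcal{D}_n}|X^n_i - X_i| \to 0$; fixing $\omega \in \Omega_0$, it suffices to produce, for each $\varepsilon > 0$, a time change $\lambda_n \in \Lambda$ for which $\sup_t|\lambda_n(t) - t|$ and $\sup_t |X(\lambda_n(t)) - X^n(t)|$ are both eventually of order $\varepsilon$, since this bounds $d_{J_1}(X^n, X)$ by Definition \ref{def:skorohod_top}. The first ingredient is the uniform c\`adl\`ag regularity of the limit: by Lemma \ref{thm:uniform_cadlag_reg} applied to the path $t \mapsto X_t(\omega)$, there is a partition $0 = t_0 < t_1 < \cdots < t_k = 1$ such that the oscillation of $X$ on each $[t_i, t_{i+1})$ is at most $\varepsilon$. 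I would then set $\delta = \min_i (t_{i+1} - t_i) > 0$ and take $N$ large enough that $2^{-n} < \delta$ and $\max_{i \in \mathcal{D}_n}|X^n_i - X_i| < \varepsilon$ for all $n \ge N$.

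For such $n$ I would round each partition point up to the dyadic grid, $\hat t_i^{\,n} = \ceil{2^n t_i}/2^n$, so that $t_i \le \hat t_i^{\,n} < t_i + 2^{-n}$, and define $\lambda_n$ to be the increasing, piecewise-linear bijection of $[0,1]$ sending $\hat t_i^{\,n} \mapsto t_i$ for $0 \le i \le k$ and interpolating affinely in between. The mesh condition $2^{-n} < \delta$ guarantees that the $\hat t_i^{\,n}$ are strictly increasing with $\hat t_0^{\,n} = 0$ and $\hat t_k^{\,n} = 1$, so $\lambda_n$ is a genuine element of $\Lambda$. On each block $[\hat t_i^{\,n}, \hat t_{i+1}^{\,n}]$ the map $\lambda_n$ is affine and moves both endpoints by less than $2^{-n}$; since for an affine map the displacement at an interior point is a convex combination of the displacements at the two endpoints, $\sup_t |\lambda_n(t) - t| \le 2^{-n} \to 0$, which disposes of the first term.

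For the second term, fix $t \in [\hat t_i^{\,n}, \hat t_{i+1}^{\,n})$ and let $j/2^n = \floor{2^n t}/2^n$ be the left endpoint of the dyadic interval containing $t$, so that $X^n(t) = X^n_{j/2^n}$ with $j/2^n \in \mathcal{D}_n$. The key bookkeeping point is that both $\lambda_n(t)$ and $j/2^n$ lie in $[t_i, t_{i+1})$: indeed $\lambda_n(t) \in [t_i, t_{i+1})$ by construction, while $\hat t_i^{\,n} \le j/2^n < \hat t_{i+1}^{\,n}$ forces $t_i \le j/2^n < t_{i+1}$, because the largest dyadic point strictly below $\hat t_{i+1}^{\,n}$ equals $(\ceil{2^n t_{i+1}}-1)/2^n < t_{i+1}$. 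Hence the oscillation bound gives $|X(\lambda_n(t)) - X_{j/2^n}| \le \varepsilon$ and the grid convergence gives $|X_{j/2^n} - X^n_{j/2^n}| \le \varepsilon$, so $|X(\lambda_n(t)) - X^n(t)| \le 2\varepsilon$; the endpoint $t = 1$ is handled directly since $\lambda_n(1) = 1$ and $1 \in \mathcal{D}_n$. Combining the two estimates yields $d_{J_1}(X^n, X) \le 2^{-n} + 2\varepsilon$ for $n \ge N$, and letting $n \to \infty$ and then $\varepsilon \downarrow 0$ gives $d_{J_1}(X^n, X) \to 0$ for every $\omega \in \Omega_0$, i.e.\ almost surely. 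I expect the main obstacle to be exactly this last index bookkeeping — checking that, after the time change, the compared points $\lambda_n(t)$ and $j/2^n$ never straddle a partition point — since it is precisely what lets us avoid tracking the (a priori unknown) transition times of $X^n$ and lean only on the oscillation control of $X$ together with the assumed dyadic-grid convergence.
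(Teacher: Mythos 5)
Your proof is correct and follows essentially the same route as the paper's: a pathwise argument using the uniform c\`adl\`ag regularity partition, an explicit piecewise-linear time change matching dyadic approximations of the partition points to the $t_i$, and a triangle inequality through $X_{\floor{2^n t}/2^n}$ after checking that $\lambda_n(t)$ and $\floor{2^n t}/2^n$ lie in the same partition interval. The only (cosmetic) difference is that you interpolate $\hat t_i^{\,n}\mapsto t_i$ globally, whereas the paper takes $\lambda^n$ to be the identity except on small neighbourhoods of each $t_i$; both yield the same $\varepsilon$-bounds.
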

\begin{proof}[Proof of Lemma \ref{lemma_j1_convergence}]
Define \begin{align*}
d_n^{-}(t)\colon & [0,1] \to [0,1],\ t \to \frac{\floor{2^n t}}{2^n},\\
d_n^{+}(t)\colon & [0,1] \to [0,1], \ t \to \min{\left(\frac{\floor{2^n t} +1 }{2^n},1\right)},
\end{align*} where $\floor{\cdot}$ is the floor function. 
Fix paths $X(\omega),\,X^n(\omega)$ and $\varepsilon>0.$ As there is no risk of confusion, we omit the $\omega$ in this proof. By the uniform c\`adl\`ag regularity property from Lemma \ref{thm:uniform_cadlag_reg}, there exists a partition $\pi_{\varepsilon}=\left\{0=t_{0}, \ldots, t_{k(\varepsilon)}=1\right\}$ of $[0, 1]$ such that
\begin{equation}
\sup _{r, s \in\left[t_{1}, t_{\mathfrak{t}+1}\right)}\left|X_s-X_r\right| \leq \varepsilon/3 \quad \text { for } \quad i=0, \ldots, k(\varepsilon)-1.
\label{lemma_inside_proof_uniform_cadlag_Reg}
\end{equation}
To show convergence in Skorokhod's J1 topology, we need to find an appropriate family of increasing homeomorphisms $\lambda^n \colon [0,1] \rightarrow [0,1]$ and a positive integer $N$ such that
\begin{equation}
\sup _{t \in[0,1]}\left|X^n_t -  \left({X \circ \lambda^n}\right)_t\right| + \sup _{t \in[0,1]}\left|\lambda^n_t-t\right| < \varepsilon,
\label{proof:skorohod}
\end{equation}
for $n>N.$ Let $N_1 > 2-\log_2{\left(\min_{0\le i \le k(\epsilon)-1}{(t_{i+1}-t_i)}\right)} .$ Then for $n>N_1,$ any two consecutive jump times $t_i,t_{i+1}$ are separated by at least 
$4/2^n$, hence $d_n^{+}(t_i)$ and $d_n^{-}(t_{i+1})$ are separated by at least $2/2^n$ and the intervals $I_{n,i} \defeq \left[d_n^{-}(t_i)-\frac{1}{2^n},d_n^{+}(t_i)+\frac{1}{2^n}\right]$ are disjoint. Let 
\begin{equation*}
    \mathcal{D}_n^1 = \bigcup_{i=1}^{k_\epsilon-1} I_{n,i},\ \mathcal{D}_n^2 = [0,1] \backslash \mathcal{D}_n^1.
\end{equation*}
On $\mathcal{D}_n^2,$ let $\lambda^n = \mathrm{Id};$ on each of the intervals $\left[d_n^{-}(t_i)-\frac{1}{2^n},d_n^{+}(t_i)\right],$ if $t_i \in \mathcal{D}_n,$ let $\lambda^n = \mathrm{Id},$ and if $t_i \notin \mathcal{D}_n,$ let 
\begin{equation*}\lambda^n(t) = \begin{cases}
d_n^{-}(t_i)-\frac{1}{2^n} + 2^{n-1}\left(t-d_n^{-}(t_i)+\frac{1}{2^n}\right)\left(t_i-d_n^{-}(t_i)+\frac{1}{2^n}\right)  &\text{ if }   d_n^{-}(t_i) - \frac{1}{2^n} \le t \le d_n^{+}(t_i), \\
t_i + 2^n(d_n^{+}(t_i)+\frac{1}{2^n} -t_i)\left(t-d_n^{+}(t_i)\right)  &\text{ if } d_n^{+}(t_i) \le t \le d_n^{+}(t_i)+\frac{1}{2^n},
\end{cases}
\end{equation*}
which is the piece-wise linear interpolation between
$\lambda^n\left(d_n^{-}(t_i)-\frac{1}{2^n}\right)=d_n^{-}(t_i) -\frac{1}{2^n}, \lambda^n(d_n^{+}(t_i)) = t_i$ and $\lambda^n(d_n^{+}(t_i)+\frac{1}{2^n}) = d_n^{+}(t_i)+\frac{1}{2^n}$.
By construction, $\sup_{t \in I_{n,i}} \left|\lambda^n_t-t\right| = \left|\lambda_n\left(d_n^{+}(t_i)\right) - d_n^{+}(t_i)\right| < \frac{1}{2^{n-1}},$ hence
\begin{equation*}
\sup_{t \in [0,1]} |\lambda^n(t)-t| = \max_{1 \le i \le k(\epsilon)-1} \sup_{t \in I_{n,i}} \left|\lambda^n_t-t\right| < \frac{1}{2^{n-1}} < \varepsilon/3,
\end{equation*}
for $n> 1-\log_2{\left(\varepsilon/3\right)}.$ Thus it is enough to show that for large $n,$ the following holds
\begin{equation*}
    \sup _{t \in[0,1]}\left|X^n_t -  {\left(X \circ \lambda^n\right)}_t\right| < \frac{2\varepsilon}{3}.
\end{equation*}
Since $\left|X^n_t - \left(X\circ \lambda^n\right)_t\right| \le \left|X^n_{d_n^{-}\left(t\right)} - X_{d_n^{-}\left(t\right)}\right| + \left|X_{d_n^{-}\left(t\right)} - \left(X\circ \lambda^n\right)_t\right|,$ we have that
\begin{equation*}
\sup_{t \in [0,1]}  |X^n_t - \left(X\circ \lambda^n\right)_t| \le \max_{d \in \mathcal{D}_n}| X^n_{d} - X_{d}| + \sup_{t \in [0,1]}|X_{d_n^{-}\left(t\right)} - \left(X\circ \lambda^n\right)_t|.  
\end{equation*}
By construction, we have that $\lambda^n(t)$ and $d_n^{-}(t)$ are in the same interval $[t_i,t_{i+1}),$ hence by inequality \eqref{lemma_inside_proof_uniform_cadlag_Reg}
\begin{equation*}
    \sup_{t \in [0,1]}|X_{d_n^{-}\left(t\right)} - \left(X\circ \lambda^n\right)_t| < \varepsilon/3,
\end{equation*}
for any $n>N_1.$
By relation \eqref{eq:1st condition}, there is some $N_2$ such that for any $n>N_2$
\begin{equation*}
    \max_{d \in \mathcal{D}_n} |X^n_{d} - X_{d}| < \varepsilon/3.
\end{equation*}
Thus inequality \eqref{proof:skorohod} is satisfied for any $n>\max{\{N_1,\,N_2,\,1-\log_2{\varepsilon/3}\}},$ which finishes the proof. Note that the lemma stays true for any nested sets $\mathcal{D}_1\subset \ldots \subset \mathcal{D}_n \subset \ldots \subset [0,1]$ with $\cup_n \mathcal{D}_n$ dense in $[0,1].$
\end{proof}
\section{Computational methods}
\label{appendix:B}
\subsection{Implementation of the convolution step in the slice partition algorithm}
\label{appendix:B1}
\begin{definition}[Convolution]
\label{def:convolution}
Let $Y$ be an $n \times k$ matrix and $F$ an $n \times n$ matrix, with $k>n.$ Let $Y[:,l:m]$ be the matrix obtained by removing columns $1,\ldots,l-1$ and $m+1,\ldots,k$ from $Y$ and let $\odot$ denote the sum of the component-wise product of two matrices. Define the convolution of $Y$ and $F$ with stride $1$ to be the $k-n$ dimensional vector $Y*F$ with $l^\text{th}$ entry given by $Y[:,l : l+n] \odot F$
\begin{equation*}
    [Y*F]_l = \sum_{i=1}^{n} \sum_{j=1}^{n} Y_{i,j+l-1} F_{ij}, \ 1 \le l \le k-n.
\end{equation*}
\end{definition}
The convolution step in Algorithm \ref{algo:slice_partition_trawl_process_convolution} can be implemented by repeated matrix multiplication, which requires $I^3k$ operations, or by the Fast Fourier transform. Nevertheless, when the values of $I$ and $k$ are close, the overhead associated with converting to and from the Fourier space can offset the speedup obtained by using the Fast Fourier transform. To account for such issues, we propose a new implementation of the convolution step, which takes advantage of the special form of the $I \times I$ filter
\begin{equation*}
  F = \begin{pmatrix}
    0 & 0 & \dots & 0 & 1 \\
    0 & 0 & \dots & 1 &1 \\
    \vdots & \vdots & \udots & \vdots & \vdots \\
    0 & 1 & \dots & 1 &1 \\
    1 & 1 & \dots & 1 & 1
  \end{pmatrix}.
\end{equation*}
With $Y_{*}$ defined below,
\begin{equation*}
Y_{*} = \begin{pmatrix}
        L\left(S_{11}\right)  & \ldots &  L\left(S_{1k}\right) \\
        L\left(S_{21}\right)  &  \ldots &  L\left(S_{2k}\right)\\
         \vdots        &    & \vdots  \\
        L\left(S_{I1}\right) &  \ldots &  L\left(S_{Ik}\right)
\end{pmatrix}.
\end{equation*}
the simulation scheme for trawls with finite decorrelation time is given by Algorithm \ref{algo:slice_partition_trawl_process_implementation} and requires $2 I k  - I^2/2 - I/2$ additions: $k(I-1)$ in Step $7$ and $I(k-I) + I(I-1)/2$ in Steps $9-11.$
\begin{algorithm}[h]
 \caption{Slice partition for bounded, monotonic trawls}\label{algo:slice_partition_trawl_process_implementation}
  \begin{algorithmic}[1]
      \Require Sampler $S(\textrm{area})$ which returns independent samples with the same law as $L(A),$ where $\mathrm{Leb}(A) = \textrm{area};$ number of trawls to be simulated $k$ and distance $\tau$ between them; $I = \ceil{-T/\tau}$.
    \Ensure Vector containing the simulated values of the trawl process at times $\tau,\ldots,k\tau.$
    \Function{main}{$S,\,k,\,\tau,\,I$}
 \State $Y \gets \text{zeros}(I,k)$
       \State Compute the areas $s_{ij}$ from  \eqref{trawl_1_areas},\eqref{trawl_2_areas}
        \For{$i = 1,\ldots,I$}
         \For{$j=1,\ldots,k$}
            \State{$Y[i,j] \gets S(s_{ij})$} \Comment{Sample a realisation of $Y_{*}$}
         \EndFor    
        \EndFor
  \State $Z=\Call{ColumnCumSum}{Y}$ \Comment{\parbox[t]{.43\linewidth}{$Z = \left(z_{ij}\right)_{\substack{1 \le i \le I \\ 1 \le j \le k}}$ with $z_{ij} = \sum_{l=i}^{I} y_{lj}$ is defined as the cumulative sum on the columns of $Y$}}  
  \State $X \gets \text{zeros}(k)$
  \For{$j=1,\ldots,k$}   
    \For{$i=1,\ldots,k$}
           \State $X[j] += Z[i,j+1-i]$ \Comment{$X_{j\tau}$ is given by the sum of entries on the $j^{th}$ diagonal of $Z$}
         \EndFor 
    \EndFor
  \Return $X$
  \EndFunction
  \end{algorithmic}
\end{algorithm} The procedure in algorithm
\ref{algo:slice_partition_trawl_process_implementation} applies directly to trawls with infinite decorrelation time with the only mention that $Y_{*}$ is now given by the upper triangular $k$ by $k$ matrix
\begin{equation*}
\begin{pmatrix}
L\left(S_{11}\right)  & \ldots & L\left(S_{1k}\right) & L\left(S_{1k}\right) \\
        L\left(S_{21}\right)  &  \ldots &  L\left(S_{2,k-1}\right) & 0 \\
         \vdots &    \udots   &  \vdots  & \vdots  \\
        L\left(S_{k1}\right) &  \ldots &  0 & 0
\end{pmatrix}.
\end{equation*}
Similarly to the finite decorrelation time, the areas corresponding to the slices in the above matrix are available as integrals of the trawl function $\phi$ 
\begin{equation}
\label{areas_s_ij_non_compact_ambit_set}
\begin{pmatrix}
              a_1  &  a_1 -a_2  & a_1 -a_2  &    a_1 -a_2 &  \ldots  & a_1 -a_2 & a_1 -a_2 & a_1\\
              a_2  &  a_2 -a_3  & a_2 -a_3  &    a_2 -a_3&  \ldots  & a_2 -a_3 & a_2       & 0   \\
              a_3  &  a_3 -a_4  & a_3 -a_4  &    a_3 -a_4&  \ldots  & a_3       & 0          & 0   \\
                    &              &             &              &  \vdots  &            &            &     \\ 
              a_{k-1}& a_{k}       & 0           &          0   &  \ldots  &  0         &       0    & 0   \\
              a_{k}& 0            & 0           &          0   &          &  0         &       0    & 0   \\
        \end{pmatrix},
\end{equation}
where $a_1 = \int_{-\tau}^0 \phi(u)du,\ a_{k-1} = \int_{(-k+1)\tau} ^{(-k+2)  \tau} \phi(u) du, \ldots, \
a_k = \int_{-\infty}^{(-k+1)\tau} \phi(u)du.$
\subsection{Extension of the smiple ambit field simulation algorithm to unbounded ambit sets}
\label{appendix:B2}
Let $\tilde{T} = \phi^{-1}(x)+\tau,$ i.e.~the time coordinate at which the trawl function of $A_{01}$ intersects the horizontal line with spatial coordinate $x.$ Let $T=\floor{\frac{\tilde{T}}{\tau}}\tau$ be the biggest negative multiple of $\tau$ that is smaller or equal than $T;$ we break the simulation into two steps, on $\{t \le T\}$ and on $\{t > T\}.$  

We first discuss the simulation on  $\{t \le T\}.$ Note that for any $i \neq i^{'},$ we have 
\begin{equation*}
\{(t,y) \colon  i x < y < (i+1) x, t \le T\} \cap A_{i^{'}j} = \emptyset ,
\end{equation*}
and for $1 \le i \le k_s$, define 
\begin{equation}
S_{ij} = 
\begin{cases}
\{(t,y) \colon  i x < y < (i+1) x, t \le T\} \cap A_{ij} \backslash A_{i,j+1} &\text{ if } \frac{T}{\tau} <  j < k_t, \\
\{(t,y) \colon  i x < y < (i+1) x, t \le T\} \cap A_{ij} &\text{ if } j=k_t,
\end{cases}
\label{eq:s_ij_ambit_non_compact}
\end{equation}
Note that the areas of $S_{ij}$ are available in closed form, in terms of integrals of $\phi,$ as in Subsection \ref{subsection:slice_partition_algorithm}. Therefore, the sets $S_{ij}$ can be simulated exactly. We now discuss the simulation on $\{t > T\}.$

We say a set $S$ is a minimal slice with minimal pair $(k,l) = \min\{(i,j)\colon S \subset A_{ij}\}$ if
\begin{equation*}
    S = \bigcap_{(i,j) \in K} A_{ij},
\end{equation*}
for some indicator set $K \subset \{k,\ldots,k+k_s-1\} \times \{l,\ldots,l+k_t-\frac{T}{\tau}-1\},$ and $S \cap A_{i^{'}j^{'}} = \emptyset$ for any $\left(i^{'},j^{'}\right) \not\in \left(\{k,\ldots,k+k_s-1\} \times \{l,\ldots,l+k_t-\frac{T}{\tau}-1\}\right) \backslash K$; the values $l,\ldots,l+k_t-\frac{T}{\tau}-1$ come from taking into account intersections between $k_t -\frac{T}{\tau}$ consecutive ambit sets, at time coordinates $\frac{T}{\tau}+1,\ldots,k_t$ instead of just $I_t$ consecutive ambit sets at time coordinates $1,\ldots,I_t\tau$. We can still apply algorithm \ref{algo:slice_estimation_compact_ambit_set} to identify the minimal slices in the case of unbounded ambit sets, with the only difference that in step $3,$ the indicator matrices are $k_s \times \left(k_t-\frac{T}{\tau}\right)$ instead of $I_s \times I_t.$ Hence to simulate exactly $A_{ij},$ with $1 \le i \le k_s, \ 1 \le j \le k_t,$ we simulate the minimal slices in $S_{kl}$ where $\frac{T}{\tau}+1\le k \le k_t, \ -I_s+2 \le l \le k_s,$ which means that we simulate the ambit sets, or at least subsets of the ambit sets at time coordinates $\frac{T}{\tau}+1,\ldots,2k_t+\frac{T}{\tau}-1$ and space  coordinates $-I_s+2,\ldots,k_s + I_s-1$. The full procedure is given in Algorithm \ref{algo:monotonic_ambit_field_simulation_unbounded}. Similarly to subsection \ref{subsection:slice_partition_algorithm}, the computational complexity increases when we consider unbounded ambit sets. 
\begin{algorithm}[h]
  \caption{Slice partition for unbounded, monotonic ambit sets}\label{algo:monotonic_ambit_field_simulation_unbounded}
  \begin{algorithmic}[1]
  \Function{SlicePartition2}{$U,\,N,\,\phi,\,T,\,I_t,\,I_s,\,\tau,x$}
 \State $I_s \gets   \ceil{\frac{\phi(0)}{x}}$
  \State $H \gets \Call{SliceEstimation}{U,\,N,\,\phi,\,I_t,\,I_s,\,\tau,\,x}$
  \State $Y \gets \text{zeros}(k_s+2I_s-2,\,2k_t-\frac{2T}{\tau} -1)$ 
  \Comment{Corresponding to the matrix  $L\left(A_{ij}\right)_{\substack{-I_s+2 \le i \le k_s+I_s-1 \\ \frac{T}{\tau}+1 \le i \le 2k_t -\frac{T}{\tau}-1 }}$}
\State  \Comment{Simulate the correction slices \eqref{eq:s_ij_ambit_non_compact}}
  \For{$k \in \{1,\ldots,k_s + I_s-1\}$}
\For{$l \in \{1,\ldots,2k_t-\frac{2T}{\tau}-1\}$}
\For {$I \in \text{keys}(H)$}
\State $c \gets T(\text{area})$\Comment{Simulate $L(S)$ for each $S \in \mathcal{S}_{kl}$}  
 \State $Y[k:k+I_s-1,\,l:l+k_t-\frac{T}{\tau}-1] \pluseq cI$ 
  \EndFor
 \EndFor
  \EndFor
  \Return $Y[I_s:I_s + k_s-1,-\frac{T}{\tau}+1:-\frac{T}{\tau}+k_t]$  \Comment{Corresponding to the matrix $L\left(A_{ij}\right)_{\substack{1 \le i \le k_s \\  1 \le j \le k_t}}$}
    \EndFunction
  \end{algorithmic}
\end{algorithm}
\end{appendices}
\bibliographystyle{agsm}
\bibliography{bibs/bibliography_file}
\end{document}